\newcommand{\1}{\mathbbm{1}}
\pgfplotsset{compat=1.18}
\tikzset{anchorbase/.style={baseline={([yshift=-0.5ex]current bounding box.center)}}}
\DeclareMathOperator*{\argmax}{arg\,max}
\DeclareMathOperator*{\argmin}{arg\,min}
\definecolor{aliceblue}{rgb}{0.94, 0.97, 1.0}
\definecolor{darkred}{rgb}{0.65,0.15,0.25}
\definecolor{asparagus}{rgb}{0.53, 0.66, 0.42}
\definecolor{grey}{HTML}{BBC3C8}
\definecolor{darkgrey}{HTML}{706F6F}
\definecolor{bleu_ql}{HTML}{026C80}
\definecolor{complementary}{HTML}{8C6881}
\definecolor{orange_punch}{HTML}{FFA500}
\definecolor{bleu_punch}{HTML}{0048cc}
\definecolor{pbm}{HTML}{E54648}
\definecolor{magenta}{HTML}{FF0080}
\definecolor{autre}{HTML}{3EBBBB}
\definecolor{chimio}{HTML}{4CA64C}
\definecolor{jump}{HTML}{825d67}
\definecolor{chrom1}{HTML}{0E172A} 
\definecolor{chrom2}{HTML}{EB811B} 
\definecolor{chrom3}{HTML}{3792A4} 
\definecolor{chrom4}{HTML}{2E518A} 
\definecolor{cutblack}{HTML}{333333}
\newtheorem{defin}{Definition}
\newtheorem{lem}{Lemma}
\newtheorem{theorem}{Theorem}
\def\E{\mathbb{E}}
\def\cd{\check{d}}
\title{Bridging Impulse Control of Piecewise Deterministic Markov Processes and Markov Decision Processes: Review and Open Challenges\footnote{This work was supported by the French Agence Nationale de la Recherche (ANR), grant ANR-21-CE40-005 (project HSMM-INCA) and the European Union’s Horizon 2020 research and innovation program (Marie Sklodowska-Curie grant agreement No 890462)}}
\author{Alice Cleynen$^{\ddag}$\footnote{John Curtin School of Medical Research, Australian National University}\and Benoîte de Saporta\footnote{IMAG, Univ. Montpellier, CNRS, Montpellier, France} \and 
Orlane Rossini$^{\ddag}$ \and
Régis Sabbadin\footnote{INRAE-MIAT, Univ Toulouse, Toulouse, France}\and
 Amélie Vernay$^{\ddag}$}
\date{}
\begin{document}

\maketitle

\begin{abstract} 
Control theory plays a pivotal role in understanding and optimizing the behavior of complex dynamical systems across various scientific and engineering disciplines. Two key frameworks that have emerged for modeling and solving control problems in stochastic systems are Piecewise Deterministic Markov Processes (PDMPs) and Markov Decision Processes (MDPs). Each framework has its unique strengths, and their intersection offers promising opportunities for tackling a broad class of problems, particularly in the context of impulse control and sequential decision-making for complex systems. The relationship between PDMPs and MDPs is a natural subject of exploration, as embedding impulse control problems for PDMPs into the MDP framework could open new avenues for their analysis and resolution. Specifically, this integration would allow leveraging the computational and theoretical tools developed for MDPs to address the challenges inherent in PDMPs. On the other hand, PDMPs can offer a versatile and simple paradigm to model continuous time problems that are often described as discrete-time MDPs parametrized by complex transition kernels. This transformation has the potential to bridge the gap between the two frameworks, enabling solutions to previously intractable problems and expanding the scope of both fields. This paper presents a comprehensive review of two research domains, illustrated through a running medical example. The example is revisited and progressively formalized within the framework of the various concepts and objects introduced.
\end{abstract}

\tableofcontents

\newpage

\section{Introduction}
%
Control theory plays a pivotal role in understanding and optimizing the behavior of complex dynamical systems across various scientific and engineering disciplines. Two key frameworks that have emerged for modeling and solving control problems in stochastic systems are Piecewise Deterministic Markov Processes (PDMPs) and Markov Decision Processes (MDPs). Each framework has its unique strengths, and their intersection offers promising opportunities for tackling a broad class of problems, particularly in the context of impulse control and sequential decision-making for complex systems. 

PDMPs, introduced in the 80s by Davis\cite{davis_piecewise-deterministic_1984}, are a versatile class of continuous-time and hybrid (discrete and continuous) state space stochastic processes that combine deterministic dynamics with discrete stochastic events, allowing for a realistic representation of systems with hybrid behavior. Impulse control problems for PDMPs are of particular interest in applications ranging from resource allocation and maintenance planning to biological systems and energy management. These problems involve determining the optimal timing and magnitude of interventions, often modeled as impulses, to minimize a cost or maximize a reward. However, the complexity of PDMPs, particularly their reliance on continuous state spaces and non-trivial event-driven dynamics, poses significant challenges for both theoretical analysis and numerical resolution of impulse control problems. 

On the other hand, MDPs, introduced in the 50s by Bellman \cite{bellman_dynamic_1958}, provide a well-established framework for sequential decision-making under uncertainty, originally leveraging the discrete state and action spaces to facilitate analytical and computational approaches. They have been extended to very general state and action spaces. Extensions of the classical MDP framework, such as semi-Markov Decision Processes, Partially Observed MDPs or Bayesian MDPs offer increased flexibility in modeling various real-world problems. 
MDPs have been instrumental in advancing the study of stochastic control, particularly through their compatibility with dynamic programming and approximate inference techniques. 

The relationship between PDMPs and MDPs is a natural subject of exploration, as embedding impulse control problems for PDMPs into the MDP framework could open new avenues for their analysis and resolution. Specifically, this integration would allow leveraging the computational and theoretical tools developed for MDPs to address the challenges inherent in PDMPs. On the other hand, PDMPs can offer a versatile and simple paradigm to model continuous time problems that are often described as discrete-time MDPs parametrized by complex transition kernels. This transformation has the potential to bridge the gap between the two frameworks, enabling solutions to previously intractable problems and expanding the scope of both fields. 

This paper presents a comprehensive review of two research domains, illustrated through a running medical example. The example is revisited and progressively formalized within the framework of the various concepts and objects introduced.
The paper is organized as follows.
\begin{itemize}
    \item Section~\ref{sec:PDMP} provides an introduction to PDMPs and their impulse control problems, highlighting the mathematical structure and the challenges posed by their hybrid nature.
    \item Section~\ref{sec:mdp} delves into the framework of MDPs and their extensions, introducing control policies and formulation of MDPs under partial observation or unknown models. 
    \item Section~\ref{sec:algo} presents the state-of-the-art in solving decision-making problems under uncertainty by introducing tools and algorithms relevant to control problems formulated as MDPs.
    \item Section~\ref{sec:cpdmp} demonstrates how impulse control problems for PDMPs can be embedded into the MDP framework. We also identify a series of open questions that could be addressed through this integration, such as scalability, approximate solutions, and new modeling paradigms.
\end{itemize}

By bringing these two frameworks together, we aim to contribute to the growing field of research at the interface of stochastic control, decision theory, statistics, computational sciences and reinforcement learning,  offering insights into how embedding PDMPs into the MDP framework can advance the field and tackle outstanding challenges in control and optimization. Throughout this paper, we use a medical case study as a guiding example to illustrate the theoretical concepts and methodologies being introduced. This \emph{red thread} example will evolve progressively in complexity as new notions are presented. Starting with a simple scenario involving the free dynamics of a cancer marker, we will incrementally incorporate more realistic elements such as treatment or surgery impact on the marker value, and uncertain observations. For the simple scenarios, a Jupyter notebook is provided with this survey to illustrate simulations and resolutions. By building on the same context throughout the paper, we aim to provide readers with a consistent and practical framework to understand and apply the concepts discussed, ultimately showcasing the power of the proposed approaches in addressing real-world challenges.
%
\section{Piecewise Deterministic Markov Processes} \label{sec:PDMP}
%
Piecewise Deterministic Markov Processes (PDMPs) are a general class of non-diffusion processes introduced by M. Davis in the 80's (see \cite{davis_piecewise-deterministic_1984}) covering a wide range of applications, from workshop optimization, queuing theory, internet networks, reliability, insurance and finance or biology for instance. These continuous-time processes  extend the class of point processes \cite{davis_markov_1993, jacobsen_point_2006,cocozza-thivent_markov_2021} and are characterized by deterministic motions punctuated by random jumps. They are especially suitable to model complex systems with interacting random phenomena, or large population asymptotics.

This presentation of PDMPs is divided into two main topics. In Section \ref{subsec:genericPDMP} we give the definition and examples of application of PDMPs, and provide their key features such as simulation properties. In Section \ref{ssec:impulse}, we provide a survey on the impulse control problem for this class of processes.
%
\subsection{PDMP, definition and examples}
\label{subsec:genericPDMP}
%
We start with the definition and generic properties of PDMPs.
In Section \ref{subsubsec:defPDMP}, we provide a concise formal definition of PDMPs. We introduce our running example, which will serve as a reference to illustrate the models discussed throughout the paper, in Section \ref{subsubsec:medicPDMP}.  Extensions to semi-Markov dynamics are explored in Section \ref{subsubsec:PDsMP}, with further application to the running example detailed in Section \ref{subsubsec:medicPDsMP}. 
We then delve into key questions and tools related to PDMPs, beginning with simulation techniques in Section \ref{subsubsec:simulation}. The identification and analysis of specific sub-chains of interest are addressed in Section \ref{sssec:chains}. Representative examples of PDMPs from their main application domains are presented in Section \ref{subsubsec:examples}, followed by an overview of prominent research questions and challenges in Section \ref{sssec:applications}.
%
\subsubsection{Generic definition} 
\label{subsubsec:defPDMP}
%
We start with a formal definition of a PDMP, its state space and its local characteristics.
\begin{defin}[PDMP]
    \label{def:pdmp}
    A Piecewise Deterministic Markov Process (PDMP) $X=(X_t)_{t\geq 0}$ is defined by a tuple $\langle E, \Phi, \lambda, Q\rangle$, where
    \begin{itemize}
        \item The \emph{state space} $E$ is a finite union of Borel subsets of $\mathbb{R}^d$, for some $d\geq 1$ and $\mathcal{B}(E)$ is its Borel $\sigma$-field. It is the set of all possible states of the process. Denote $\bar E$ the closure of $E$ and $\mathcal{B}(\bar{E})$ the Borel $\sigma$-field on $\bar E$. 
        
        \item The \emph{flow} $\Phi$ is a continuous function from $E \times \mathbb{R}_+$ onto $\bar E$ satisfying a semi-group property, \textit{i.e.} $\Phi(\cdot, t+s)=\Phi(\Phi(\cdot, t), s)$ for all $s,t \in \mathbb{R}_+$. The flow prescribes the deterministic motion between jumps. It may be defined by an explicit function or simply described as the solution to a system of ordinary differential equations (ODEs).
        
        \item The jump \emph{intensity}, or hazard rate, or risk function $\lambda$ is a measurable function from $\bar E$ onto $\mathbb{R}_+$ that determines the occurrence of random jumps.
        
        \item The jump \emph{kernel} $Q$ is a Markov kernel on $(\mathcal{B}({E}),\bar{E})$ that selects the new location of the process after each jump.
    \end{itemize}
    The PDMP dynamics can be described informally as follows: starting from some initial point $x \in   E$, the motion of the process follows the deterministic ﬂow $t\mapsto\Phi(x,t)$ until a first jump time $T_1$, which may be deterministically triggered by the process reaching the boundary $\partial E$ of $E$, or may occur randomly with hazard rate $t\mapsto\lambda(\Phi(x,t))$. At $T_1$ the process jumps to a new point $x'={X}_{T_1}$ selected with probability ${Q}(dx'|\Phi(x,T_1))$ (conditional on $T_1$), and the motion restarts from this new point as before.
    The ﬂow $\Phi$, the jump rate $\lambda$, and the Markov kernel ${Q}$ are called the \emph{local characteristics} of the PDMP.
\end{defin}
%
\begin{figure}[tp]
  \begin{center}
\begin{tikzpicture}[use Hobby shortcut, anchorbase, decoration={markings, mark=at position 0.5 with {\arrow{>}}; }]
        \draw[closed, scale=1](-2,-1) .. (1,-2) .. (2,2) .. (0,1);
        \draw[closed, scale=1] (4,-2) .. (5,-1) .. (6,1.5) .. (6.5, 1) .. (7,0.5); 
        \node at (-1.5,-3.5) {$E$};
        \filldraw [gray, align=center] (-1.5,-0.5) circle (2pt) node (x0) [below=0.1]{$x_0$\\$t=0$};
        \draw[postaction={decorate}, align=center] (-1.5,-0.5) to [out=0,in=-180] (1,1) to [out=0, in =-180] (1.5,1.5);
        \node[align=center] at (2.0, 0.9) {$\Phi(x_0,T_1)$\\$t=T_1^-$} ;
        \draw (1.6,1.4) to [out=-180, in=-90] (1.5,1.5) to [out=90, in=-180] (1.6,1.6);
        \filldraw [gray] (6,1) circle (2pt) ;
        \node[gray, align=center] at (5.6, 0.8) {$X_{T_1}$} ;
        \filldraw [gray] (5.5,-1.8) circle (2pt);
        \node[gray, align=center] at (5.5, -1.4) {$X_{T_2}$} ;
        \draw (7.1,0.4) to [out=-180, in=-90] (7,0.5) to [out=90, in=-180] (7.1,0.6);
        \draw[dashed] (1.5,1.5) to [out=0,in=-180] (3,2) to [out=0, in =-180] (6,1) ;
        \draw[postaction={decorate}, align=center] (6,1) to [out=-75,in=-180] (6.5,-0.5) to [out=75, in =-180] (7,0.5) node [right=0.1] {$\Phi(X_{T_1}, T_2-T_1)$\\$t=T_2^-$};
        \draw[dashed] (7,0.5) to [out=-70,in=90] (7.2,-0.9) to [out=-90, in =20] (5.5,-1.8) ;
        \draw[postaction={decorate}] (5.5,-1.8) to [out=-145,in=90] (5,-3.3) ;
        \node at  (3.7,2.2) {\small{$Q(\cdot |\Phi(x_0,T_1)) $}};
        \node at  (8.7,-1.4) {\small{$Q(\cdot | \Phi(X_{T_1}, T_2-T_1)) $}};
      \draw[]  (-2.5, -4.3) rectangle (10.5, 2.7);
\end{tikzpicture}
  \end{center}
      \vspace{-2cm}
      \caption{{\bf Trajectory a generic PDMP.} Starting from an initial value $x_0$ at time $t=0$, the process follows a deterministic trajectory until a jump occurs, either at a random time (as at $T_1$) or because the process reaches the state boundary (as at $T_2$). At jump times, the process jumps to a new location drawn from kernel $Q$. } \label{fig:PDMP}
\end{figure}
%
A generic representation of a PDMP is given in Figure~\ref{fig:PDMP}. It is particularly convenient, when dealing with PDMPs, to separate the state space $E$ and states $x$ into discrete and Euclidean components. In this case the state space is called \emph{hybrid}. Let $\mathsf{M}$ be a finite  set, and for all $\mathsf{m}\in \mathsf{M}$, let $\mathsf{E}_{\mathsf{m}}$ be some Borel subset of $\mathbb{R}^{d_\mathsf{m}}$, which dimension $d_\mathsf{m}$ is allowed to change with $\mathsf{m}$. The state space is thus
\begin{align*}
E = \bigcup_{\mathsf{m} \in \mathsf{M}}\{\mathsf{m}\}\times \mathsf{E_m}.
\end{align*}

Any state $x\in E$ can be written as $x=(\mathsf{m,x})$ with $\mathsf{m}\in \mathsf{M}$ and $\mathsf{x\in E_m}$. The discrete component $\mathsf{m}$ is called the \emph{mode} or \emph{regime} and $\mathsf{x}$ is called the \emph{Euclidean variable}. 
In between jumps, the mode $\mathsf{m}$ remains constant and only the Euclidean variable $\mathsf{x}$ evolves through the flow. At jump times, both the mode and the Euclidean variable may be allowed to change through the Markov kernel $Q$.

In the hybrid setting, we set 
\begin{align*}
\Phi(x,t)=(\mathsf{m},\Phi_\mathsf{m}(\mathsf{x},t)),
\end{align*}
for all $x=(\mathsf{m,x})\in E$ and $t\in\mathbb{R}_+$. 
The jump intensity can be written as $\lambda(x)=\lambda_\mathsf{m}(\mathsf{x})$ for all $x=(\mathsf{m,x})\in E$ and $t\in\mathbb{R}_+$ and is such that for any $x$ in $E$, there exists $\epsilon >0$ such that
\begin{align*}
 \int_0^{\epsilon} \lambda(\Phi({x},t))dt < +\infty,
\end{align*}
forbidding instantaneous jumps. 
Let $\partial E$ be the boundary of $E$: $\partial E= \cup_{\mathsf{m \in M}}\{\mathsf{m}\}\times \partial \mathsf{E_m}$, and $ t^*(x)$ be the deterministic time the flow takes to reach the boundary $\partial E$ of $E$ when it starts from $x$
\begin{align*}
t^*(x)=t^*_{\mathsf{m}}(\mathsf{x})=\inf\{t>0: \Phi({x},t)\in \partial E \}.
\end{align*}
Jumps may occur via two means: \textit{random jumps} occur from the realization of the random clock with intensity $\lambda$ (\textit{e.g.} time $T_1$ on Figure \ref{fig:PDMP}), while \textit{deterministic jumps} occur when the process reaches the boundary of the state space (\textit{e.g.} time $T_2$ on Figure \ref{fig:PDMP}). Thus, starting from $x$ at time $0$, the first jump time $T_1$ has the following distribution
\begin{align*}
\mathbb{P}_x(T_1>t)=\mathbb{P}(T_1>t| {X_0}=x)=
\exp \left(-\int_0^t \lambda(\Phi(x,s))ds\right)\1_{t<t^*(x)}, \quad t\geq 0.
\end{align*}
For convenience, we also define 
\begin{align*}
 \Lambda(x,t)=\Lambda_{\mathsf{m}}(\mathsf{x},t)=\int_0^{t} \lambda(\Phi({x},s))ds,
\end{align*}
the cumulative risk up to $t$ starting from $x$. 
Finally, the Markov kernel $Q$ on $(\mathcal{B}({E}),\bar{E})$ 
selects the new location after each jump. We also set 
\begin{align*}
Q_\mathsf{m}(\cdot|\mathsf{x})=Q(\cdot|x),
\end{align*}
for $x=(\mathsf{m,x})$ in accordance with other notations. 
It satisfies $Q(\{x\}|x)=0$ so that a jump has to change the state of the process (either the mode, or the Euclidean variable or both).

After the first jump time $T_1$, the process restarts from a new state selected by $Q$ and follows the flow until the next jump time $T_2$. The sojourn time $T_2-T_1$ until the next jump is drawn from the jump intensity and time to reach the boundary as before. At the jump time $T_2$, the post-jump location is selected by the Markov kernel $Q$ and so on iteratively, as described in Algorithm~\ref{algo:PDMPiter}. Note that Algorithm~\ref{algo:PDMPiter} can only simulate the process up to time $T_\infty=\lim_{n\to \infty} T_n$ that might be finite. This phenomenon is called \emph{explosion}. To avoid explosion, one usually also requires that for all $x\in E$, one has
\begin{align*}
\mathbb{E}_x[\lim_{n\to \infty} T_n]=+\infty.
\end{align*}
\begin{algorithm}[tp]
\caption{Iterative construction of a trajectory of a  PDMP starting from state $x_0=(\mathsf{m_0,x_0})$ at time $0$ and up to the $N$-th jump time}\label{algo:PDMPiter}
\begin{algorithmic}[1]
    \STATE $x\leftarrow x_0$
    \STATE $T\leftarrow 0$
    \FOR{$n$ $\gets 0$ to $N-1$}
        \STATE $S\sim \lambda({x})$ \label{Alg1-l4}
        \STATE $S \leftarrow \min\{S, {t}^*(x)\}$
        \STATE $X_{t} \gets \Phi(x,t-T)$ for $T\leq t <T+S$ \label{Alg1-l6}
        \STATE $x'\sim Q(\cdot|\Phi(x,S))$ \label{Alg1-l7}
        \STATE $x \gets x'$
        \STATE $T \gets T+S$
    \ENDFOR
\end{algorithmic}
\end{algorithm}

A process $X$ defined by a flow $\Phi$, an intensity function $\lambda$ and a kernel $Q$ verifying the above properties is well defined and is strong Markov, as is shown, for instance, in \cite{davis_markov_1993} or \cite{jacobsen_point_2006}.
%
\subsubsection{Running medical example as a PDMP}
\label{subsubsec:medicPDMP}
%
Let us now give a simple example of PDMP. This example is a guiding thread. It will be modified, adapted and enriched to illustrate the main families of models exposed throughout this paper.

We consider a patient undergoing medical follow-up after developing cancer. A biomarker level is measured during medical visits as a proxy for the progression of the disease. The evolution of the biomarker is modeled through a PDMP  $X_t=(m_t,\zeta_t)$, where $m$ will denote the condition of the patient and $\zeta$ the marker level.  At the start of monitoring, the patient is administered a treatment, the effect of which is to reduce the biomarker level exponentially fast. If the level reaches a certain fixed threshold $\zeta_0$, the patient is considered to be in remission and the marker remains constant. In the (random) event of a relapse, the biomarker level rises again exponentially fast. A sample trajectory is illustrated in Figure~\ref{fig:pdmp}. 
\begin{figure}[tp]
\centering
\includegraphics[width=.75\textwidth]{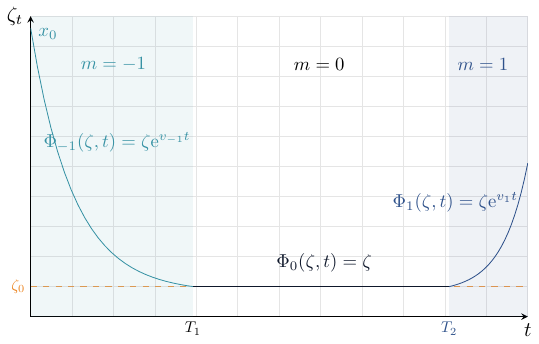}
\caption{\label{fig:pdmp} {\bf Trajectory of the running medical example as a PDMP.} The PDMP starts in mode $m=-1$ (patient under treatment) from an initial point $\mathtt{x}_0$ and follows a deterministic trajectory along its  exponential flow $\Phi_{-1}$ until the first jump occurs at time $T_1$ when reaching the boundary $\mathtt{x}=\zeta_0$ (remission). The mode switches to $m=0$ and the flow is constant equal to $\zeta_0$ until a new jump occurs at time $T_2$ with an exponential clock. The mode switches then to $m=1$ (relapse of the disease) and the trajectory rises exponentially along the flow $\Phi_1$.}    
\end{figure}

Thus, there are three possible modes for the patient corresponding to their condition. They can either be sick under treatment ($m=-1$), sick without treatment ($m=1$) or in remission ($m=0$). For simplicity, we assume that the process characteristics in remission mode are the same with or without treatment and do not differentiate the two cases. Set $M=\left \lbrace -1, 0, 1 \right \rbrace$ and ${E}_{-1} = (\zeta_0, +\infty)$, $E_0 = \left \lbrace \zeta_0 \right \rbrace$ and $E_{1} = \left[\zeta_0, +\infty\right)$. The state space of this PDMP is 
\begin{equation*}
    E = \bigcup_{m\in M}\{m\}\times E_{m}.
\end{equation*}
It is a hybrid state space with $\mathsf{M}=M$, $\mathsf{m}=m$, $\mathsf{x}=\zeta$ and $\mathsf{E_m}=E_m$ for all $m$.
\begin{itemize}
    \item Under treatment $m=-1$, the biomarker level $\zeta$ decreases exponentially with a slope $v_{-1}<0$. During relapse $m=1$, it increases exponentially with a slope $v_1>0$. For all $\zeta \in \mathbb{R}$ and for all $t\in\mathbb{R}_{+}$ we have
\begin{equation*} 
\left\{
    \begin{aligned}
        \Phi_{-1}(\zeta, t) &= \zeta\mathrm{e}^{v_{-1}t}, \\
        \Phi_{0}(\zeta,t)  &= \zeta = \zeta_0, \\
        \Phi_{1}(\zeta, t)  &= \zeta\mathrm{e}^{v_{1}t}.
    \end{aligned}
\right.
\end{equation*}

    \item In mode $m=-1$, a jump occurs when the subject reaches the fixed remission threshold $\zeta_0$. This is a deterministic jump at the boundary and $\lambda_{-1}(\zeta) = 0$. The jump time $t_{-1}^*(\zeta)$ is the solution of $\Phi_{-1}(\zeta,t)=\zeta_0$. That is, $t_{-1}^*(\zeta) = \frac{1}{v_{-1}}\mathrm{log}(\frac{\zeta_0}{\zeta})$. In mode $m=0$, jumps may only occur randomly, 
    $t_0^*(\zeta) = +\infty$ and the jump intensity $\lambda_0$ is a positive constant. Once the process reaches mode $m=1$, no more jump can occur and $\lambda_{1}(\zeta) = 0$, $t_1^*(\zeta) = +\infty$.

    \item The mode-specific Markov kernels are
\begin{equation*} 
\left\{
    \begin{aligned}
        Q(m^\prime, \zeta^\prime \mid (-1,\zeta)) &= \1_{\zeta=\zeta_0=\zeta^\prime}  \times \1_{m^\prime = 0}, \\
        Q(m^\prime, \zeta^\prime \mid (0,\zeta))  &= \1_{\zeta=\zeta_0=\zeta^\prime} \times \1_{m^\prime=1}.
    \end{aligned}
\right.
\end{equation*}
\end{itemize}
%
\subsubsection{Semi-Markov dynamics} 
\label{subsubsec:PDsMP}
%
In the example above, when the patient is in mode $0$, the flow is constant. Hence, the jump intensity also has to be constant, \textit{i.e.} it corresponds to a memory-less exponential distribution. In this case, as in many other practical examples, see \textit{e.g.} \cite{barbu_semi-markov_2008}, the exponential distribution may not be realistic to model the underlying phenomenon. However allowing for more general distributions breaks the Markov property and falls into the class of \emph{semi-Markov} processes. 

It is especially easy to encompass semi-Markov dynamics into the PDMP framework by enlarging the state space. This new process is sometimes called the \emph{time-augmented process}, see \cite[Section 31.5]{davis_markov_1993} for a slightly different definition in the same spirit.
%
\begin{defin}[Time-augmented PDMP]
    \label{def:smdp}
    A time-augmented Piecewise Deterministic Markov Process is a PDMP defined by the tuple $\langle E, \Phi, \lambda, Q\rangle$, where
    \begin{itemize}
        \item The state space $E$ has the specific form 
\begin{align*}
E = \bigcup_{\mathsf{m} \in \mathsf{M}}\{\mathsf{m}\}\times \mathsf{E_m}\times \mathbb{R}_+,
\end{align*}
for some finite mode set $\mathsf{M}$, and where for all $\mathsf{m}\in \mathsf{M}$, $\mathsf{E}_{\mathsf{m}}$ is some Borel subset of $\mathbb{R}^{d_\mathsf{m}}$.
        
        \item The flow $\Phi$ has the specific form
\begin{align*}
\Phi(x,t)=(\mathsf{m},\Phi_\mathsf{m}(\mathsf{x}), u+t),
\end{align*}
for $x=(\mathsf{m},\mathsf{x},u)\in E$, where $\Phi_\mathsf{m}$ are flows from $E_\mathsf{m}\times \mathbb{R}_+$ onto $E_\mathsf{m}$ satisfying the semi-group property.
        \item The jump intensity just satisfies the conditions from definition \ref{def:pdmp} and is allowed to depend on all coordinates of $x\in E$.
        \item The jump kernel $Q$ has the specific form
\begin{align*}
Q(\{m'\}\times A \times B |x) = \1_{0\in B}Q_\mathsf{m}(A|\mathsf{x},u),
\end{align*}
for $x=(\mathsf{m},\mathsf{x},u)\in E$, $A$ any Borel subset of $\cup\{\mathsf m\}\times{\mathsf E_m}$, $B$ any Borel subset of $\mathbb{R}_+$, and where $Q_\mathsf{m}$ is a Markov kernel on $(\mathcal{B}(\cup\{\mathsf m\}\times{\mathsf E_m}),\overline{\mathsf{E}}_\mathsf{m})$.
    \end{itemize}
The additional component $u$ can be interpreted as the time elapsed since the last jump. It moves along the flow at speed $1$ and is reset to $0$ at every jump.
\end{defin}

One could define more sophisticated time-augmented processes where either $u$ is never reset to $0$ (it is thus the time since the beginning, as in \cite[section 31.5]{davis_markov_1993}),  resets less frequently than at each jump, or is multidimensional, see Section~\ref{subsubsec:examples} for such examples.
%
\subsubsection{Running medical example as a time-augmented PDMP}
\label{subsubsec:medicPDsMP}
%
We consider the same medical example as in Section~\ref{subsubsec:medicPDMP}, but in which we want to allow the risk of relapse to depend on how long the patient has been in the remission phase. 
To this effect, we introduce the extra variable $u$ in mode $m=0$ representing the time elapsed since the entry into remission (\textit{i.e.} since the last jump to mode $m=0$). In mode $0$, the state space thus becomes
\begin{align*}
E_0=\{\zeta_0\}\times \mathbb{R}_+.
\end{align*}

In modes $m=-1$ and $m=1$, the state spaces, flows and jump intensities remain unchanged and the mode-specific Markov kernel becomes $Q(m^\prime,\zeta^\prime,u^\prime  \mid (-1,\zeta)) = \1_{\zeta=\zeta_0=\zeta^\prime} \times \1_{u^\prime = 0}  \times \1_{m^\prime = 0}$. 

In mode $m=0$, the flow is defined by $\Phi_{0}(\zeta,u,t)  = (\zeta_0,u+t)$, the intensity may now depend on $u$, for instance through a Weibull distribution \textit{i.e.} $\lambda_0(\zeta,u)=\beta u^\alpha$, and the Markov kernel becomes $Q(m^\prime, \zeta^\prime \mid (0,\zeta,0)) = \1_{\zeta=\zeta_0=\zeta^\prime} \times \1_{m^\prime=1}$.
%
\subsubsection{Simulation}
\label{subsubsec:simulation}
%
A very interesting property of PDMPs is that they are particularly easy to simulate due to their intrinsic iterative construction as seen in Algorithm~\ref{algo:PDMPiter}. 

Computing the trajectory along the flow (Alg. \ref{algo:PDMPiter}, line \ref{Alg1-l6}) is straightforward if the flow is explicit. If it is only known as the solution to a system of ODEs, one just needs to use any suitable ODE solver. 

Computing the post-jump location from the kernel $Q$ (Alg. \ref{algo:PDMPiter}, line \ref{Alg1-l7}) is also usually straightforward. In most applications, the kernel only changes the mode and leaves the Euclidean variable unchanged, or changes the Euclidean variable through some known mapping depending only on its value just before the jump, see examples in Section \ref{subsubsec:examples}. 

The only challenging step is then to simulate the jump dates from the non constant intensity (Alg.~\ref{algo:PDMPiter}, line~\ref{Alg1-l4}). We present here an exact simulation method called the \emph{Stochastic Simulation Algorithm} (SSA), or Gillespie algorithm \cite{gillespie_exact_1977} that is especially suitable for PDMPs. The interested reader may consult \cite{cocozza-thivent_processus_1997,de_saporta_numerical_2016} for an alternative algorithm to simulate only sojourn-times and post-jump locations instead of full trajectories, and \cite{riedler_almost_2013,veltz_new_2015} and references therein for a broader discussion of alternative exact and approximate simulation methods.

The SSA is based on the Markov property and the following property of the minimum of independent generalized exponential distributions.
%
\begin{lem}\label{lem:min-expo-SSA}
Let $S_1$ and $S_2$ be two independent random variables with respective (non-constant) intensity functions $\lambda_1$ and $\lambda_2$. Set $S=\min\{S_1,S_2\}$ and $I=\argmin\{S_1,S_2\}$. Then the intensity of $S$ is $\lambda_1+\lambda_2$ and conditionally to $(S=t)$, the distribution of $I$ is $(\frac{\lambda_1(t)}{\lambda_1(t)+\lambda_1(t)},\frac{\lambda_2(t)}{\lambda_1(t)+\lambda_1(t)})$.
\end{lem}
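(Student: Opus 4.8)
The plan is to work directly with survival functions and their associated hazard rates. Recall that a random variable $S_i$ with (non-constant) intensity $\lambda_i$ has survival function $\mathbb{P}(S_i > t) = \exp(-\int_0^t \lambda_i(s)\,ds)$; this is exactly the distributional identity underlying the first-jump law in Definition~\ref{def:pdmp}. First I would record this characterization and observe that, since the intensities forbid instantaneous jumps, the laws of $S_1$ and $S_2$ are absolutely continuous with densities $f_i(t) = \lambda_i(t)\exp(-\int_0^t \lambda_i(s)\,ds)$. Everything then reduces to elementary manipulations of these two objects.

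For the first claim I would use independence to compute the survival function of $S = \min\{S_1,S_2\}$ as
\[
\mathbb{P}(S > t) = \mathbb{P}(S_1 > t)\,\mathbb{P}(S_2 > t) = \exp\Big(-\int_0^t \big(\lambda_1(s)+\lambda_2(s)\big)\,ds\Big),
\]
from which, by the same hazard-rate characterization, the intensity of $S$ is read off as $\lambda_1 + \lambda_2$.

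For the second claim I would identify the joint law of $(I,S)$. Up to the set of ties $\{S_1 = S_2\}$, the event $\{I = 1,\ S \in [t,t+dt)\}$ coincides with $\{S_1 \in [t,t+dt),\ S_2 > t\}$, whose probability is $f_1(t)\,\mathbb{P}(S_2 > t)\,dt = \lambda_1(t)\,e^{-\int_0^t(\lambda_1+\lambda_2)}\,dt$, and symmetrically for $I = 2$. Dividing by the density of $S$, namely $(\lambda_1(t)+\lambda_2(t))\,e^{-\int_0^t(\lambda_1+\lambda_2)}$ obtained from the first part, yields the conditional law $\mathbb{P}(I = 1 \mid S = t) = \lambda_1(t)/(\lambda_1(t)+\lambda_2(t))$, and likewise for $I = 2$ — which also corrects the evident typo in the denominator of the statement.

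The computation itself is routine; the only points deserving care are measure-theoretic. I would verify that $\{S_1 = S_2\}$ is negligible so that $I$ is well defined almost surely, and I would make the infinitesimal step rigorous rather than heuristic, for instance by differentiating $t \mapsto \mathbb{P}(I = 1,\ S \le t)$ or by testing against an arbitrary bounded measurable function of $S$. I expect this bookkeeping to be the main, though mild, obstacle, the rest being a direct consequence of independence and the definition of the hazard rate.
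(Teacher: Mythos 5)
Your proposal is correct and follows essentially the same route as the paper: both reduce to the identity $\{I=1,\,S>t\}=\{S_1>t,\,S_2>S_1\}$, use independence and the densities $\lambda_i(t)e^{-\int_0^t\lambda_i}$, and factor out $\lambda_1/(\lambda_1+\lambda_2)$ from the resulting integrand; your separate derivation of the intensity of $S$ from the product of survival functions is a slightly more direct way to obtain what the paper reads off from the marginalized integral. You are also right that the denominator $\lambda_1(t)+\lambda_1(t)$ in the statement is a typo for $\lambda_1(t)+\lambda_2(t)$.
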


\begin{proof}
First, notice that if $S_i$ has intensity $\lambda_i$, then its density function is $t\mapsto\lambda_i(t)e^{-\int_0^t \lambda_i(s)ds}$. Using the independence between $S_1$ and $S_2$, one obtains the joint distribution of $(S,I)$ as follows
\begin{align*}
    \mathbb{P}(S>t, I=1) 
    & = \mathbb{P}(S_1>t, S_2>S_1)\\
    & = \int_{t}^{\infty}\lambda_1(x)e^{-\int_0^x \lambda_1(s)ds}\int_{x}^{\infty}\lambda_2(y)e^{-\int_0^y \lambda_2(s)ds}dydx\\
    & = \int_{t}^{\infty}\lambda_1(x)e^{-\int_0^x \lambda_1(s)ds}e^{-\int_0^x \lambda_2(s)ds}dx\\   
    & = \int_{t}^{\infty}\frac{\lambda_1(x)}{\lambda_1(x)+\lambda_2(x)}(\lambda_1(x)+\lambda_2(x))e^{-\int_0^x (\lambda_1(s)+\lambda_2(s))ds}dx.
\end{align*}
Hence the density of $S$ is $(\lambda_1(x)+\lambda_2(x))e^{-\int_0^x (\lambda_1(s)+\lambda_2(u))ds}$, and 
\begin{align*}
    \mathbb{P}(I=1|S=t) = \frac{\lambda_1(x)}{\lambda_1(x)+\lambda_2(x)},
\end{align*}
and the result follows.
\end{proof}
%
The principle underlying the SSA algorithm is as follows. The aim is to sample from a random variable $S_1$ with (non-constant) intensity $\lambda_1$. Suppose that $\lambda_1$ is bounded from above by some constant $\overline{\lambda}$ and let $S$ be random variable with (regular) exponential distribution with parameter $\overline{\lambda}$, so that it is very easy to sample from $S$. In view of Lemma \ref{lem:min-expo-SSA}, $S$ can be interpreted as the minimum of $S_1$ and some virtual extra random variable $S_2$ with intensity $\lambda_2=\overline{\lambda}-\lambda_1$. Let $t$ be a sample value from $S$, then $t$ can be considered as a value sampled from $S_1$ with probability $\frac{\lambda_1}{\overline{\lambda}}$. This simple fact leads to a rejection algorithm where one repeatedly samples from $S$ until the result is accepted as a sample from $S_1$. In a Markov context, it is even more efficient as by construction $S_1\geq S$. Hence all samples are used, some are accepted as jump times (boundary jump: Alg.~\ref{algo:PDMP-SSA} line \ref{alg2:boundary} or random jump: Alg.~\ref{algo:PDMP-SSA} line \ref{alg2:random}), the others just move the process forward without creating a jump (Alg.~\ref{algo:PDMP-SSA} line \ref{alg2:no}). The procedure then restarts from the new location, as detailed in  Algorithm~\ref{algo:PDMP-SSA}.
%
\begin{algorithm}[tp]
\caption{Stochastic Simulation Algorithm for a PDMP starting from state $x_0=(\mathsf{m_0,x_0})$ at time $0$ and up to time horizon $H$}
\label{algo:PDMP-SSA}
\begin{algorithmic}[1]
    \STATE $x\leftarrow x_0$
    \STATE $t\leftarrow 0$
    \WHILE{$t<H$}
        \STATE $S\sim \text{Exp}(\overline{\lambda})$ 
        \STATE $X_{t+s} \gets \Phi(x,s)$ for $0\leq s < \min\{S,t^*(x)\}$
        \IF{$S>t^*(x)$} \label{alg2:boundary}
            \STATE $x'\sim Q(\cdot|\Phi(x,t^*(x)))$
        \ELSE
            \STATE $U\sim U[0,1]$
            \IF{$U\leq \frac{\lambda(S)}{\overline{\lambda}}$} \label{alg2:random}
                \STATE $x'\sim Q(\cdot|\Phi(x,S))$
            \ELSE \label{alg2:no}              
                \STATE $x' \gets \Phi(x,S)$               
            \ENDIF
        \ENDIF
        \STATE $x \gets x'$
        \STATE $t \gets t+\min\{S,t^*(x)\}$
    \ENDWHILE
\end{algorithmic}
\end{algorithm}

The choice of the upper bound $\overline\lambda$ can have a strong impact on the speed of the algorithm. If it is too high, it will keep proposing very short times and move forward extremely slowly. If at all possible, one should choose a tight upper bound for the intensity.
%
\subsubsection{Embedded chains}
\label{sssec:chains}
%
As for any continuous-time Markov process, one can define many different discrete-time Markov chains embedded in a PDMP. Some of them are particularly interesting either for theory or applications and are specified below. \\

\textbf{Canonical chain.}
The canonical chain is the most natural Markov chain embedded in a PDMP. It corresponds to the post-jump-locations and inter-jump-times chains. Let us define $Z_0=X_{0}$ and $S_0=0$, and then iteratively, for all $n\in \mathbb{N}^*$, $Z_n=X_{T_n}$ the value of the PDMP at the $n^{th}$ jump time and $S_n=T_n-T_{n-1}$ the inter-jump time between the $n-1^{th}$ and $n^{th}$ jumps. Then $(Z_n,S_n)_{n\in\mathbb{N}}$ is a Markov chain called the \emph{canonical embedded chain} associated to the PDMP process $X$. Its transition kernel $P^{Z,S}$ is defined, for any Borel subsets $A$ of $E$ and $B$ of $\mathbb{R}_+$, by
\begin{align*}
    \lefteqn{P^{Z,S}(A\times B|x,t)=P^{Z,S}(A\times B|x)}\\
    &=\int_0^{t^*(x)}\lambda(\Phi(x,s))e^{-\Lambda(x,s)}Q(A|\Phi(x,s))\1_{B}(s)ds +e^{-\Lambda(x,t^*(x))}Q(A|\Phi(x,t^*(x)))\1_{B}(t^*(x)),
\end{align*}
where the first part of the equation corresponds to random jumps, while the second part corresponds to deterministic jumps.

As the motion of the PDMP is deterministic between jumps, it is straightforward to reconstruct the full trajectory of the continuous-time PDMP from the knowledge of $(Z_n,S_n)_{n\in\mathbb{N}}$ as
\begin{align*}
X_t= \Phi\left(Z_{n},t-\sum_{i=0}^{n} S_i\right)  \quad \text{if } t \in \left[T_{n}=\sum_{i=0}^{n} S_i,T_{n+1}=\sum_{i=0}^{n+1} S_i \right),
\end{align*}
with the convention $T_0=S_0=0$. Thus, the canonical chain contains all the information of the process and is especially useful to obtain recursive formulations based on the Markov property. It is also at the heart of several numerical approximations for PDMPs, see \textit{e.g.} \cite{de_saporta_numerical_2016}.

For some applications, especially in inference and MCMC frameworks (see Section~\ref{sssec:applications}), this canonical chain can be extended with the pre-jump locations $\Phi(Z_{n-1},S_n)$. In some numerical procedures, it can also be extended with the jump times $T_n$. All these chains are especially interesting as they capture all the randomness of the process and have explicit transition kernels with a relatively simple form depending on the local characteristics of the PDMP. \\

\textbf{Skeleton chains.}
Other common embedded chains are the \emph{skeleton} chains defined by sampling the PDMP on a deterministic (but not necessarily homogeneous) grid $(t_1,\dots, t_n)$. We simply denote $X_n=X_{t_n}$, which once again is a (time inhomogeneous) Markov chain with kernels $P_{t_{n+1}-t_n}$.

While kernels $P_{t_{n+1}-t_n}$ might still have explicit forms, unless constraints are added on the PDMP there is no a priori limit in the number of jumps occurring between two time-points $t_n$ and $t_{n+1}$, which prevents a generic formulation of the kernels. Moreover, unless jump-points are observed, it is not possible to reconstruct the original PDMP process from the observation of $X_n$. However, in many application settings, $X_n$ might be the only available information from the process, see for instance the version of the running medical example developed in Section \ref{ssec:medic-pdmp-pomdp}. \\

\textbf{Augmented canonical chain.} 
Interestingly, in \cite{costa_stability_2008}, to improve comparisons of the canonical chain to the original process when jumps are scarce, the authors combine the canonical chain with a skeleton chain where the grid is randomly chosen. In this setting, the observations of the skeleton chain are constrained to occur in between jump times on a time-frame defined by an independent Poisson process of rate 1. 
Among other properties, this allows to translate necessary and sufficient conditions on the PDMP regarding reducibility, recurrence, existence and convergence to invariant measures to the resulting discrete-time Markov chain for which assumptions are much easier to verify. 
%
\subsubsection{PDMPs as versatile models}
\label{subsubsec:examples}
%
PDMPs are constructed using a small set of intuitive and easily interpretable local characteristics, making them a highly versatile and powerful modeling tool. They extend the framework of discrete-time and discrete-state space Markov or semi-Markov chains, enabling the representation of a broad spectrum of complex phenomena across various domains.\\

\textbf{Sub-classes of PDMPs}
PDMPs encompass and generalize many classes of Markov processes, including 
\begin{itemize}
    \item discrete-time and space Markov chains (corresponding to a time-augmented PDMP with no Euclidean variables, null intensity, jumps triggered at boundaries at each time unit);
    \item continuous-time and discrete state space Markov processes (corresponding to a PDMP with no Euclidean variables and constant intensity for each mode);
    \item semi-Markov discrete state space processes (corresponding to a time-augmented PDMP with no Euclidean variables and time-dependent intensity).\\
\end{itemize} 

\textbf{Examples of phenomena modeled by PDMPs}
Examples from the PDMP literature cover a wide range of application domains, from workshop optimization and queuing theory (\textit{e.g.} \cite{davis_markov_1993}), internet networks (\textit{e.g.} \cite{bardet_total_2013}), reliability (\textit{e.g.}  \cite{de_saporta_numerical_2016}), insurance and finance (\textit{e.g.} \cite{bauerle_markov_2011}), neurosciences (\textit{e.g.} \cite{riedler_limit_2012,riedler_spatio-temporal_2015}), population dynamics (\textit{e.g.}  \cite{doumic_statistical_2015,costa_piecewise_2016}), genomics (\textit{e.g.}  \cite{herbach_inferring_2017}), or medical treatment (\textit{e.g.} \cite{cleynen_change-point_2018, pasin_controlling_2018}) to name just a few. 

Here we provide a list of selected examples that cover several characteristics of PDMPs: presence or absence of mode and Euclidean variables, boundaries, type of flows, etc. Figures~\ref{fig:telegraph} to \ref{fig:crack} illustrate some sampled trajectories from the corresponding PDMP models and Table \ref{tab:PDMP-examples} summarizes their main features.  While far from being exhaustive, the list of examples has been chosen to emphasize the versatility of the PDMP framework:
%
\begin{table}[tp]
    \centering  
    \caption{Some examples of PDMP and their main features}
    \label{tab:PDMP-examples}
    \begin{tabular}{|c|c|c|c|c|c|}
        \hline
                &mode    & Eucl. var. &            & &        \\
        \cline{2-3}
        Figure    &cardinal& dimension  & boundaries & flow    &time\\
                &of $\mathsf{M}$  & of $\mathsf{E_m}$   & & & augmented \\
        \hline
        Fig.~\ref{fig:telegraph}&2& - & no & constant&no \\
        Fig.~\ref{fig:SIR1}  &$(N+1)^2$& - & no & constant&no \\
        \hline
        Fig.~\ref{fig:TCP}& - & 1 & no & linear &no\\
        Fig.~\ref{fig:cellsizer}& - & 1& no & exponential &no \\
        \hline
        Fig.~\ref{fig:zigzag} & $2$ & $1$ & no & linear &no \\
        Fig.~\ref{fig:optro}& $5$ & $1$ & no & linear &yes \\
        Fig.~\ref{fig:geneexpr} & $2$ & $2$ & no & analytical expr. &no \\
        Fig.~\ref{fig:corrosion}&$6$&$3$&no&analytical expr.&yes\\
        Fig.~\ref{fig:morrislecar}& $N+1$ & $1$ & no & sol. of an EDO &no\\
        \hline
        Fig.~\ref{fig:crack} & $3$ & $1$ & yes & exponential & no \\
        \hline
    \end{tabular}
\end{table}
%
\begin{itemize}
\item PDMPs with no Euclidean variable: Fig. \ref{fig:telegraph} and \ref{fig:SIR1},
\item PDMPs with no mode variable: Fig. \ref{fig:TCP} and  \ref{fig:cellsizer},
\item PDMPs with mode and Euclidean variables but no boundary: Fig. \ref{fig:zigzag} to \ref{fig:morrislecar},
\item PDMPs with mode and Euclidean variables and boundaries: Fig. \ref{fig:crack}.
\end{itemize}
%
\begin{figure}[tp]
    \centering
    \includegraphics[width=0.6\linewidth]{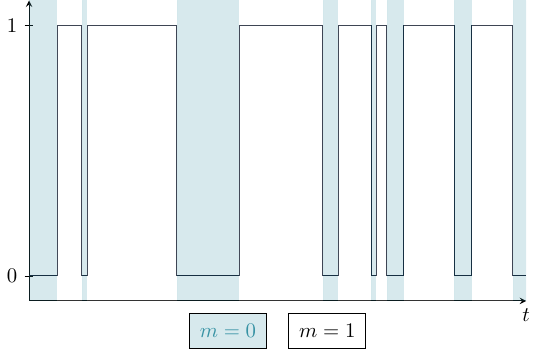}
    \caption{{\bf Sample trajectory of the telegraph model.} The PDMP has 2 modes ($0$ and $1$), no Euclidean variable, and no boundary. It switches from one mode to the other with a constant intensity depending on the mode.}
    \label{fig:telegraph}
\end{figure}
%
\begin{figure}[tp]
    \centering
    \includegraphics[width=0.90\linewidth]{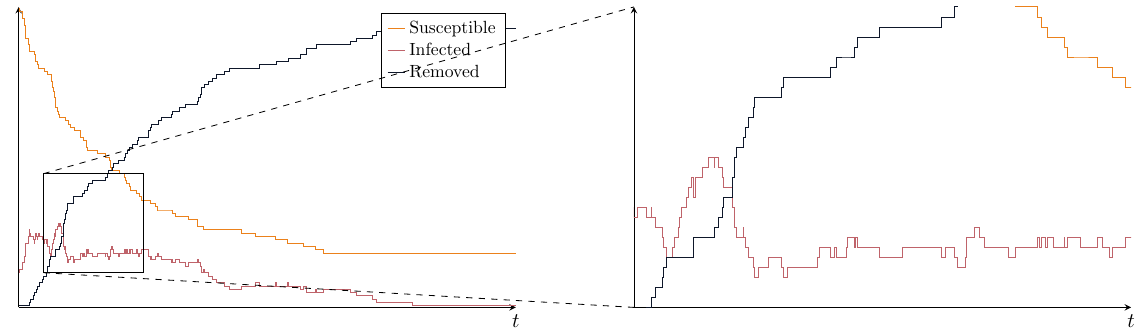}
    \caption{{\bf Sample trajectory of the stochastic SIR model.} The PDMP has $(N+1)^2$ modes of the form $\mathsf{m}=(s,i)$ where $N$ is the total population size, $s$ the number of susceptible individuals, $i$ the number of infected individuals, and $r=N-s-i$ the number of removed individuals. The jump intensities are constant and depend on the mode. The only possible transitions correspond to the infection of a susceptible individual $\mathsf{m}'=(s',i')=(s-1,i+1)$ ($r$ is unchanged) or the removal of an infected individual $\mathsf{m}'=(s',i')=(s,i-1)$ ($r'=r+1$). There is no Euclidean variable and no boundary. The right-hand-side figure is a detailed section of the main graph on the left emphasizing piecewise constant trajectories.  }
    \label{fig:SIR1}
\end{figure}
%
\begin{figure}[tp]
    \centering
    \includegraphics[width=0.6\linewidth]{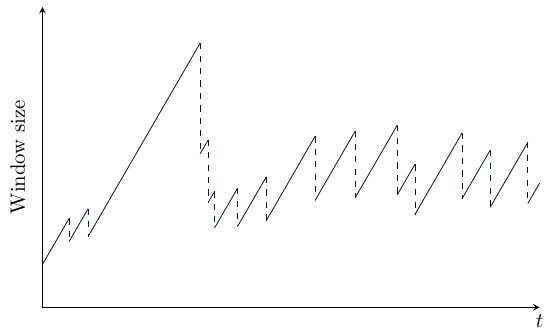}
    \caption{ {\bf Sample trajectory of the transmission control protocol (TCP) model.} The PDMP has no mode, a 1-dimensional Euclidean variable with linear flow corresponding to the available window size of a communications network, and no boundary. The jump intensity increases with the window size, and jumps correspond to random congestion resulting in a punctual decrease of the window size. The model is detailed in \cite{chafai_long_2010}.}
    \label{fig:TCP}
\end{figure}
%
\begin{figure}[tp]
    \centering
    \includegraphics[width=0.60\linewidth]{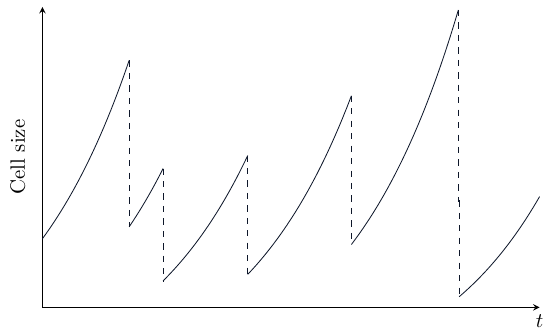}
    \caption{{\bf Sample trajectory of the individual cell growth and fragmentation model.} The PDMP has no mode, a 1-dimensional Euclidean variable with exponential flow corresponding to the size of a cell, and no boundary. The jump intensity increases with the cell size, and jumps correspond to divisions of the cell. The model is detailed in \cite{doumic_statistical_2015}}
    \label{fig:cellsizer}
\end{figure}
%
\begin{figure}[tp]
    \centering
    \includegraphics[width=0.60\linewidth]{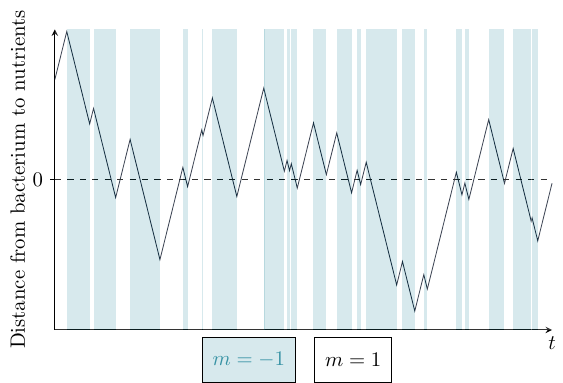}
    \caption{{\bf Sample trajectory of the Zig-zag model.} The PDMP has 2 modes, a 1-dimensional Euclidean variable with linear flow and no boundary. The Euclidean variable corresponds to the distance between a bacterium and its nutrient, and the mode is its direction. The bacterium direction changes randomly with an intensity that depends whether the bacterium moves towards the nutrient (lower jump rate) or away from it (higher jump rate). The model is detailed in \cite{fetique_long-time_2017}.}
    \label{fig:zigzag} 
\end{figure}
%
\begin{figure}[tp]
    \centering
    \includegraphics[width=0.60\linewidth]{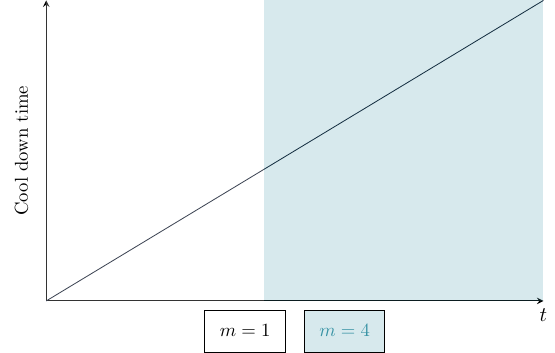}
    \caption{{\bf Sample trajectory of the optronic equipment model.} The PDMP has 5 modes, a 1-dimensional Euclidean variable with linear flow corresponding to a time-augmentation and no boundary. The modes correspond to different possible states of the equipment, with a nominal state, a degraded state and 3 possible different failure states. The Euclidean variable is the age of the equipment, and jump rates are age dependent. The model is detailed in \cite{baysse_maintenance_2013}}
    \label{fig:optro}
\end{figure}
%
\begin{figure}[tp]
    \centering
    \includegraphics[width=0.80\linewidth]{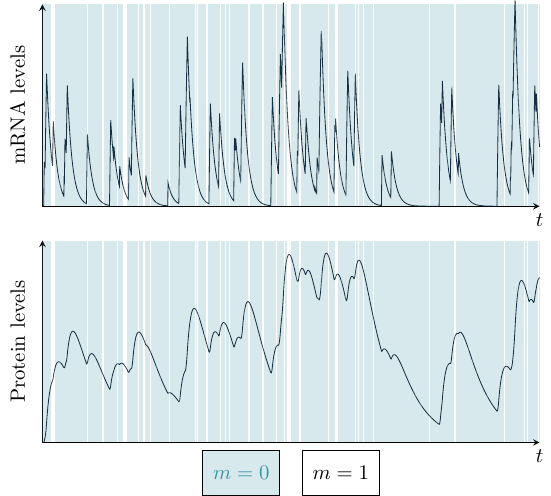}
    \caption{{\bf Sample trajectory of the gene expression model.} The PDMP has 2 modes, a 2-dimensional Euclidean variable with analytical flows, and no boundary. The mode is the state of the promoter (0:off or 1:on), the first Euclidean variable is the quantity of transcripted messenger RNA (it increases when the promoter is on, and decreases otherwise), and the second Euclidean variable is the quantity of translated protein (that depends on the quantity of mRNA). The promoter switches mode with constant mode-dependent intensities. The model is detailed in \cite{herbach_inferring_2017}.}
    \label{fig:geneexpr} 
\end{figure}
%
\begin{figure}[tp]
    \centering
    \includegraphics[width=0.60\linewidth]{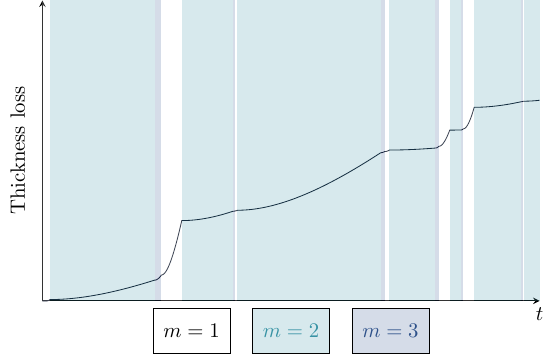}
    \caption{ {\bf Sample trajectory of the corrosion model.} The PDMP has 6 modes, a 3-dimensional Euclidean variable with analytical flows,  no boundary but is time-augmented in some modes. Modes $\mathsf{m}=(a,m)$ correspond to the presence ($a=1$)/absence ($a=0$) of an anti-corrosion protection and three possible environments $m\in\{1,2,3\}$. The succession of environments follows a deterministic cycle with random durations. The Euclidean variables are the thickness loss due to corrosion, the aggressiveness of the environment and the time spend with active anti-corrosion protection if $a=1$. The model is detailed in \cite{de_saporta_optimal_2012}}
    \label{fig:corrosion}
\end{figure}
%
\begin{figure}[tp]
    \centering
     \includegraphics[width=0.60\linewidth]{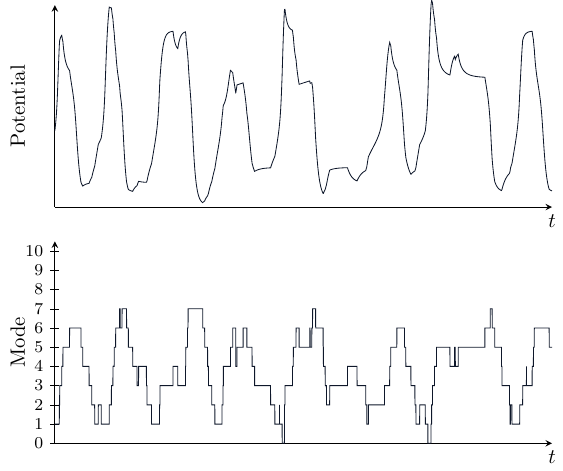} 
    \caption{\label{fig:morrislecar} {\bf Sample trajectory of the Morris Lecar model.} The PDMP has $N+1$ modes, a 1-dimensional Euclidean variable solution of an ODE,  and no boundary. The mode corresponds to the number of open ionic channels along a neuron (for a total of $N$ channels), and the Euclidean variable is the electric potential of the neuron. The model is detailed in \cite{lemaire_thinning_2020}.}
\end{figure}
%
\begin{figure}[tp]
    \centering
    \includegraphics[width=0.60\linewidth]{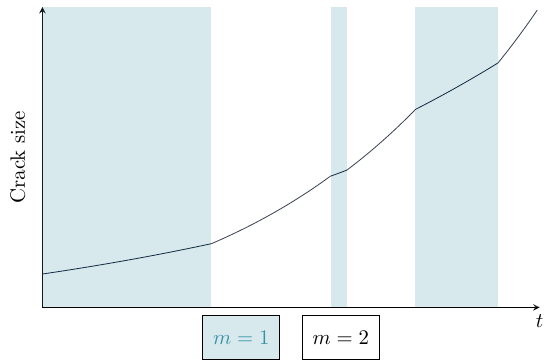}
    \caption{\label{fig:crack} {\bf Sample trajectory of the crack propagation model.} The PDMP has 3 modes and a 1-dimensional Euclidean variable exponential flow with a boundary. The modes correspond to material characteristics influencing the propagation of cracks, and the Euclidean variable is the crack size. The process stops when the crack size reaches a failure threshold. The model is detailed in \cite{chiquet_piecewise_2009}.}
\end{figure}
%
Some authors have also extended the definition of PDMPs to infinite-dimensional states, such as measure spaces, to study population dynamics with a varying number of individuals, see \textit{e.g.} \cite{genadot_multiscale_2014,cloez_optimal_2020}. Those types of PDMPs will not be discussed further.
%
\subsubsection{Main research domains involving PDMPs}
\label{sssec:applications}
%
PDMPs are being studied by various communities in different frameworks. Among the main aspects present in the literature, we very briefly present here some references to existing surveys, and will especially focus on control of PDMPs in the sequel of this work, see Sections \ref{ssec:impulse} and \ref{sec:cpdmp}. \\

\textbf{Long-time behavior.} 
One of the first standard questions when studying stochastic processes is to understand their long time behavior, and establish conditions to ensure their convergence to some invariant or ergodic distribution. In \cite{costa_stability_2008}, the authors establish necessary and sufficient conditions on existence of and convergence to ergodic distributions for PDMP based on properties of an augmented canonical discrete-time chain embedded in the PDMP, allowing to use the powerful theory of convergence to equilibrium for discrete-time Markov chains.

Subsequent works have focused on obtaining an explicit form for the ergodic distribution, and investigating the speed of convergence to equilibrium, see \textit{e.g.} \cite{fontbona_quantitative_2012,benaim_quantitative_2012,cloez_probabilistic_2017}. \\

\textbf{Large population asymptotics.} 
Another branch of PDMPs is their use in modeling population dynamics, especially when several populations with different time scales are interacting. PDMPs can be used to model individual-based measure-valued populations or can be obtained as limiting processes in the large population scale, see \textit{e.g.}  \cite{riedler_limit_2012,genadot_multiscale_2014,fritsch_modeling_2015,cloez_probabilistic_2017,cloez_optimal_2020,prodhomme_large_2024}. \\

\textbf{Monte Carlo methods.} 
In the detailed survey \cite{fearnhead_piecewise_2018}, Fearnhead and co-authors exhibit PDMPs as the unifying framework of continuous-time Monte Carlo Markov chains (MCMC) and sequential Monte Carlo (SMC) methods, which in fact both involve simulating PDMPs. They appear as the natural building blocks of continuous-time Monte Carlo methods thanks to their low computational simulation properties (see Section \ref{subsubsec:simulation}) as only a finite number of events and transitions occur in any fixed time-interval. An estimate of the target distribution $\pi$ against any given function $g$ can easily be computed as the average of $g$ over the path of the PDMP, provided the PDMP has invariant distribution $\pi$. In this context, PDMPs have no modeling purpose, and the reader is left to \cite{fearnhead_piecewise_2018} and references therein for more properties and results. More recent considerations may also be found in  \cite{bierkens_piecewise_2018} or \cite{monemvassitis_pdmp_2023}. \\

\textbf{Inference.} 
Statistical estimation for PDMPs has been studied in the literature, but remains a challenging task in practice. A general overview of recent methods for statistical parameter estimation in PDMPs is given in \cite{azais_statistical_2018}. It is based on a wide range of application examples, in both parametric and non-parametric frameworks. While it is usually easy to estimate the flow of a PDMP, estimating the jump intensity or the kernel function can often be more difficult. We present here a few references dealing with the specific estimation of these characteristics in different contexts.

Examples of non parametric methods to estimate the jump mechanism under different settings and assumptions can be found in \cite{azais_nonparametric_2014},  \cite{azais_new_2018} and \cite{krell_statistical_2016}.
In \cite{azais_nonparametric_2014}, the authors propose an estimation procedure for the conditional distribution of the inter-arrival times of a PDMP. In \cite{krell_statistical_2016}, the author provide a method to estimate the jump rate of a process which state space is $(0, \infty)$ and where both the motion and the jump mechanism are deterministic. More recently, \cite{krell_nonparametric_2021} follows \cite{krell_statistical_2016} and proposes an estimator of $\lambda$, this time without the assumption of a deterministic kernel. In all papers, a single trajectory of the process is fully observed in long time, and the PDMP is supposed to be ergodic.

In \cite{chiquet_method_2008} the authors present a method to estimate the transition kernel 
of a PDMP and compare it with the Monte-Carlo estimator on a reliability problem. \cite{azais_recursive_2014} give a recursive estimation procedure to estimate the kernel of a PDMP from its perfect observations within a long time interval, without assumptions of stationarity. The works presented in \cite{azais_semi-parametric_2015} and \cite{azais_estimation_2019} focus on estimating PDMP-related functions rather than inferring intrinsic process characteristics. In all those works, the statistical approach varies according to the application domain, as each has its own specificity. 
%
\subsection{Impulse control for PDMPs}
\label{ssec:impulse}
%
In this section, we delve into the control of Piecewise Deterministic Markov Processes. Broadly speaking, a stochastic control problem involves influencing the dynamics of a process through control variables or decisions to either maximize a reward or minimize a cost. The primary questions of interest include the characterization and regularity properties of the \emph{value function} — representing the optimal reward or cost achievable across all possible decisions or decision sequences — and the existence and properties of optimal decision sequences, called \emph{strategies}.

We begin by briefly outlining the two principal classes of control problems for PDMPs in Section~\ref{sssec:controlPDMP}. Our focus then shifts to impulse control problems, which we define in Section~\ref{ssec:impulse}, illustrate with a detailed example in Section~\ref{ex:impulse_control_pdmp}, and address in terms of resolution strategies in Section~\ref{sssec:resolution_pdmp}. Finally, we conclude this part with a discussion on the special case of control problems involving hidden information in Section~\ref{sssec:POPDMP}.
%
\subsubsection{Continuous versus impulse control}
\label{sssec:controlPDMP}
%
There are two types of control for PDMPs: \emph{continuous control} and \emph{impulse control}, see \cite[Chapters 4 and 5]{davis_markov_1993}.
Continuous control describes situations in which the control variable is selected at \emph{all times} and acts on the process through the characteristics ($\Phi$, $\lambda$, and  $Q$) by influencing the deterministic motion and the jump mechanism. By opposition, impulse control intervenes \emph{punctually} on the process by moving it to a new point of the state space at some times specified by the controller. 

In continuous control problems, the local characteristics of the PDMP, the flow $\Phi$, the jump intensity $\lambda$ and the transition kernel $Q$, depend on an additional control parameter that is selected by the controller at \emph{each instant of time}. 
Such problems have been studied in the literature mostly under the stationarity condition by considering two strategies, the resolution of an integro-differential optimality inequality, the so called Hamilton-Jacobi-Bellman (HJB) equation (see for instance \cite{dempster_necessary_1992,dempster_generalized_1996}), or the translation into a discrete-time Markov Decision Process (MDP) on a functional space (see \cite{almudevar_dynamic_2001,bauerle_markov_2011,forwick_piecewise_2004,schal_piecewise_1998}). Continuous control problems where the process is not required to reach stationarity have also received some attention (see \textit{e.g.} \cite{costa_average_2010}) but remain seldom studied. 

In this paper we focus only on impulse control, where at \emph{discrete} controller-chosen times, the process is sent to a new controller-chosen point in the state space. The simplest form of impulse control is \emph{optimal stopping} where the  controller selects a single date at which the process is stopped.
Impulse control problems for PDMPs have been studied from the theoretical point of view: characterization of the value function and optimal policies through dynamic programming in \textit{e.g}. \cite{gugerui_optimal_1986,costa_approximations_1988,cloez_optimal_2020} for optimal stopping and \cite{costa_impulse_1989,de_saporta_optimal_2017} for the general impulse control problem. Numerical schemes to approximate the value function and compute explicit policies close to optimality have also been proposed in \cite{costa_impulse_1989,de_saporta_numerical_2016}.
%
\subsubsection{Definition of an impulse control problem}
\label{ss:impulse}
%
The formal probabilistic apparatus necessary to rigorously define the impulse control
problem for PDMPs is rather technical, and will not be used in the sequel, therefore, for the sake of simplicity, we only present an informal description of the problem. The interested reader is referred to \cite{costa_impulse_1989} or \cite{dufour_optimal_2016} for a proper definition.
%
\begin{defin}[Impulse strategy]
    \label{def:impulse_strat}
    A general impulse strategy $\mathcal{S}=(\tau_n,\chi_n)_{n\geq 1}$ is a sequence of non-anticipative $\mathbb{R}_+$-valued intervention times $\tau_n$ and $\mathbb{U}\subset E$-valued non-anticipative random variables $\chi_n$.

    The trajectory of the PDMP controlled by strategy $\mathcal{S}$ is described in Algorithm \ref{algo:impulsePDMP}. Between intervention dates, the PDMP follows its natural trajectory. At the $n$-th impulsion (line \ref{alg3-impulse}), the process is instantaneously moved to the new starting point $\chi_n$ and restarts from there.

    The set of admissible strategies is denoted by $\mathbb{S}$.
\end{defin}
%
\begin{algorithm}[tp]
\caption{Simulation of a trajectory of a PDMP controlled by strategy $\mathcal{S}$ starting from state $x_0$ up to the $n$-th intervention}
\label{algo:impulsePDMP}
\begin{algorithmic}[1]
    \STATE $x\leftarrow x_0$
    \STATE $T\leftarrow 0$
    \FOR{$n$ $\gets 1$ to $N$}
        \WHILE{$T<\tau_n$}
            \STATE $S\sim \lambda({x})$ 
            \STATE $S \leftarrow \min\{S, {t}^*(x)\}$
            \IF{$S<\tau_n-T$} \label{alg3-natural}
                \STATE $X_{t} \gets \Phi(x,t-T)$ for $T\leq t <T+S$ 
                \STATE $x'\sim Q(\cdot|\Phi(x,S))$ 
                \STATE $x \gets x'$
                \STATE $T \gets T+S$
            \ELSE \label{alg3-impulse}
                \STATE $X_{t} \gets \Phi(x,t-T)$ for $T\leq t <\tau_n$ 
                \STATE $x' \gets \chi_n$
                \STATE $x \gets x'$
                \STATE $T \gets \tau_n$                
            \ENDIF
        \ENDWHILE
    \ENDFOR
\end{algorithmic}
\end{algorithm}
%
\begin{defin}[Cost of an impulse strategy]
    \label{def:cost_impulse_strat}
    The cost of strategy $\mathcal{S}$ for a PDMP starting at $x$ at time $0$ and up to infinity is defined by
    \begin{align*}\label{cout_strategie_PDMP}
    \mathcal{V}(\mathcal{S},x)=\E_x^{\mathcal{S}} \left[ \int_0^{+\infty} e^{-\gamma t}c_R(X_t)dt + \sum_{n=1}^{\infty}  e^{-\gamma \tau_n}c_I\left({X}_{\tau_n},{X}_{\tau_n^+}\right)\right],
    \end{align*}
    where $\gamma\geq 0$ is a discount factor, $c_R$ is the running cost and $c_I$ is the impulse cost.
\end{defin}
%
The specific assumptions that the cost functions must satisfy are omitted here, but can be found in detail in \cite{costa_impulse_1989}. For simplicity, we present an infinite-horizon criterion. Alternative criteria are also possible, such as those with a fixed horizon $\mathsf{H}$ or a random horizon defined by a prescribed number of jumps or impulses, as briefly discussed after Theorem~\ref{th:edynamic_programming_PDMP}.
%
\begin{defin}[Value function]
    \label{def:value_PDMP}
    The value function for a PDMP starting at $x$ at time $0$ is defined by
    \begin{align*}
    \mathcal{V}^\star(x)=\inf_{\mathcal{S}\in\mathbb{S}}\mathcal{V}(\mathcal{S},x).
    \end{align*}
\end{defin}
%
In most impulse control problems for PDMPs, there is no optimal strategy as the infimum may not be reached. We define instead $\epsilon$-optimal strategies.
%
\begin{defin}[$\epsilon$-optimal strategy]
    \label{def:epsi-opti}
    An $\epsilon$-optimal strategy $\mathcal{S_\epsilon}$ is an admissible strategy satisfying
    \begin{align*}
    \mathcal{V}(\mathcal{S_\epsilon},x)\leq \mathcal{V}^\star(x) + \epsilon,
    \end{align*}
    for all $x\in E$.
\end{defin}
%
Under suitable assumptions (see again \cite{costa_impulse_1989} for details), the value function can be characterized as the unique solution of some dynamic programming equations.
%
\begin{theorem}[Dynamic programming]
\label{th:edynamic_programming_PDMP}
The optimal value function $\mathcal{V}^{\star}$ is limit of the following recursion. Set $\mathcal{V}^\star_0(x)=\mathcal{V}(\mathcal{S}_{\emptyset},x)$ and $\mathcal{V}^\star_{n+1}(x)=\mathcal{L}(\mathcal{V}^\star_{n})(x)$ for all $n \geq 0$, then for all $x$ in $E$, one has
\[\mathcal{V}^\star(x)=\lim_{n \rightarrow + \infty} \mathcal{V}^\star_{n}(x),\]
where $\mathcal{V}(\mathcal{S}_{\emptyset},x)$ is the cost associated to the no-impulse strategy $\mathcal{S}_\emptyset=(\tau_1=+\infty)$ starting from $x$
\begin{align*}
 \mathcal{V}(\mathcal{S}_{\emptyset},x)=\mathbb{E}_x^{\mathcal{S}_\emptyset}\left[ \int_0^{\infty} e^{-\gamma t} c_R(X_t)dt \right],
\end{align*}
and $\mathcal{L}$ is the single-jump-or-intervention operator defined as
\begin{align*}
\mathcal{L}V(x) = \inf_{t \in \mathbb{R}^+} J(MV,V)(x,t) \wedge KV(x),
\end{align*}
with
\begin{align*}
MV(x)&= \inf_{\chi\in \mathbb{U}} \{c_I(x,\chi)+V(\chi)\},\\
J(V,W)(x,t)&= \int_0^{t \wedge t^*(x)}\!\!\!\!\!\!\!\!\!\!\!\!\!\!e^{-\gamma s-\Lambda(x,s)} \left(c_R(\Phi(x,s))+\lambda(\Phi(x,s))QW(\Phi(x,s))\right) ds \\
&\quad + e^{-\gamma(t\wedge t^*(x))-\Lambda(x,t\wedge t^*(x))} V(\Phi(x,t\wedge t^*(x))), \\
KV(x)&= \int_0^{t^*(x)}e^{-\beta s-\Lambda(x,s)} \left(c_R(\Phi(x,s)) +\lambda(\Phi(x,s)) QV(\Phi(x,s))\right) ds \\
&+ e^{-\gamma t^*(x)-\Lambda(x,t^*(x))} QV(\Phi(x,t^*(x))).
\end{align*}
\end{theorem}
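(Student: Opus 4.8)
The plan is to interpret the iterate $\mathcal{V}^\star_n$ as the value function of the restricted problem in which at most $n$ interventions are allowed, and then to pass to the limit. First I would introduce, for each $n\geq 0$, the subset $\mathbb{S}_n\subset\mathbb{S}$ of admissible strategies using at most $n$ interventions, so that $\mathbb{S}_0=\{\mathcal{S}_\emptyset\}$ and $\mathbb{S}_n\subset\mathbb{S}_{n+1}$, and set $\mathcal{V}_n(x)=\inf_{\mathcal{S}\in\mathbb{S}_n}\mathcal{V}(\mathcal{S},x)$. The core of the argument is to prove by induction on $n$ that $\mathcal{V}_n=\mathcal{V}^\star_n$, i.e. that the recursion $\mathcal{V}^\star_{n+1}=\mathcal{L}\mathcal{V}^\star_n$ is exactly the dynamic programming principle for the $(n+1)$-intervention problem. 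The base case $n=0$ holds by definition, since $\mathbb{S}_0=\{\mathcal{S}_\emptyset\}$ gives $\mathcal{V}_0=\mathcal{V}(\mathcal{S}_\emptyset,\cdot)=\mathcal{V}^\star_0$.

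For the induction step I would condition on the behaviour of the controlled process up to its first decision. A strategy in $\mathbb{S}_{n+1}$ either lets the process evolve freely until its first natural jump — random, with rate $\lambda$, or deterministic at the boundary time $t^*(x)$ — and then continues with at most $n+1$ interventions from the post-jump state, or it performs its first intervention at some non-anticipative time $t$ and location $\chi\in\mathbb{U}$, after which at most $n$ interventions remain. Using the law of $T_1$ recalled above and the strong Markov property, the discounted cost splits into the running cost accumulated before the decision, weighted by the survival factor $e^{-\gamma s-\Lambda(x,s)}$, plus a continuation cost. Optimising the continuation, which by the induction hypothesis equals $\mathcal{V}^\star_n$ and enters through $Q\mathcal{V}^\star_n$ at a natural jump and through $M\mathcal{V}^\star_n$ at an intervention, and then optimising over the first intervention time $t$, reproduces exactly $\inf_{t}J(M\mathcal{V}^\star_n,\mathcal{V}^\star_n)(x,t)\wedge K\mathcal{V}^\star_n(x)=\mathcal{L}\mathcal{V}^\star_n(x)$: the $J$ term collects the ``intervene at $t$'' options while the $K$ term collects the ``never intervene, absorb the first natural jump'' option.

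Granting the identification $\mathcal{V}_n=\mathcal{V}^\star_n$, convergence follows in two steps. Since the strategy sets are nested, $(\mathcal{V}^\star_n)_n$ is non-increasing and bounded below by $\mathcal{V}^\star$, so it converges to some limit $\mathcal{V}^\infty\geq\mathcal{V}^\star$. For the reverse inequality I would fix any $\mathcal{S}\in\mathbb{S}$ and truncate it after its first $n$ interventions to obtain $\mathcal{S}^{(n)}\in\mathbb{S}_n$; under the standing cost assumptions of \cite{costa_impulse_1989} (bounded running cost, impulse cost bounded below by some $\delta>0$, and $\gamma>0$), the discount factors $e^{-\gamma\tau_k}$ together with the lower bound $\delta$ on each impulse force the tail contribution of the interventions beyond rank $n$ to be summable, whence $\mathcal{V}(\mathcal{S}^{(n)},x)\to\mathcal{V}(\mathcal{S},x)$. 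This yields $\mathcal{V}^\infty\leq\mathcal{V}(\mathcal{S},x)$ for every $\mathcal{S}$, hence $\mathcal{V}^\infty\leq\mathcal{V}^\star$, and therefore equality.

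The main obstacle is the dynamic programming step, namely rigorously justifying the decomposition of an arbitrary admissible strategy at its first intervention time. This requires the strong Markov property of the PDMP together with measurable-selection arguments to handle the non-anticipative random intervention times $\tau_n$ and locations $\chi_n$, and genuine care in separating the three mutually exclusive first events — a random jump, a boundary jump, or a controller intervention. The truncation estimate and the monotone passage to the limit are comparatively routine once the cost assumptions guaranteeing finiteness of $\mathcal{V}$ are in place.
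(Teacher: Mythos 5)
The paper does not prove this theorem: it is quoted from the literature (see \cite{costa_impulse_1989}), and the only guidance the paper gives is the remark immediately after the statement that ``the iterate $\mathcal{V}^\ast_n$ can be interpreted as the optimal value function of the impulse control problem where at most $n$ \emph{jumps or interventions} are allowed, and then strategy $\mathcal{S}_\emptyset$ is applied.'' Your proposal adopts the right overall architecture (identify the iterates with restricted control problems, establish monotone convergence, and close the gap with $\mathcal{V}^\star$ by a truncation estimate), but the key identification is wrong: you set $\mathbb{S}_n$ to be the strategies using at most $n$ \emph{interventions}, whereas $\mathcal{L}$ is a \emph{single-jump-or-intervention} operator, so each application of $\mathcal{L}$ consumes one unit of budget at the first natural jump as well. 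Concretely, in both the $K$ branch and the random-jump part of the $J$ branch the continuation after the first natural jump is $Q\mathcal{V}^\star_n$, so in the problem valued by $\mathcal{V}^\star_{n+1}$ only $n$ further events remain after a natural jump. With your $\mathbb{S}_n$, a strategy allowed one intervention may wait through arbitrarily many natural jumps before intervening, which $\mathcal{L}\mathcal{V}^\star_0$ does not permit; hence $\mathcal{V}_1\leq\mathcal{V}^\star_1$ with strict inequality in general, and the induction $\mathcal{V}_n=\mathcal{V}^\star_n$ fails at $n=1$.

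Your own description of the induction step exposes the problem: you say that after a natural jump the controller ``continues with at most $n+1$ interventions from the post-jump state,'' yet you then substitute $\mathcal{V}^\star_n$ for that continuation. If $n+1$ interventions genuinely remained after a natural jump, the decomposition would yield a fixed-point equation $\mathcal{V}_{n+1}=\inf_t J(M\mathcal{V}_n,\mathcal{V}_{n+1})\wedge K\mathcal{V}_{n+1}$ rather than the explicit recursion $\mathcal{V}^\star_{n+1}=\mathcal{L}\mathcal{V}^\star_n$. The fix is to define the restricted problem as the paper does: strategies that revert to $\mathcal{S}_\emptyset$ after the $n$-th jump-or-intervention. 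The rest of your plan then goes through with minor adjustments: monotonicity follows from $K\mathcal{V}^\star_0=\mathcal{V}^\star_0$ (the Markov property of the no-impulse cost) together with monotonicity of $\mathcal{L}$, and the truncation error is controlled by $e^{-\gamma T_n}$ where $T_n$ is the $n$-th event time, which tends to infinity by the non-explosion assumption rather than by your argument based on a lower bound $\delta$ on the impulse cost alone (that argument also neglects that truncation changes the running cost accrued after the truncation time, which must likewise be bounded via the discount factor).
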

%
Operator $M$ selects the best starting point $\chi$ in the control set $\mathbb{U}$ after an intervention. Note that the control set could also depend on the position $x$ before the intervention. Operator $\inf _t J$ selects the best intervention date along the flow, while operator $K$ accounts for cases when it is best to wait for the next natural jump of the process before triggering an intervention. The iterations in the dynamic programming equation thus correspond either to natural jumps of the process or to interventions, hence the name \emph{single-jump-or-intervention} operator. Dynamic programming is based on the Markov property for the canonical embedded chain, and operators can be reformulated as expectations involving the said chain, see \cite{de_saporta_numerical_2012}.

The iterate $\mathcal{V}^\ast_n$ can be interpreted as the optimal value function of the impulse control problem where at most $n$ jumps or interventions are allowed, and then strategy $\mathcal{S}_\emptyset$ is applied. 
Computing the cost of the no-impulse strategy on the whole space may be demanding. Theorem \ref{th:edynamic_programming_PDMP} is also valid when replacing $\mathcal{V}(\mathcal{S}_{\emptyset},x)$ by any function $g$ such that $\mathcal{V}(\mathcal{S}_{\emptyset},x)\leq g(x)$ for all $x\in E$. 

Finally, we have stated the results for an infinite horizon only. If one is interested in controlling the process up to a finite time horizon $\mathsf{H}$, one just needs to kill the process, or send it to a cemetery state, when time $\mathsf{H}$ is reached, and define a null cost on this state to fall back to the infinite horizon framework.
The discount factor $\gamma$ can be set to $0$ in this context as the interval of integration is finite. Hence, the cost of strategy $\mathcal{S}$ can typically be defined, in the fixed horizon $\mathsf{H}$, by
 \begin{align*}
    \mathcal{V}_{\mathsf{H}}(\mathcal{S},x)=\E_x^{\mathcal{S}} \left[ \int_0^{\mathsf{H}} c_R(X_t)dt + \sum_{\tau_n< \mathsf{H}}  c_I\left({X}_{\tau_n},{X}_{\tau_n^+}\right)+ c_T(X_H)\right],
 \end{align*}
where $c_T$ corresponds to some terminal cost function.
%
\subsubsection{Running medical example as a controlled PDMP}
\label{ex:impulse_control_pdmp}
%
We revisit the medical example presented in Section~\ref{subsubsec:medicPDsMP}, now incorporating the additional option for the practitioner to perform surgery. This intervention entirely removes the tumor and restores the patient to a healthy state. The practitioner’s objective is to prevent the patient from reaching a critical state, which occurs when the tumor size exceeds a predefined threshold, leading to death. 

We add an extra mode $m=2$ to account for the death of the patient.
The state space becomes
\begin{equation*}
    E = \bigcup_{m\in M}\{m\}\times E_{m},
\end{equation*}
with $M=\left \lbrace -1, 0, 1, 2 \right \rbrace$ and ${E}_{-1} = (\zeta_0, D)$, $E_0 = \left \lbrace \zeta_0 \right \rbrace\times \mathbb{R}_+$, $E_{1} = \left[\zeta_0, D\right)$ and $E_2=\{D\}$, where $D$ is the death threshold. 
In this extended scenario, the biomarker dynamics are modeled by a PDMP as in Section~\ref{subsubsec:medicPDsMP} for modes $-1$, $0$ and $1$. In mode $m=2$, the flow is constant $\Phi_2(\zeta,t)=\zeta=D$, the jump intensity is null $\lambda_2=0$ so that no more jumps are allowed, \emph{i.e.} death is an absorbing state. At any time $\tau$, the practitioner may decide to intervene on the process by resetting the current state to $X_\tau=(0,\zeta_0,0)$, representing the post-surgery state. Impulse dates can thus be chosen freely as non-anticipative dates, and the control set is a single point $\mathbb{U}=\{(0,\zeta_0,0)\}$.

The running cost is defined as the area under the marker curve, \textit{i.e.} $c_R(\zeta)=\zeta-\zeta_0$, while the impulse cost corresponds to a fixed surgery cost, $c_I(x,x')=C$, independent of the current marker value. Intuitively, this cost structure discourages the practitioner from performing surgery prematurely when the marker level is low. Instead, the decision is delayed until the marker level reaches a threshold where the cumulative cost of waiting outweighs the surgery cost. This setup suggests a threshold-based strategy, where surgery is performed when the marker exceeds a critical level $\zeta_T$. We provide only an intuitive explanation of this result without formal proof. 
An example of a trajectory controlled by a strategy with intervention threshold $\zeta_T$ is shown in Figure~\ref{fig:egimpulsecontrolledPDMP}. 
%
\begin{figure}[tp]
\centering
\includegraphics[width=.6\textwidth]{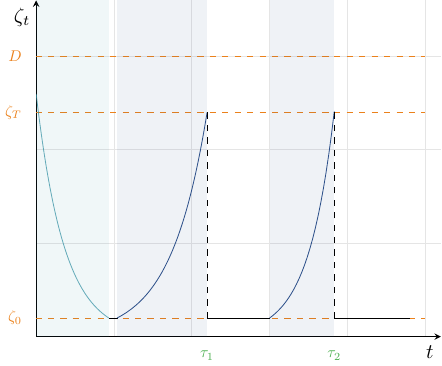}
\caption{{\bf Impulse control in the medical example.} The PDMP is controlled by a threshold strategy: a surgery intervention is performed whenever the marker level exceeds the threshold $\zeta_T$. Impulsion dates are denoted by $\tau_1$ and $\tau_2$ at which the tumor is completely removed, and the marker level is reset to its nomila value $\zeta_0$.}
\label{fig:egimpulsecontrolledPDMP}
\end{figure}
%
\subsubsection{Resolution of impulse control problems for PDMP}
\label{sssec:resolution_pdmp}
%
Optimal impulse control of PDMPs attracted attention almost as soon as the concept of PDMP was formalized by Davis. Contributions focusing only on the estimation of the value function are usually based on variational inequalities or value improvement and do not provide explicit strategies (see for instance \cite{lenhart_viscosity_1989, dempster_impulse_1995, gatarek_optimality_1992, dempster_optimal_1991, bandini_optimal_2018}). 

Explicit resolution strategies have also been proposed. In the simpler context of optimal stopping -- where the decision-maker can only decide to stop the process without restarting it from a new location -- the literature exhibits strategies based on either the discretization of the PDMP Markov jump kernel~$Q$ \cite{costa_approximations_1988}, or on the discretization of the Markov kernel of the canonical chain \cite{costa_stability_2008,de_saporta_numerical_2010} (see Section~\ref{sssec:chains} for the definition of the canonical chain). After discretization, the value function can be estimated via dynamic programming \cite{gugerui_optimal_1986} using the computable discretized kernels.

These approaches have been extended for the more general case of impulse control problems. \cite{costa_discretizations_1993} considers a discretization of the state space and time to obtain uniform convergence of the discretized problem to the original one. \cite{de_saporta_numerical_2012} consider quantization of the kernel of the canonical chain to propose an $\epsilon$-optimal strategy. Other approaches, including iteration of a single-jump-or-intervention operator associated to the PDMP, have also successfully been proposed to construct explicit solutions in the infinite horizon context (see for instance \cite{costa_impulse_1989,de_saporta_optimal_2017}). 

Theoretical results on the existence of optimal strategies not relying on the discretization of kernels  have also been derived in the case of infinite horizon impulse control, but they do not provide explicit policies (see for instance \cite{costa_average_2010,costa_continuous_2013}). 

All the approaches described above are numerically demanding and work for low dimensional PDMPs with limited complexity. They also require a perfect knowledge of the model and its parameters, together with a perfect observation of the process at all times (or equivalently at jumps times).
%
\subsubsection{Partially observed controlled PDMPs}
\label{sssec:POPDMP}
%
A very natural extension to the classical impulse control problem for PDMPs is to allow for hidden information. Suppose that the controller does not have access to the true value of the process, but only to some noisy observation of some of its components. Now decisions can only be taken based on the observations, and not on the hidden state of the process.

The literature is much scarcer for this problem. 
Constructive resolution approaches have been proposed in the case where jumps are still perfectly observed, but the post-jump locations are only measured through noise. In \cite{brandejsky_numerical_2012}, the authors consider the equivalent fully observed process on the filter space and apply the quantization approach of \cite{de_saporta_numerical_2012} on the filtered process. \cite{bauerle_optimal_2018} reduce the problem to a discrete-time Markov Decision Process (MDP) and prove the existence of optimal policies, but provide no numerical approximation of the value function or optimal strategies. 

In the most challenging case where jump dates are unobserved, the authors have proposed a framework to construct an explicit strategy close to optimality in the optimal stopping context \cite{cleynen_change-point_2018} and for more general impulse problems \cite{cleynen_numerical_2021,de_saporta_monte-carlo_2024}. Those approaches are based on a conversion of the impulse problem into a (Partially Observed) MDP, as detailed in Section~\ref{subsec:POPDMP-POMDP}.
%
\section{Markov Decision Processes} \label{sec:mdp}
%
Markov Decision Processes (MDPs) provide a robust framework for modeling scenarios where sequential decisions are made in the presence of uncertainty. The objective is to determine an optimal sequence of decisions that maximizes rewards or minimizes costs, while accounting for the probabilistic evolution of the system. At each step, a decision-maker observes the current state, selects an action, and receives feedback based on that action, with the system transitioning probabilistically to the next state.

MDPs belong to a class of stochastic control problems that originated in the 1950s \cite{bellman_dynamic_1958,howard_dynamic_1962}. Comprehensive treatments of the subject can be found in foundational texts such as \cite{boutilier_decision-theoretic_1999} and \cite{puterman_markov_1994}. Their ability to model complex, dynamic systems has made MDPs essential across a wide range of applications, from robotics and finance to healthcare and artificial intelligence.

The versatility of MDPs in capturing stochastic dynamics and optimizing sequential decisions has led to their adoption across diverse fields, such as engineering, computer science, economics and other social sciences, making them a valuable tool for researchers and practitioners alike.
In robotics, MDPs have been employed for tasks such as navigation and control \cite{ponzoni_carvalho_chanel_multi-target_2013}.
Conversational agents and dialogue systems have leveraged MDPs to model and optimize human-machine interactions \cite{levin_using_1998,young_pomdp-based_2013}. 
The medical field has witnessed numerous applications of MDPs, including modeling disease progression and treatment planning for conditions like Alzheimer's disease \cite{hoey_automated_2010}, HIV \cite{keneally_markov_2016}, and diabetes \cite{roy_designing_2018}. 
MDPs have also been used in clinical decision-support systems for critical care \cite{schaefer_modeling_2004}. 

In this section, we delve into the formal components of MDPs, including states, actions, and reward functions , while also introducing the concept of policies, which determine the action selection process (Section~\ref{subsec:genericMDP}). 
Additionally, we discuss various optimality criteria in MDPs. 
Unless otherwise stated, we consider finite spaces for both actions and states.
We then introduce several extensions of MDPs, such as the partially observed case (Section~\ref{sec:pomdp}) or cases where the model is partially unknown (Section~\ref{sec:bamdp}).
%
\subsection{Markov Decision Process definition and control}
\label{subsec:genericMDP}
%
We start with the formal definition of finite state ans action spaces MDPs in Section~\ref{ssubsec:genericMDP}, then we turn to the definition of the different aspects of their optimization: policies (Section~\ref{subsubsec:policydefinition}), evaluation criteria (Section~\ref{subsubsec:evalcriteria}) and optimal policies (Section~\ref{sssec:optimalMDP}). We then illustrate these concepts with a new version of our running medical example in Section~\ref{ss:mini_mdp}, and briefly discuss extensions to more complex frameworks in Section~\ref{ss:extension_mdp}.
%
\subsubsection{Generic definition}
\label{ssubsec:genericMDP}
%
Informally, a Markov Decision Process (MDP) can be thought of as a Markov chain over a state space $S$, where transitions are influenced by actions chosen in a set $A$ by a decision-maker and evolve over an horizon of $H$ time steps. 
The decision process can span a finite or infinite number of time steps $H$, depending on the problem at hand.

The stochastic transitions between states are governed by a transition probability matrix $P$ (or a sequence of matrices in the case of a non-stationary MDP). The associated costs of these transitions are captured in a (possibly time-dependent) cost function $c$, which quantifies the immediate cost or reward associated with each state-action pair.
%
\begin{defin}[MDP]
    \label{def:mdp}
    A Markov Decision Process is defined by a tuple $\langle S, A, H, P, c\rangle$ 
    \begin{itemize}
        \item $S$ is the set of all possible states in the environment. In the case of discrete state spaces, $S$ may be a finite or countably infinite set. 
        
        \item $A$ is the set of actions (considered finite, unless otherwise precised) that the decision-maker can take. 
        
        \item $H$ is the horizon of the process, that is a positive integer (or positive infinity) counting the number of decision steps.
        
        \item $P$ describes the probabilities of transitioning, at any time step $0\leq t <H$ from a state $s_t \in S$ to another state $s' \in S$ after taking a specific action $a_t \in A$. Transition probabilities satisfy the Markov property
        $$P(s' | s_0, a_0, \cdots, s_t, a_t) = P(s' | s_t, a_t).$$ 
        In the case of finite state/action spaces, transition probabilities are typically represented as transition matrices for all $a\in A$.
        When the MDP is non-stationary, $P=\left(P_t\right)_{t=0:H-1}$ is a family of time-indexed transition matrices.
        
        \item $c$ is the cost function that assigns a numerical cost to each state-action-state transition. $c: S \times A \times S \rightarrow \mathbb{R}$. 
        Depending on the context, it may be worth considering a reward function instead of a cost function. 
        As for transitions, in the case of non-stationary MDPs, $c=\left(c_t\right)_{t=0:H-1}$ is a family of cost functions.
    \end{itemize}
\end{defin}
%
Let us consider a finite-horizon MDP $\langle S, A, H, P, c\rangle$ and an arbitrary initial state $s_0 \in S$. 
At any time $0\leq t<H$, the decision-maker chooses an action $a_t \in A$ and applies it in the current state $s_t$. This moves the process to the next state $s_{t+1}$, following probability $P(\cdot|s_t, a_t)$ and yields a cost $c_t(s_t, a_t, s_{t+1})$.
This process is iterated until termination when the horizon $H$ is reached. 
A terminal cost $C: S \rightarrow \mathbb{R}$ can be defined at the end of the process. This is illustrated on Figure \ref{fig:mdp}. 
%
\begin{figure}[tp]
    \centering
\begin{tikzpicture}[scale=0.75,
    node distance = 10mm and 15mm,
    state/.style={circle, draw, minimum size=1cm},
    decision/.style={draw, diamond, align=center, minimum size=1cm},
    reward/.style={draw, minimum size=1cm},
    every edge/.style = {draw, -{Stealth[scale=1.2]}}, 
    every edge quotes/.append style = {auto, inner sep=2pt, font=\footnotesize}
    ]
    \node (s0)  [state] {$s_0$};
    \node (s1)  [state, right=3cm of s0]   {$s_1$};
    \node (s2)  [state, right=3cm of s1]   {$s_2$};
    \node (s3)  [state, right=3cm of s2]   {$s_3$};
    \node (a0)  [decision, below right =of s0]   {$a_0$};
    \node (a1)  [decision, below right =of s1]   {$a_1$};
    \node (a2)  [decision, below right =of s2]   {$a_2$};
    \node (r0)  [reward, below =3cm of s1]   {$c(s_0,a_0,s_1)$};
    \node (r1)  [reward, below =3cm of s2]   {$c(s_1,a_1,s_2)$};
    \node (r2)  [reward, below =3cm of s3]   {$c(s_2,a_2,s_3)$};
    \node (r3)  [reward, dashed, below right =of s3]   {$C(s_3)$};

    \path   (s0)    edge [] (s1) 
                    edge [] (a0)
                    edge [bend left=-25] (r0)
            (s1)    edge [] (s2)
                    edge [] (a1)
                    edge [] (r0)
                    edge [bend left=-25] (r1)
            (s2)    edge [] (s3) 
                    edge [] (a2)
                    edge [] (r1)
                    edge [bend left=-25] (r2)

            (a0)    edge [] (s1)
                    edge [] (r0) 
            (a1)    edge [] (s2)
                    edge [] (r1) 

            (a2)    edge [] (r2) 
                    edge [] (s3)

            (s3)    edge [] (r2)
                    edge [dashed] (r3)
                    ;
\end{tikzpicture}
    \caption{{\bf Graphical representation of a Markov Decision Process (MDP) with finite horizon $H=3$.} Circles denote states, diamonds represent actions and rectangles indicate costs associated with state-action-state transitions. The dashed rectangle depicts the terminal cost.}
    \label{fig:mdp}
\end{figure}
%
\subsubsection{MDP policies}
\label{subsubsec:policydefinition}
%
To control an MDP, it is essential to define a {\em policy}, which determines the actions to be taken at each decision step based on the past {\em history} of the process. We now introduce the concepts necessary to define, evaluate and choose the optimal policies.
%
\begin{defin}[MDP history]
Let an MDP $\langle S, A, H, P, c \rangle$ be given. 
For any $0< t<H$, any sequence $\langle s_0, a_0, ...,s_{t-1}, a_{t-1}, s_t\rangle$ of states in $S$ and actions in $A$ is called (length $t$) {\em history} of the MDP. A length $0$ history is a singleton $s_0\in S$.
The set of length $t$ histories is denoted ${\mathcal H}_t$.
\end{defin}
%
Then, let us define history-dependent decision rules and policies. Let $\Delta(A)$ be the set of probability distributions over $A$.
%
\begin{defin}[Stochastic, history-dependent decision rules]
    \label{def:decision_rules}
    A stochastic, history-dependent decision rule at time $t$, $\pi_t: {\mathcal H}_t \to\Delta(A)$ assigns a probability distribution over actions to every length $t$ history.
\end{defin}
%
Interesting subclasses of decision rules are the deterministic and/or history-independent decision rules.
%
\begin{defin}[Deterministic decision rule]
    A decision rule $\pi_t$ is deterministic if it assigns a unique action to every history, \textit{i.e.} is a Dirac distribution over $A$.
\end{defin}
%
\begin{defin}[History-independent decision rule]
    A decision rule at time $t$, $\pi_t: {\mathcal H}_t \to\Delta(A)$ is history-independent if $\pi_t(h_t)=\pi_t(h'_t)$ for any pair $h_t=\langle s_0, a_0, ...,s_{t-1}, a_{t-1}, s_t\rangle, h'_t=\langle s'_0, a'_0, ...,s'_{t-1}, a'_{t-1}, s'_t\rangle\in{\mathcal H}_t$ such that $s_t=s'_t$. 
\end{defin}
%
Then, MDP {\em policies} are time-indexed lists of decision rules that specify the behavior of a decision-maker in an MDP.
%
\begin{defin}[MDP policy]
    Let an MDP $\langle S, A, H, P, c \rangle$ be given. A policy for this MDP is a (potentially infinite) family of decision rules $\pi =\left(\pi_t\right)_{0\leq t < H}$.
    Let $\Pi_t$ denote the set of possible decision rules at time $t$ and $\Pi =\left(\Pi_t\right)_{0\leq t < H}$ denotes the full set of possible policies. 
\end{defin}
%
Again, specific subclasses of policies can be defined.
\begin{itemize}
    \item $\Pi^D$ is the subset of {\em deterministic policies}, \textit{i.e.} policies composed of deterministic decision rules only.
    \item $\Pi^{M}$ is the subset of {\em Markovian policies}, \textit{i.e.} policies composed of history-independent decision rules only.
    \item $\Pi^S$ is the subset of {\em stationary policies}, \textit{i.e.} Markovian policies which do not depend on $t$, these are especially useful in the case of infinite-horizon MDP.
\end{itemize}
%
Given a policy $\pi\in \Pi$, it is straightforward to simulate a trajectory from an MDP using Algorithm~\ref{algo:controledMDP}.
%
\begin{algorithm}[tp]
    \caption{Simulation of a trajectory of a MDP controlled by policy $\pi$ starting from state $s_0$ up to the horizon $H$}
\label{algo:controledMDP}
\begin{algorithmic}[1]
    \STATE $t\leftarrow 0$
    \STATE $S_t \leftarrow s_0$
    \FOR{$t$ $\gets 0$ to $H-1$}
        \STATE $A_t \sim \pi(S_t)$ 
        \STATE $S_{t+1} \sim P(\cdot|s_t,a_t)$
        \STATE $C_t \gets c(S_t, A_t, S_{t+1})$
        \STATE $t \gets {t+1}$
    \ENDFOR
\end{algorithmic}
\end{algorithm}
%
\subsubsection{Policy Evaluation criteria}
\label{subsubsec:evalcriteria}
%
The objective in Markov Decision Processes (MDPs) is to find a policy that minimizes a cost criterion (or equivalently maximizes a reward criterion) over the entire decision horizon. The transition cost, $c(s_t, a_t, s_{t+1})$, represents the immediate penalty of taking action $a_t$ in state $s_t$ and transitioning to state $s_{t+1}$, while the terminal cost (in finite-horizon), $C(s_H)$, captures the penalty upon reaching the final state of the process.
Beyond immediate costs, we must define a criterion that captures the entire stochastic dynamics of the process, considering both the short-term effects of an action and its long-term consequences. 
The classical evaluation criterion is the (potentially discounted) {\em expected sum} of future costs.

The optimization problem associated with a cost criterion involves finding a policy $\pi$ minimizing the expected sum of costs over the set of all possible state-action trajectories, following $\pi$.
We distinguish between two common cases: finite-horizon and infinite-horizon optimization criteria. \\

\textbf{Infinite-horizon.} 
In this setting, the decision-maker makes decisions indefinitely. 
In this case, the discounted cost criterion represents the expectation of future transition costs, discounted by a factor $\gamma \in (0,1)$, which balances the immediate cost against future outcomes and ensures the finiteness of the sum as long as the running cost is bounded.
%
\begin{defin}[Discounted expected cost]\label{disexpcost}
    Starting from the initial state $s \in S$ and following the policy $\pi \in \Pi$ the discounted expected cost is 
\[V_{\gamma}(\pi, s) = \mathbb{E}^{\pi}_s\left[\sum_{t=0}^{\infty} \gamma^t c(S_t, A_t, S_{t+1})\right].\]
\end{defin}
%
The average cost criterion, introduced in \cite{puterman_markov_1994}, evaluates policies by comparing the limiting average (over time) of incurred expected costs.
%
\begin{defin}[Average expected cost]
     Starting from the initial state $s \in S$ and following the policy $\pi \in \Pi$ the average expected cost is 
\[V_{\text{av}}(\pi, s) = \lim_{H\to+\infty} \frac{1}{H}\mathbb{E}^{\pi}_s\left[\sum_{t=0}^{H-1}  c(S_t, A_t, S_{t+1})\right].\]
\end{defin}
%
\textbf{Finite-horizon.}
In this setting, the decision-maker is tasked with controlling the system over a fixed number of steps, $H$. The natural objective in this case is to minimize the {\em total expected cost}:
%
 \begin{defin}[Total expected cost]\label{totexpcost}
 Starting from the initial state $s \in S$ and following the policy $\pi \in \Pi$ the total cost is the sum of the costs accumulated over each step, up to the horizon $H$, plus the terminal cost associated with the final state. The expected cost can be expressed as
\[V_H(\pi, s) = \mathbb{E}^{\pi}_s\left[\sum_{t=0}^{H-1} c_t(S_t, A_t, S_{t+1}) + C(S_H)\right].\]
\end{defin}
%
Note that the above evaluation criteria apply to general history-dependent stochastic policies as well as all subclasses of policies.
%
\subsubsection{Computing optimal policies}
\label{sssec:optimalMDP}
%
When studying an MDP, one is interested in computing an {\em optimal policy}, that is a policy with minimal value in every starting states.
%
\begin{defin}[Optimal policy]
    Let MDP $\langle S, A, H, P, c \rangle$ be given. An optimal policy $\pi^*$ of the MDP, with respect to criterion $V\in\{V_\gamma, V_{\text{av}}, V_H\}$ is a policy such that:
    $$
    V(\pi^*,s)\leq V(\pi,s),\  \forall \pi\in\Pi,\ \forall s\in S.
    $$
\end{defin}
%
Optimal policies are not necessarily unique for a given problem. However, it is well-known (see \textit{e.g.} \cite{puterman_markov_1994}) that, in the finite state and action spaces case,
\begin{itemize}
    \item an optimal policy exists, for any MDP and criterion $V\in\{V_\gamma, V_{\text{av}}, V_H\}$.
    \item an optimal stationary deterministic policy exists for $V\in\{V_\gamma, V_{\text{av}}\}$, in the infinite horizon case.
    \item an optimal (potentially non-stationary) Markovian deterministic policy exists for $V_H$, in the finite-horizon case.
\end{itemize}
%
Even though optimal policies need not be unique, they all share the same {\em value function} $V^*:S\to\mathbb{R}$, which is unique.
%
\begin{defin}[Value function]
Let MDP $\langle S, A, H, P, c \rangle$ be given. Let any optimal policy $\pi^*$ of the MDP, with respect to criterion $V\in\{V_\gamma, V_{\text{av}}, V_H\}$ be also given. 
The optimal value function (which is unique) is then defined as
\[V^{\star}(s) = V(\pi^*, s) = \sup_{\pi\in \Pi}V(\pi,s),\ \forall s\in S.\]
\end{defin}
%
The optimal value functions and optimal policies of an MDP with respect to the different criteria can be obtained by solving sets of non-linear equations.
For the infinite-horizon discounted criterion and for the finite-horizon total total cost criterion, the corresponding systems of equations are the so-called \emph{Bellman equations} (see \textit{e.g.} \cite{bellman_dynamic_1958}).
%
\begin{theorem}[Bellman equation]~\\
\label{th:mdp-bellman}
$\bullet$ infinite horizon: the optimal value function $V_\gamma^{\star}$ is the unique solution of a system of non-linear equations in variables $\{V_\gamma(s)\}_{s\in S}$, called Bellman equations
\[V_{\gamma}(s) = \min_{a \in A} \sum_{s' \in S}P(s'|s,a) \left(c(s,a,s') + \gamma V_{\gamma}(s')\right),\ \forall{s \in S}.\]
Any optimal policy is such that 
\[\pi^{\star}(s)  \in \argmin_{a \in A}\sum_{s' \in S}P(s'|s,a)\left(c(s,a,s')+\gamma V_{\gamma}^{\star}(s')\right), \ \forall{s \in S}.\]
$\bullet$ finite horizon: the optimal value function $V^{\star}$ is the unique solution to the following linear system of Bellman equations in variables $\{V_t(s)\}_{0\leq t\leq H, s\in S}$ 
\[V_t(s) = \min_{a \in A} \sum_{s' \in S}P(s'|s,a)\left(c(s,a,s') + V_{t+1}(s')\right),\ \forall{t \in \{0, \ldots, H-1\}}, \forall{s \in S}\]
where $V_H(s) = C(s)$ defines the terminal values and $V^{\star}=V_0$. 
Any optimal policy $\pi^{\star}=(\pi^\star_t)_{t=0:H-1}$ is such that 
\[\pi^{\star}_t(s)  \in \argmin_{a \in A}\sum_{s' \in S}P(s'|s,a)\left(c(s,a,s')+V_{t+1}^{\star}(s')\right), \ \forall{s \in S}.\]
\end{theorem}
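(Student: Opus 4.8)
The plan is to handle the two criteria separately, since the finite-horizon statement follows from a finite backward recursion whereas the infinite-horizon one rests on a contraction argument. For the finite-horizon case I would argue by backward induction on $t$, setting the optimal cost-to-go $V_t^\star(s)=\inf_{\pi\in\Pi}\mathbb{E}_s^\pi[\sum_{k=t}^{H-1}c_k(S_k,A_k,S_{k+1})+C(S_H)]$ conditioned on $S_t=s$, with base case $V_H^\star=C$. In the inductive step I would condition on the action $a$ taken at time $t$ and the resulting state $s'\sim P(\cdot\mid s,a)$: the Markov property ensures that, given $(s,a,s')$, the law of the remaining trajectory depends only on $s'$, so the continuation cost is at least $V_{t+1}^\star(s')$, with equality attained by the inductively optimal tail policy. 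Minimising over $a$ then gives $V_t^\star(s)=\min_a\sum_{s'}P(s'\mid s,a)(c_t(s,a,s')+V_{t+1}^\star(s'))$ and simultaneously exhibits a deterministic Markovian minimiser, proving both the equation and the greedy characterisation; uniqueness is automatic because the recursion determines the values from the terminal condition.

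For the infinite-horizon discounted case I would introduce the Bellman operator $\mathcal{T}$ on $\mathbb{R}^S$ by $(\mathcal{T}V)(s)=\min_a\sum_{s'}P(s'\mid s,a)(c(s,a,s')+\gamma V(s'))$, together with the policy-evaluation operator $\mathcal{T}^\pi$ obtained by fixing $\pi(s)$ rather than minimising. The first step is to check that both are $\gamma$-contractions for the supremum norm: since $\gamma\in(0,1)$ and $c$ is bounded, one gets $\|\mathcal{T}V-\mathcal{T}W\|_\infty\le\gamma\|V-W\|_\infty$, and likewise for $\mathcal{T}^\pi$. The Banach fixed-point theorem then yields a unique fixed point $V^\dagger$ of $\mathcal{T}$, which already gives uniqueness of the solution to the Bellman equations; what remains is to identify $V^\dagger$ with the optimal value $V_\gamma^\star$.

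This identification is the crux and the step I expect to be the main obstacle, because the infimum defining $V_\gamma^\star$ ranges over all history-dependent randomised policies while $V^\dagger$ is produced from stationary deterministic data. For achievability I would take the stationary deterministic policy $\pi^\star$ greedy with respect to $V^\dagger$; then $\mathcal{T}^{\pi^\star}V^\dagger=\mathcal{T}V^\dagger=V^\dagger$, and since $V_\gamma(\pi^\star,\cdot)$ is the unique fixed point of the contraction $\mathcal{T}^{\pi^\star}$ we get $V_\gamma(\pi^\star,\cdot)=V^\dagger$. For the matching lower bound, the identity $V^\dagger=\mathcal{T}V^\dagger$ gives $V^\dagger(s)\le\mathbb{E}[c(s,A_0,S_1)+\gamma V^\dagger(S_1)\mid S_0=s]$ for any action choice; applying this conditionally at each step of an arbitrary admissible $\pi$ and using the tower property yields, for every $n$,
\[V^\dagger(s)\le\mathbb{E}_s^\pi\Big[\sum_{t=0}^{n-1}\gamma^t c(S_t,A_t,S_{t+1})+\gamma^n V^\dagger(S_n)\Big].\]
Letting $n\to\infty$, boundedness of $c$ makes the remainder $\gamma^n V^\dagger(S_n)$ vanish while the partial sums converge to $V_\gamma(\pi,s)$, so $V^\dagger\le V_\gamma(\pi,\cdot)$ for every $\pi$. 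Combining the two inequalities gives $V^\dagger=V_\gamma^\star$ and shows any greedy policy is optimal. The delicate points throughout are the use of the Markov property to justify the conditional one-step bounds for genuinely history-dependent policies, and the control of the vanishing discounted tail, both of which rely on $c$ being bounded so that all the value functions are finite.
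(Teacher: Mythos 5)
The paper does not actually prove this theorem: it is stated as a classical result with a pointer to the standard references (Bellman, Puterman), so there is no in-paper argument to compare against. Your proposal is the standard textbook proof and it is correct for the setting the paper works in (finite $S$ and $A$, hence bounded costs and attained minima): backward induction with the Markov property for the finite-horizon recursion and the greedy Markovian minimiser, and the $\gamma$-contraction of the Bellman operator plus the two-sided identification of the fixed point with $\inf_\pi V_\gamma(\pi,\cdot)$ for the discounted case. You correctly flag the two genuinely delicate points — reducing history-dependent randomised policies to the stationary/Markovian ones via the conditional one-step inequality and the tower property, and killing the discounted tail $\gamma^n V^\dagger(S_n)$ using boundedness — which is exactly where the cited references spend their effort. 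Nothing is missing relative to what the paper asserts.
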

%
There exists a similar defining set of equations for the average criterion, but it is omitted here due to its more technical aspects. One may refer to \cite{puterman_markov_1994} for a definition. 

We describe in Section~\ref{sec:algo} a few classical algorithms for computing optimal policies in finite and infinite horizon.
Among these algorithms, classical reinforcement-learning algorithms (\textit{e.g.} $Q$-learning introduced in \cite{watkins_Q-learning_1992}) make use of a related notion to the optimal-value function of an MDP: the {\em Q-function}.
%
\begin{defin}[$Q$-function, discounted criterion]\label{def:Qdiscount}
Let $\langle S, A, H, P, c \rangle$ be a given infinite-horizon MDP and $\pi\in\Pi$ be an arbitrary policy.
The $Q$-function, for the infinite-horizon discounted criterion, represents the expected cost of selecting action $a \in A$ in the current state $s \in S$ and then following the policy $\pi$ forever.
\begin{eqnarray*}
     Q_\gamma^{\pi}(s,a) &= & \mathbb{E}_s^{\pi} \left[c(s,a,S_1) + \sum_{t=1}^{\infty} \gamma^t c(S_t, A_t, S_{t+1}) \right] \\
     ~ &= & \sum_{s'\in S} P(s'|s,a)\left(c(s,a,s') + \gamma V_\gamma(\pi,s')\right).
\end{eqnarray*}
Then, if $V^*_\gamma$ is the optimal value function of the MDP for the discounted criterion, $Q^*_\gamma$, the optimal Q-function of the MDP is defined as:
\begin{equation*}
    Q_\gamma^*(s,a) = \sum_{s'\in S} P(s'|s,a)\left(c(s,a,s') + \gamma V_\gamma^*(s')\right),\ \forall (s,a)\in S\times A.
\end{equation*}
\end{defin}
%
In the finite-horizon discounted case, the Q function is defined as follows.
%
\begin{defin}[Q-function, total cost criterion]\label{def:Qtotal} For any $0\leq t < H$, the Q-function at time $t$ represents the expected cost of selecting action $a \in A$ in the current state $s \in S$ and then following policy $\pi$ up to horizon $H$.
    \begin{eqnarray*}
        Q_{\{t,H\}}^{\pi}(s, a) & = & \mathbb{E}^{\pi}_s\left[ c(s,a,S_1) + \sum_{p=t+1}^{H-1} c_p(S_p, A_p, S_{p+1}) + C(S_H)\right]\\
        ~ & = & \sum_{s'\in S} P(s'|s,a)\left(c(s,a,s')+V_{t+1}(\pi,s')\right).
    \end{eqnarray*}
Then, if $\{V*_t\}_{t=0..H}$ form an optimal value function, the optimal Q-function is:
    \begin{eqnarray*}
        Q_{\{t,H\}}^*(s, a)= \sum_{s'\in S} P(s'|s,a)\left(c(s,a,s')+V_{t+1}^*(s')\right).
    \end{eqnarray*}
\end{defin}
%
\subsubsection{Running medical example as a finite state and action MDP}
\label{ss:mini_mdp}
%
Let us consider a simplified model of a patient’s disease progression using an MDP with a finite state space. It is an adaptation to the MDP framework of the example presented in Sections \ref{subsubsec:medicPDMP}, \ref{subsubsec:medicPDsMP} and \ref{ex:impulse_control_pdmp}. 

The patient's state $s = (m, \zeta) \in S$ has two components. 
\begin{itemize}
    \item $m$ represents the patient's condition, where $m=0$ represents remission, $m=1$ indicates slow relapse, $m=2$ indicates aggressive relapse, and $m=3$ corresponds to death and is an absorbing state. 
    \item $\zeta$ corresponds to the blood marker value, assumed to be an integer, belonging to $\llbracket 0~;~ 40 \rrbracket$. 
\end{itemize}

An action $a$ corresponds to the treatment choice ($a=0$: no treatment, $a=1$: treatment). Without treatment, 
\begin{itemize}
    \item if the patient is in remission, the marker level stays at $\zeta'=\zeta=0$, the patient remains in a healthy condition ($m'=0$) or transitions to a slow ($m'=1$) or an aggressive relapse ($m'=2$) with probabilities $p_{m'}^0$;
    \item during a slow relapse the biomarker $\zeta$ increases by $1$ unit at each time step $\zeta'=\zeta+1$, if the biomarker value reaches 40, the patient dies, \textit{i.e.}, $m'=3$;
    \item during an aggressive relapse, the biomarker increases by $2$ units at each time step $\zeta'=\zeta+2$, if the biomarker value reaches 40, the patient dies, \textit{i.e.}, $m'=3$.
\end{itemize} 
 
When treatment is applied, 
\begin{itemize}
     \item if the patient is in remission, the marker level stays at $\zeta'=\zeta=0$, and no relapse is possible;
    \item the biomarker decreases by $1$ unit at each time step $\zeta'=\zeta-1$, regardless of whether the relapse is slow ($m=1$) or aggressive ($m=2)$, if the biomarker reaches $0$, the patient returns to remission, \textit{i.e.}, $m'=0$ and the marker value remains at $\zeta'=0$.
\end{itemize}

At each time step, a cost is incurred  based on the current state and the chosen action.
\begin{itemize}
    \item Relapses incur a cost, with aggressive relapses ($m=2$) having higher costs than slow relapses ($m=1$).
    \item Treatment ($a=1$) also incurs a cost.
    \item A terminal cost is assigned at the end of the process if the patient has died ($m=3$).
\end{itemize}
The follow-up process is scheduled over $H=160$ visits. 
The MDP associated to this example is described by
\begin{align*}
    S&=\{(0,0)\} \cup \{1,2\}\times\llbracket 0~;~39\rrbracket \cup \{(3,40)\}, \\
    A&=\{0,1\}, \\    
    H&=160, \\
    P&=\left\{\begin{aligned}
        P(s'|(0,0),0)&=p_0^0\1_{\{s'=(0,0)\}}+p_1^0\1_{\{s'=(1,0)\}}+p_2^0\1_{\{s'=(2,0)\}} \\
        P(s'|(0,0),1)&=\1_{\{s'=(0,0)\}} \\
        P(s'|(1,\zeta),0)&=\1_{\{s'=(1,\zeta+1)\}}\1_{\zeta<D-1} + \1_{\{s'=(3,D)\}}\1_{\zeta=D-1} \\
        P(s'|(1,\zeta),1)&=\1_{\{s'=(1,\zeta-1)\}} \1_{\zeta>1} + \1_{\{s'=(0,0)\}}\1_{\zeta=1} \\
        P(s'|(2,\zeta),0)&=\1_{\{s'=(2,\zeta+2)\}}\1_{\zeta<D-2} + \1_{\{s'=(3,D)\}}\1_{\zeta\geq D-2} \\
        P(s'|(2,\zeta),1)&=\1_{\{s'=(2,\zeta-1)\}} \1_{\zeta>1} + \1_{\{s'=(0,0)\}}\1_{\zeta=1} \\
        P(s'|(3,D),0)&= P(s'|(3,D),1) = \1_{\{s'=(3,D)\}}, \\
    \end{aligned} \right. \\
    c(s,a)&=2a + 2 \times \1_{m=1} + 3 \times \1_{m=2}, \\
    C(s)&=200 \times \1_{m=3}. 
\end{align*}
%
Note that, for this example, the cost function is time-independent. The numerical resolution of this MDP is provided in the accompanying notebook with the algorithm presented in Section~\ref{sub:BI}.
%
\subsubsection{Generalizations}
\label{ss:extension_mdp}
%
For simplicity, we defined MDPs with finite or countable state and action spaces and assumed that all actions are feasible in all states. These assumptions can be relaxed to model more complex scenarios. \\

\textbf{Constraints.}
%
In many applications, specific actions may be prohibited in certain states. For example, in the medical scenario of Section~\ref{ss:mini_mdp}, administering a treatment  ($a=1$) to a deceased patient ($m=3$) should not be allowed. There are two main approaches to handle such constraints. 
The first one is to assign a prohibitive cost ($+\infty$ for instance) to invalid state-action pairs, preventing them from being chosen by optimization algorithms. 
The second one is to explicit constraints through the addition of an element $K$ to the MDP definition: $\langle S, A, H, P, c\rangle$ becomes $\langle S, A, H, K, P, c\rangle$, where $K \subset S\times A$ is called the \emph{constraints set} and contains all the feasible state-action pairs. Its sections are denoted by $\{K(s), s\in S\}$, where $K(s)\subset A$ is the collection of all allowed actions in state $s$. This approach is explicit and intuitive for finite state and action spaces, requiring only that  $K(s)$ be non empty for all $s\in S$. In the case of countable action spaces, it is usually required that $K(s)$ be non empty and finite for all $s$, see \emph{e.g.} \cite[Section 1.2]{hernandez-lerma_discrete-time_1996} or \cite[Section 2.2]{hinderer_dynamic_2016}. Decision rules $\pi_t$ are then restricted to select actions only from $K(s)$ for a given state or history. Such decision rules are called \emph{admissible}, and \emph{admissible policies} are collections of admissible decision rules. \\

\textbf{General state or action spaces.}
MDPs can also be extended to general (uncountable) state and action spaces. While the core definitions remain similar (\emph{e.g.}, transition probabilities replaced by transition kernels), and in most frameworks some version of Theorem \ref{th:mdp-bellman} still holds as well as existence of optimal policies in suitable classes of policies, these extensions introduce significant mathematical challenges.

\begin{enumerate}
\item In order to properly define controlled trajectories, $S_t$ and $A_t$ from Algorithm \ref{algo:controledMDP} must be random variables, \emph{i.e.} measurable with respect to some suitably chosen $\sigma$-algebras. This first means that the state $S$ and action $A$ spaces must be endowed with $\sigma$-algebras and thus have some minimal topological structure (typically Polish spaces). This also means that \emph{admissible} decision rules must also be measurable, and that the constraints set $K$ must contain the graph of measurable mapping from $S\times A$ onto $A$. Finally, the cost functions also need to be measurable. There exists a wide literature on easy sufficient conditions to ensure all these properties, for instance if $S$ is a subset of some $\mathbb{R}^d$ endowed with its Borel $\sigma$-algebra, and all $K(s)$ are intervals with continuous bounds in $s$, see \emph{e.g.} \cite[chapter 16]{hinderer_dynamic_2016}.

\item Unbounded cost functions in general spaces may lead to undefined expectations. This issue is resolved by assuming cost functions and transition kernels are bounded by a non-negative \emph{bounding function} and using weighted norms, see \cite[Chapter 16]{hinderer_dynamic_2016}).

\item Finally, existence of optimal policies requires some minimal regularity on the cost functions and transition kernel (typically lower or upper semi-continuity) and compactness of the set of policies (hence the use of stochastic decision rules). The interested reader is referred to \cite{hernandez-lerma_discrete-time_1996,bauerle_markov_2011,hinderer_dynamic_2016}
for further details.\\
\end{enumerate}

\textbf{Semi-Markov MDPs.}
As detailed in Definition~\ref{def:mdp}, MDPs rely on the transition probability $P$ which verifies the Markov property: $P(s'|s_0,a_0,\dots,s_t,a_t) = P(s'|s_t, a_t)$. Similarly to the PDMP framework, it is easy to encompass semi-Markov dynamics (where the transition to a new state may depend on the time spent in the current state) by augmenting the dimension of the state space to include time in the state. 
In the literature,  the term Semi-Markov Decision Process is also  used to describe frameworks where decisions do not happen at regular intervals of time, but instead where selected actions may lead to a random amount of time before another action can be taken (this is also sometimes referred to as continuous-time MDPs). These aspects will not be discussed here, the interested reader might refer to \cite{guo2009continuous} for more details, and to \cite{thesisKohar} for a formulation as an augmented MDP. 
%
\subsection{Partially Observed Markov Decision Processes}\label{sec:pomdp}
Partially-Observed MDPs (POMDPs) are an extension of MDPs that were first introduced by \cite{astrom_optimal_1965} within the context of control theory. In this work, we adopt a more operational perspective of POMDPs, proposed by \cite{kaelbling_planning_1998}, which focuses on practical algorithms for computing optimal strategies in settings with finite state, observation and action spaces. We start with a definition of POMDPs in Section \ref{subsec:pomdp}. Then, we introduce an equivalent version as an MDP on an enlarged state space in Section \ref{subsec:belief_pomdp}. Finally we give an example of POMDP in our medical example context in Section \ref{subsubsec:mini_pomdp}.
%
\subsubsection{POMDP definition}
\label{subsec:pomdp}
%
Here we focus on finite-horizon POMDPs.
%
\begin{defin} [POMDP]
    A finite-horizon POMDP is a tuple $\langle S, A, H, P, \Omega, O, c,\omega_0 \rangle$.
    In addition to the components of a standard MDP introduced in Definition~\ref{def:mdp},  POMDPs introduce 
    \begin{itemize}
        \item a finite set $\Omega$ representing the possible observations of the system; 
        \item an {\em observation function}
\begin{align*}
    O : S \times A \times \Omega \rightarrow [0,1],
\end{align*}
where $O(s',a,\omega)$ gives the probability of observing $\omega\in\Omega$,  after taking action $a$, resulting in successor state $s'$;
    \item and an initial observation $\omega_0$, describing the initial imperfect knowledge we have about the initial state.
\end{itemize}
\end{defin}
%
The construction of a POMDP trajectory is illustrated on Figure \ref{fig:pomdp}.
In a POMDP, the objective is also to compute a strategy which minimizes the expected sum of future costs. However, a key difference is that, at each time step $t$, the true state of the system $s_t$ is not directly observed. 
Thus POMDP histories only include actions and observations:
\begin{align*}
h_t = \langle a_0,\omega_1,\ldots,a_{t-1},\omega_t \rangle.
\end{align*}
%
\begin{figure}[tp]
    \centering
\begin{tikzpicture}[scale=0.8,
    node distance = 8mm and 8mm,
    state/.style={circle, draw, minimum size=1cm},
    decision/.style={draw, diamond, align=center, minimum size=1cm},
    reward/.style={draw, minimum size=1cm},
    every edge/.style = {draw, -{Stealth[scale=1.2]}}, 
    every edge quotes/.append style = {auto, inner sep=2pt, font=\footnotesize}
    ]
    \node (s0)  [state] {$s_0$};
    \node (s1)  [state, right= 1.9cm of s0]   {$s_1$};
    \node (s2)  [state, right= 1.9cm of s1]   {$s_2$};
    \node (s3)  [state, right= 1.9cm of s2]   {$s_3$};
    \node (o0)  [state, above= 1.6cm of s0]   {$\omega_0$};
    \node (o1)  [state, above= 1.6cm of s1]   {$\omega_1$};
    \node (o2)  [state, above= 1.6cm of s2]   {$\omega_2$};
    \node (a0)  [decision, below right =of o0]   {$a_0$};
    \node (a1)  [decision, below right =of o1]   {$a_1$};
    \node (a2)  [decision, below right =of o2]   {$a_2$};
    \node (r0)  [reward, below =1.6cm of s1]   {$c(s_0,a_0,s_1)$};
    \node (r1)  [reward, below =1.6cm of s2]   {$c(s_1,a_1,s_2)$};
    \node (r2)  [reward, below =1.6cm of s3]   {$c(s_2,a_2,s_3)$};
    \node (r3)  [reward, dashed, below right =of s3]   {$C(s_3)$};

    \path   (s0)    edge [] (s1) 
                    edge [] (o0)
                    edge [] (r0)
            (o0)    edge [] (a0) 
                    edge [bend right=5] (a1) 
                    edge [bend left=65] (a2) 
            (o1)    edge [] (a1) 
                    edge [bend right=5] (a2) 
            (o2)    edge [] (a2) 
            (s1)    edge [] (s2)
                    edge [] (r0)
                    edge [] (o1)
                    edge [] (r1)
            (s2)    edge [] (s3) 
                    edge [] (r1)
                    edge [] (o2)
                    edge [] (r2)

            (a0)    edge [] (r0) 
                    edge [] (s1)
                    edge [] (o1)
            (a1)    edge [] (r1) 
                    edge [] (s2)
                    edge [] (o2)
            (a2)    edge [] (r2) 
                    edge [] (s3)

            (s3)    edge [] (r2)
                    edge [dashed] (r3)
                    ;
\end{tikzpicture}
    \caption{{\bf Graphical representation of a Partially Observable Markov Decision Process (POMDP) with finite horizon $H=3$.} Circles denote hidden states and available observations, diamonds represent actions and rectangles indicate costs associated with state-action-state transitions. The dashed rectangle depicts the terminal cost.}
    \label{fig:pomdp}
\end{figure}
%
Similarly to solving an MDP, solving a POMDP consists of computing a {\em policy} assigning an action to every possible state of current knowledge of the POMDP.
%
\begin{defin}[POMDP history-dependent decision rule]
    Consider a POMDP $\langle S,A,H,P,\Omega,O,c,\omega_0 \rangle$. 
    A POMDP history-dependent decision rule  $\pi_t:\mathcal{H}_t\to \Delta(A)$ assigns a probability distribution over actions in $A$ to every POMDP history $h_t\in\mathcal{H}_t$. 
\end{defin}
%
\begin{defin}[POMDP policies]
    Consider a POMDP $\langle S,A,H,P,\Omega,O,c,\omega_0 \rangle$. 
    A POMDP history-dependent policy is a list of history-dependent decision rules $\pi=\left(\pi_t\right)_{0\leq t < H}$. Denote by $\Pi^o$ the set of all history-dependent policies.
\end{defin}
%
Note that in practice, many algorithms seek to obtain  \emph{deterministic} history-dependent decision policies, \textit{i.e.} lists of deterministic history-dependent decision rules $\pi_t:\mathcal{H}_t\to A$ which assign an action $a\in A$ to every POMDP history $h_t\in\mathcal{H}_t$. 

A POMDP $\langle S,A,H,P,\Omega,O,c,\omega_0 \rangle$ can also be seen as an MDP $\langle \overline{S},A,H,\overline{P},c\rangle$ by setting
\begin{align*}
    \overline{S} &= S\times\Omega,\\
    \overline{P}(\overline{s}'=(s',\omega')|\overline{s}=(s,\omega),a) &= P(s'|s,a)O(s',a,\omega'),
\end{align*}
for all $\overline{s}'=(s',\omega')$ and $\overline{s}=(s,\omega)$ in $\overline{S}$. In this setting the term \emph{partially observed} is more obvious. The main difference between this MPD and the POMDP setting is that in the POMPD, only policies in $\Pi^o$ are considered, and the initial value $S_0$ is randomly drawn from an initial \emph{belief} $b_0$ which is a probability distribution over $S$: $b_0(s)=\mathbb{P}(S_0=s|\omega_0$) for all $s\in S$.
This setting allows to see $\Pi^o$ as a subset of $\Pi$ for the MDP $\langle \overline{S},A,H,\overline{P},c\rangle$, and to directly use definitions \ref{disexpcost} to \ref{totexpcost} to define the cost for a POMPD policy. For any criterion $V\in\{V_\gamma, V_{\text{av}}, V_H\}$, one has
\begin{align*}
    V(\pi,\omega_0)=\sum_{s\in S}b_0(s)V(\pi,s).
\end{align*}
%
\begin{defin}[Optimal policy]
    Consider the POMDP $\langle S, A, H, P, \Omega,O,c,\omega_0 \rangle$. An optimal policy $\pi^*$ of the POMDP, with respect to criterion $V\in\{V_\gamma, V_{\text{av}}, V_H\}$ given in definitions~\ref{disexpcost} to~\ref{totexpcost} is a history-dependent policy $\pi^* \in \Pi$ such that:
    $$
    V(\pi^*,s)\leq V(\pi,s),\ \forall \pi\in\Pi^o, \forall s\in S.
    $$
\end{defin}
%
Note that here, the expected cost $V\in\{V_\gamma, V_{\text{av}}, V_H\}$ depends on the unobserved states of the process, but must be optimized with respect only to the history.
%
\subsubsection{Belief MDP}
\label{subsec:belief_pomdp}
%
In order to evaluate a POMDP policy and compute an optimal policy, it is useful to notice that a POMDP can be modeled as another particular form of MDP over {\em belief states}. 
The knowledge incorporated in the history $h_t$ can be encapsulated in a {\em belief state}, $b_t$ which is a probability distribution over $S$ and represents an estimation of the (unobserved) current state.
This belief state evolves from the initial belief $b_0$, based on the POMDP history $h_t\in\mathcal{H}_t$. 
When $h_t$ is completed with a new pair of action-observation $(a_t,\omega_{t+1})$, the belief state $b_t$ is updated as follows. Considering that in a given belief state $b_t$, an action $a_t$ is applied and leads to a new observation $\omega_{t+1}$ then
$$
b_{t+1} = \tau\left(b_t,a_t,\omega_{t+1}\right).
$$
where $\tau$ is a deterministic belief transition function. This function is computed through Bayesian updating:
\begin{eqnarray}
b_{t+1}(s') & = &\mathbb{P}(S_{t+1}=s'|b_t,a_t,\omega_{t+1}) \nonumber\\
 ~ & = &\frac{O(s',a_t,\omega_{t+1})\times \mathbb{P}(S_{t+1}=s'|b_t,a_t)}{\sum_{s"\in S}O(s",a_t,\omega_{t+1})\times \mathbb{P}(S_{t+1}=s"|b_t,a_t)}\nonumber\\
  ~ & = &\frac{O(s',a_t,\omega_{t+1})\times \sum_{s\in S}b_t(s)P(s'|s,a_t)}{\sum_{s"\in S} O(s',a_t,\omega_{t+1}) \times\sum_{s\in S}b_t(s)P(s"|s,a_t)}\nonumber\\
   ~  & \propto & O(s',a_t,\omega_{t+1})\times \sum_{s\in S}b_t(s)P(s'|s,a_t).\nonumber
\end{eqnarray}
%
A way to deal with partial observability in POMDP is to consider a corresponding {\em belief MDP}, which (continuous, multi-dimensional) state space is made of the set of possible belief states of the MDP. 
The correspondence between POMDP and belief MDP is made explicit in the following definition.
%
\begin{defin} [belief MDP]
A finite-horizon POMDP $\langle S, A, H, P, \Omega,  O, c,\omega_0 \rangle$ can be viewed as a belief MDP, that is an MDP $\langle {\mathcal B}, A, H, Pr, \rho \rangle$ on belief states, where:
\begin{itemize}
\item ${\mathcal B}\subset \Delta(S)$ is the continuous set of belief states,
\item $A$ is the action space, unchanged,
\item ${\mathcal P}$ is the transition function 
\begin{align*}
    \mathcal{P}(b_{t+1}|b_t,a_t)=\sum_{\omega'\in \Omega} \1_{b_{t+1}=\tau(b_t,a_t,\omega')} O(s',a_t,\omega') \sum_{s\in S} b_t(s) P(s'|s,a_t).
\end{align*}
\item $\rho(b,a,b')$ is the cost associated to action $a$ applied in belief state $b$ resulting to $b'$ 
\begin{align*}
\rho(b,a,b') = \sum_{s\in S} b(s)\sum_{s'\in S}b'(s')c(s,a,s'). 
\end{align*}
    In essence, $\rho$ corresponds to the expectation of the running cost over the states following distributions given by the beliefs.
\end{itemize}
\end{defin}
%
The {\em value} of a history-dependent (stochastic or deterministic) policy can now be defined as follows.
%
\begin{defin}[POMDP/belief MDP policy value]
Consider a finite-horizon POMDP $\langle S, A, H, P, \Omega,  O, c,\omega_0 \rangle$ and the corresponding belief MDP $\langle {\mathcal B}, A, H, \mathcal{P}, \rho \rangle$, as well as a history-dependent policy $\pi$. The value (expected cost) of $\pi$ is defined as
    \begin{equation*}
        V_H(\pi,b_0) = E_{b_0}^\pi \left[ \sum_{t=0}^{H-1} \rho(b_t,\pi_t(h_t),b_{t+1})\right],
    \end{equation*}
    where
    $$
    \rho(b_t,\pi_t(h_t),b_{t+1}) =  \sum_{a\in A} \rho(b_t,a,b_{t+1})\pi_t(h_t)(a).
    $$
\end{defin}
%
It is well known (see \textit{e.g.} \cite{astrom_optimal_1965}) that the belief MDP policy value and POMDP policy value are equal, and that optimal policies are history-dependent, deterministic in general for the POMDP version, and Markovian, deterministic for the belief MDP, which state space is finite (albeit of exponential size in the horizon) in the case of finite horizon problems \cite{kaelbling_planning_1998}. 
%
\subsubsection{Running medical example as a POMDP}
\label{subsubsec:mini_pomdp}
%
Building on the MDP described in~\ref{ss:mini_mdp}, we now consider that the patient's overall condition is hidden, except when the patient has died ($m=3$), and that the biomarker is observed either through discrete or continuous noise. \\

\textbf{Discrete Observation Space.} 
Consider that the biomarker is observed through discrete noise 
\begin{align*}
    y = \zeta + \epsilon \text{ with } \epsilon \sim \mathcal{U}\{-2,-1,0,1,2\},
\end{align*} 
where $y$ denotes the noisy biomarker.
Let $z = \1_{(m=3)}$ be the death indicator. 
At a given time $t$, the observation of a patient's condition is $\omega = (y,z)$, and we assume that the process starts in state $s_0=(0,0)$. 
The POMDP associated with this example builds on the MDP from Section~\ref{ss:mini_mdp} and is defined by adding
\begin{align*}
    \Omega &= \llbracket -2, 42 \rrbracket \times \{0,1\}, \\
    O(s',a,\omega=(y,z)) &= \left\{ \begin{aligned}\frac{1}{5} \1_{y\in \llbracket \zeta'-2; \zeta' + 2 \rrbracket; z=0}  &\text{ if } m'\neq 3,\\ 
     \frac{1}{5} \1_{y\in \llbracket 38; 42 \rrbracket; z=1} &\text{ if } m'=3,
    \end{aligned} \right. \\
    b_0 &= \delta_{(0,0)}.
\end{align*}

\textbf{Continuous Observation Space.} 
In practice, partial observation typically involves continuous noise, and a biomarker observation is more realistically represented as
\begin{align*}
    y = \zeta + \epsilon \text{ with } \epsilon \sim \mathcal{U}[-2,2]
\end{align*} 
where $y$ still denotes the noisy biomarker. 
With a continuous observation space, the POMDP associated with this example builds on the MDP from Section~\ref{ss:mini_mdp} and is defined by adding
\begin{align*}
    \Omega &= [-2, 42] \times \{0,1\}, \\
    o(s',a,\omega=(y,z)) &= \left\{ \begin{aligned}\frac{1}{4} \1_{y\in [\zeta'-2; \zeta' + 2]; z=0}  &\text{ if } m'\neq 3,\\ 
     \frac{1}{4} \1_{y\in [38; 42]; z=1} &\text{ if } m'=3,
    \end{aligned} \right. \\
    b_0 &= \delta_{(0,0)},
\end{align*}
where $o$ is the density of the observation function $O$ with respect to the Lebesgue measure on $\mathbb{R}$.

Simulations of both the POMDP and the equivalent belief MDP for discrete observation space are provided in the accompanying notebook.
An example of resolution is also presented with the python POMDP library \cite{zheng2020pomdp_py} that integrates the POMDP-solvers presented in Section~\ref{sec:algo}. 
%
\subsection{Bayesian approach to handle MDPs with unknown model} 
\label{sec:bamdp}
%
In this Section, we describe a Bayesian modeling approach of MDPs and POMDPs where the knowledge about the transition and cost functions is incomplete. 

Duff \textit{et al.} \cite{duff_optimal_2002} initially proposed a Bayesian approach to represent and solve finite horizon MDPs with an unknown transition model, where transition probabilities of the MDP are modeled via a priori Dirichlet distributions. An optimal policy to solve the MDP while taking into account the uncertainty about the model is proposed. In essence, Duff shows that this optimal planning problem can be modeled as an MDP over a finite state space, where the state includes {\em hyperparameters} and is exponentially larger than that of the initial MDP. The implementation aspects of this framework are studied in
\cite{castro_using_2007}  who provide empirically efficient algorithms to solve Bayes-Adaptive Markov Decision Processes (BAMDPs). An excellent review of algorithmic challenges and state of the art of the domain is given in \cite{arumugam_planning_2022}. Here we provide an overview of the BAMDP framework. We start with the definition of fully observed models in Section \ref{sub:bamdp}, and illustrate it with our running medical example in Section \ref{ss:mini_bamdp}. Then we turn to the partially observed case in Section \ref{sub:bapomdp} and its medical example in Section \ref{subsubsec:mini_bapomdp}.
%
\subsubsection{Bayes-Adaptive MDP}
\label{sub:bamdp}
%
In this Section we follow the didactic exposition of BAMDP and most of the notations of \cite{ross_bayesian_2011}. 
An archetypal BAMDP problem is when we face a (finite-horizon) MDP $\langle S, A, H, P, c \rangle$, where $S$ and $A$ are finite, the model is initially known, {\em except} for the transition probabilities $P(\cdot|s, a), \forall (s, a) \in S\times A$.

BAMDPs are embedded in a Bayesian framework, where we assume that we have a {\em prior} distribution over $P(\cdot|s, a)$. As part of the acquisition of the new state $s'$, the decision-maker updates the {\em posterior} distribution with the information provided by the transition $(s, a, s')$ when it is encountered. 

Let $S=\{s^1,\ldots,s^n\}$ be the finite state-space and let us define 
$$p^{s,a}=(p^{s,a}_1,\ldots,p^{s,a}_n),$$ 
where $p^{s,a}_i = P(s^i|s, a)$. The vector $p^{s,a}$ can be seen as the parameter vector of a {\em multinomial distribution}.
In the Bayesian context, the {\em conjugate} of a multinomial distribution is a {\em Dirichlet} distribution, $Dirichlet(\theta)$, parameterized by a count vector $\theta = (\theta_1,\ldots,\theta_n)\in\mathbb{N}^n$. 
The multinomial/Dirichlet framework makes it easy to perform Bayesian updates of the knowledge over $p$ when transitions are experienced. 
Indeed, assume that from a state $s$ and action $a$, we observe $K$ transitions from the true distribution $p^{s,a}$, and let $k=(k_1,\ldots,k_n)$ represent the number of transitions to $s^1, \ldots, s^n$ respectively (with $K = \sum_i^nk_i$).
Then, a Bayesian update of a $Dirichlet(\theta_1,\ldots,\theta_n)$ prior on $p$ leads to the posterior distribution over $p'$, equal to
\begin{equation*}
p' = (p'_1,\ldots,p'_n) \sim Dirichlet(\theta_1+k_1,\ldots,\theta_n+k_n).
\end{equation*}
This form of update is particularly simple. Thus, most works, starting with \cite{duff_optimal_2002}, make the assumption of independent Dirichlet priors over independent multinomial distributions $\{p^{s,a}\}_{(s,a)\in S\times A}$. \\

The fundamental principle of Bayes-Adaptive MDP is to incorporate model uncertainty into planning, by considering an {\em augmented MDP}, which states $\tilde{s}=(s,\theta)$ include the hyperparameters $\theta$ which can be considered as a memory of past experiences (a sufficient statistics). As in classical MDP, solving a BAMDP requires identifying the optimal decision for all possible states and hyperparameter values. 
Formally, a BAMDP is defined as follows.
%
\begin{defin}[BAMDP]
    A Bayes-Adaptive MDP $\langle\tilde{S}, \tilde{A}, H, \tilde{P}, \tilde{c}\rangle$ is an MDP that can be defined from a MDP $\langle S, A, H, P, c\rangle$ with  ill-known transition matrix $P$ as
    \begin{itemize}
        \item $\tilde{S} = S \times \Theta$, where $\Theta = \{(\theta_{s'}^{s,a})_{s, s'\in S, a\in A}\} \subset \mathbb{N}^{|S|^2\times|A|}$. States $\tilde{s}=(s,\theta)$ are usually called {\em hyperstates}. Usually, an initial belief over $P$ is modeled by specifying an initial set of counts $\{\hat{\theta}_{s'}^{s,a}\}$. This initial set of counts will be assumed to verify $\sum_{s'\in S} \hat{\theta}_{s'}^{s,a}>0, \forall s,a$, in order for the initial Dirichlet distributions to be well-defined.  
        \item $\tilde{A} = A$. The action space is unchanged.
        \item $\tilde{P}$ is defined as follows: $\tilde{P}(s', \theta'|s, a, \theta)$ is the probability of transitioning from hyperstate $(s,\theta)$ to hyperstate $(s', \theta')$ under action $a$. This writes, by the chain rule:
        $$
        \tilde{P}(s', \theta'|s, a, \theta) = \mathbb{P}(S_{n+1}=s'|S_n=s,A_n=a,\theta_n=\theta)\mathbb{P}(\theta_{n+1}=\theta'|S_n=s,A_n=a,S_{n+1}=s'\theta_n=\theta),
        $$
        where $\mathbb{P}(S_{n+1}=s'|S_n=s,A_n=a,\theta_n=\theta)=P_\theta(s'|s,a)$ and $P_\theta$ is the transition matrix of the original MDP with parameters $p^{s,a}$ drawn from $Dirichlet(\theta)$, and $\mathbb{P}(\theta_{n+1}=\theta'|S_n=s,A_n=a,S_{n+1}=s'\theta_n=\theta)$ is the (deterministic) update rule described above for the hyperparameter:
        \begin{equation}\label{eq:update-hyper-BAMDP}
            \mathbb{P}(\theta_{n+1}=\theta'|S_n=s,A_n=a,S_{n+1}=s'\theta_n=\theta) = 1 \mbox{ iff } \theta'^{s,a}_{s'} = \theta_{s'}^{s,a}+1 \mbox{ and } \theta'^{\hat{s},\hat{a}}_{\hat{s}'} = \theta_{\hat{s}'}^{\hat{s},\hat{a}}, \forall (\hat{s}, \hat{a}, \hat{s}')\neq (s, a, s').
        \end{equation}
        \item $\tilde{c}(s, a, s', \theta) = c(s, a, s')$, since we assume that the cost function is known.
    \end{itemize}
\end{defin}
%
The BAMDP $\langle\tilde{S}, \tilde{A}, H, \tilde{P}, \tilde{c}\rangle$ is a classical MDP with finite state space if $H$ is finite.
In this case, the size of $\Theta$ is bounded by $H^{|S|^2|A|}$ and the size of $\tilde{S}$ by $|S|\times H^{|S|^2|A|}$. Note that, in a BAMDP, policies $\pi: \tilde{S}\to A$ choose actions as functions of the current state, $s$ {\em and} of the current knowledge about the unknown MDP transition function, summarized by the parameter vector $\theta$. 

Let us make clear that the Bayes-Adaptive framework handles parameters' uncertainty in two phases.
\begin{itemize}
    \item The first phase is completely off-line and does not require to observe any transition of the MDP. In this phase, we use the prior knowledge about transitions and the Multinomial/Dirichlet conjugacy in order to build a (BA)-MDP with augmented state space $\tilde{S}=S\times\Theta$, as shown in Figure~\ref{fig:bamdp_state_space}. 
    This MDP can be solved using any MDP solution algorithm (such as those that will be mentioned in the following Sections).
    \item In the second phase, the decision-maker applies the optimal  BAMDP policy $\tilde{\pi}: S\times\Theta \to A$ provided by the first phase to the \emph{real-life} problem, in which only transitions $(s,a,s')$ drawn form the original unknown matrices $P$ are observed as shown in Figure~\ref{fig:bamdp}. Retrieving the hyperparameter values needed to select optimal actions is straightforward since $\theta$ and $\theta'$ are transition counts and the transition $(s,a,s',\theta)\rightarrow \theta'$ is actually deterministic, as seen in Equation (\ref{eq:update-hyper-BAMDP}).
\end{itemize}
%
\begin{figure}[tp]
    \centering
    \resizebox{0.9\textwidth}{!}{
   \begin{tikzpicture}[
    state/.style={align=center, minimum width=3.5cm},
    decision/.style={draw, diamond, align=center, minimum size=1cm},
    arrow/.style={-stealth, thick},
    dashedarrow/.style={dashed, -stealth, thick}
]

\node[state] (s0) {$\Tilde{S}_{t} = \left(1,  \
\begin{array}{c}
\begin{bmatrix} 
\theta^{1\alpha}_1 & \theta^{1\alpha}_2 \\ 
\theta^{2\alpha}_1 & \theta^{2\alpha}_2
\end{bmatrix},
\begin{bmatrix}
\theta^{1\beta}_1 & \theta^{1\beta}_2 \\ 
\theta^{2\beta}_1 & \theta^{2\beta}_2 
\end{bmatrix}
\end{array}
\right)$};

\node[decision] at ([xshift=-3cm, yshift=-3cm]s0) (a1) {$a = \alpha$};
\node[decision] at ([xshift=3cm, yshift=-3cm]s0) (a2) {$a = \beta$};

\draw[arrow] (s0) -- (a1);
\draw[arrow] (s0) -- (a2);

\node[state] at ([xshift=-3.5cm, yshift=-3cm]a1) (s1a1) {$\Tilde{S}_{t+1} = \left(1,  \
\begin{array}{c}
\begin{bmatrix} 
\theta^{1\alpha}_1+1 & \theta^{1\alpha}_2 \\ 
\theta^{2\alpha}_1 & \theta^{2\alpha}_2
\end{bmatrix}, \\[0.3em]
\begin{bmatrix}
\theta^{1\beta}_1 & \theta^{1\beta}_2 \\ 
\theta^{2\beta}_1 & \theta^{2\beta}_2 
\end{bmatrix}
\end{array}
\right)$};

\node[state] at ([xshift=0.9cm, yshift=-3cm]a1) (s2a1) {$\left(2,  \
\begin{array}{c}
\begin{bmatrix} 
\theta^{1\alpha}_1 & \theta^{1\alpha}_2+1 \\ 
\theta^{2\alpha}_1 & \theta^{2\alpha}_2
\end{bmatrix}, \\[0.3em]
\begin{bmatrix}
\theta^{1\beta}_1 & \theta^{1\beta}_2 \\ 
\theta^{2\beta}_1 & \theta^{2\beta}_2 
\end{bmatrix}
\end{array}
\right)$};

\node[state] at ([xshift=-0.9cm, yshift=-3cm]a2) (s1a2) {$\left(1,  \
\begin{array}{c}
\begin{bmatrix} 
\theta^{1\alpha}_1 & \theta^{1\alpha}_2 \\ 
\theta^{2\alpha}_1 & \theta^{2\alpha}_2
\end{bmatrix}, \\[0.3em]
\begin{bmatrix}
\theta^{1\beta}_1+1 & \theta^{1\beta}_2 \\ 
\theta^{2\beta}_1 & \theta^{2\beta}_2 
\end{bmatrix}
\end{array}
\right)$};


\node[state] at ([xshift=3.5cm, yshift=-3cm]a2) (s2a2) {
$\left(2,\ 
\begin{array}{c}
\begin{bmatrix} 
\theta^{1\alpha}_1 & \theta^{1\alpha}_2 \\ 
\theta^{2\alpha}_1 & \theta^{2\alpha}_2
\end{bmatrix}, \\[0.3em]
\begin{bmatrix}
\theta^{1\beta}_1 & \theta^{1\beta}_2+1 \\ 
\theta^{2\beta}_1 & \theta^{2\beta}_2 
\end{bmatrix}
\end{array}
\right)$
};

\draw[dashedarrow] (a1) -- node[midway, left,xshift=-2mm] {$\frac{\theta^{1\alpha}_1}{\theta^{1\alpha}_1+\theta^{1\alpha}_2}$}  (s1a1);
\draw[dashedarrow] (a1) -- node[midway, right] {$\frac{\theta^{1\alpha}_2}{\theta^{1\alpha}_1+\theta^{1\alpha}_2}$} (s2a1);
\draw[dashedarrow] (a2) -- node[midway, left] {$\frac{\theta^{1\beta}_1}{\theta^{1\beta}_1+\theta^{1\beta}_2}$} (s1a2);
\draw[dashedarrow] (a2) -- node[midway, right, xshift=2mm] {$\frac{\theta^{1\beta}_2}{\theta^{1\beta}_1+\theta^{1\beta}_2}$} (s2a2);

\end{tikzpicture} 
    }
    \caption{{\bf Evolution of the state space of a BAMDP over a single time step.} This figure illustrates a specific example with two possible states $S=\{1,2\}$ and two possible actions $A=\{\alpha,\beta\}$.  Diamonds represent actions. The parameters $\theta_{s'}^{s,a}$ correspond to the transition counts for each $(s, a, s')$ triplet. The dashed arrows indicate transition probability to hyperstates, defined according to the count matrix and the Multinomial/Dirichlet conjugacy. Here, only the case of $s_t = 1$ is shown of the Figure, but the case of $s_t = 2$ is similar.}
    \label{fig:bamdp_state_space}
\end{figure}
%
\begin{figure}[tp]
    \centering
\begin{tikzpicture}[scale=0.8,
    node distance = 10mm and 15mm,
    state/.style={circle, draw, minimum size=1cm},
    decision/.style={draw, diamond, align=center, minimum size=1cm},
    reward/.style={draw, minimum size=1cm},
    every edge/.style = {draw, -{Stealth[scale=1.2]}}, 
    every edge quotes/.append style = {auto, inner sep=2pt, font=\footnotesize}
    ]
    \node (s0)  [state] {$s_0$};
    \node (s1)  [state, right=3cm of s0]   {$s_1$};
    \node (s2)  [state, right=3cm of s1]   {$s_2$};
    \node (s3)  [state, right=3cm of s2]   {$s_3$};
    \node (theta0)  [state, above=of s0]   {$\theta_0$};
    \node (theta1)  [state, above=of s1]   {$\theta_1$};
    \node (theta2)  [state, above=of s2]   {$\theta_2$};
    \node (theta3)  [state, above=of s3]   {$\theta_3$};
    \node (a0)  [decision, below right =of s0]   {$a_0$};
    \node (a1)  [decision, below right =of s1]   {$a_1$};
    \node (a2)  [decision, below right =of s2]   {$a_2$};
    \node (r0)  [reward, below =3cm of s1]   {$c_0(s_0,a_0,s_1)$};
    \node (r1)  [reward, below =3cm of s2]   {$c_1(s_1,a_1,s_2)$};
    \node (r2)  [reward, below =3cm of s3]   {$c_2(s_2,a_2,s_3)$};
    \node (r3)  [reward, dashed, below right =of s3]   {$C(s_3)$};

    \path   (s0)    edge [] (s1) 
                    edge [] (a0)
                    edge [bend left=-25] (r0)
                    edge [] (theta1)
            (s1)    edge [] (s2)
                    edge [] (a1)
                    edge [] (r0)
                    edge [] (theta1)
                    edge [] (theta2)
                    
                    edge [bend left=-25] (r1)
            (s2)    edge [] (s3) 
                    edge [] (a2)
                    edge [] (r1)
                    edge [] (theta2)
                    edge [] (theta3)
                    edge [bend left=-25] (r2)
            (theta0)    edge [] (theta1)
                        edge [] (s1)
                        edge [] (a0)
            (theta1)    edge [] (theta2)
                        edge [] (s2)
                        edge [] (a1)
            (theta2)    edge [] (theta3)
                        edge [] (s3)
                        edge [] (a2)
            (a0)    edge [] (s1)
                    edge [] (r0)
                    edge [] (theta1)
            (a1)    edge [] (s2)
                    edge [] (r1) 
                    edge [] (theta2)
            (a2)    edge [] (r2) 
                    edge [] (s3)
                    edge [] (theta3)
            (s3)    edge [] (r2)
                    edge [] (theta3)
                    edge [dashed] (r3)
                    ;

    \draw[dashed, rounded corners] 
        ([shift={(-0.5,0.5)}]theta0.north west) -- 
        ([shift={(0.5,0.5)}]theta0.north east) --
        ([shift={(0.5,-0.5)}]s0.south east) --
        ([shift={(-0.5,-0.5)}]s0.south west) -- cycle;
    \node[above=0.1cm] at ([shift={(0,0.5)}]theta0.north) {$\Tilde{s}_0$};

    \draw[dashed, rounded corners] 
        ([shift={(-0.5,0.5)}]theta1.north west) -- 
        ([shift={(0.5,0.5)}]theta1.north east) --
        ([shift={(0.5,-0.5)}]s1.south east) --
        ([shift={(-0.5,-0.5)}]s1.south west) -- cycle;
    \node[above=0.1cm] at ([shift={(0,0.5)}]theta1.north) {$\Tilde{s}_1$};
    
    \draw[dashed, rounded corners] 
        ([shift={(-0.5,0.5)}]theta2.north west) -- 
        ([shift={(0.5,0.5)}]theta2.north east) --
        ([shift={(0.5,-0.5)}]s2.south east) --
        ([shift={(-0.5,-0.5)}]s2.south west) -- cycle;
    \node[above=0.1cm] at ([shift={(0,0.5)}]theta2.north) {$\Tilde{s}_2$};
    
    \draw[dashed, rounded corners] 
        ([shift={(-0.5,0.5)}]theta3.north west) -- 
        ([shift={(0.5,0.5)}]theta3.north east) --
        ([shift={(0.5,-0.5)}]s3.south east) --
        ([shift={(-0.5,-0.5)}]s3.south west) -- cycle;
    \node[above=0.1cm] at ([shift={(0,0.5)}]theta3.north) {$\Tilde{s}_3$};

\end{tikzpicture}
    \caption{{\bf Graphical representation of BAMDP with finite horizon $H=3$.} Circles denote states and hyperparameters, dashed rounded rectangles denote hyperstates, diamonds represent actions and rectangles indicate costs associated with state-action-state transitions. The dashed rectangle depicts the terminal cost.}
    \label{fig:bamdp}
\end{figure}

BAMDPs can be extended to the case where the cost function is also unknown, but we will leave this case aside. Interested readers may refer to \cite{duff_optimal_2002} for an introduction. In \cite{arumugam_planning_2022}, the authors provide an overview of the literature on algorithmic approaches to solve BAMDP, as well as a complexity analysis of the full problem. 
Among the algorithms, we will briefly mention in Section \ref{sub:MBRL} a Reinforcement Learning approach to solve BAMDP, based on Monte-Carlo planning.
%
\subsubsection{Running medical example as a BAMDP}
\label{ss:mini_bamdp}
%
Going back to the MDP medical example presented in Section~\ref{ss:mini_mdp}, we now suppose that the parameters $p_{0}^0$, $p_{1}^0$ and $p_{2}^0$ are unknown, \textit{i.e.} that the probability transitions from a health condition state $(0,0)$ to any of the states $(0,0)$, $(1,0)$ or $(2,0)$ in the absence of treatment are unknown. Recall that all other transitions are deterministic.

To model the uncertainty on these transitions corresponding to $P(\cdot|s=(0,0),a=0)$, we use the standard Multinomial/Dirichlet approach 
$$(p_{0}^0,p_{1}^0,p_{2}^0) \sim Dirichlet(\theta_{0}^0,\theta_{1}^0,\theta_{2}^0).$$

With each new observation, we update the a posteriori distribution of the parameters of the transition matrix $\theta' = \theta + \Delta$, where $\Delta=e_{m'}$ is the vector in $\mathbb{N}^3$ that has a value of $1$ at the position corresponding to the observed new mode $m'$, while other entries are $0$. 
Let us illustrate the update process with an example. 
Suppose that at time $t$, the patient is in state $s = (0, 0)$ and no treatment ($a=0$) is applied.
At this stage, the hyperparameter vector $\theta$ is, for instance,
$
\theta =(5,1,0) 
$.
At the next step, the patient transitions into a slow relapse state. 
This observation corresponds to the state-action-state sequence $s = (0, 0), a=0, s'=(1, 0)$.
In this case, the parameter $\theta_1^0$ of the Dirichlet distribution is updated, and $\Delta$ is equal to
$
e_2 = ( 0,1,0)
$.
The hyperparameter vector $\theta$ is updated as follows
\begin{align*}
        \theta' = \begin{pmatrix}
    5 & 1 & 0 
    \end{pmatrix} + \begin{pmatrix}
    0 & 1 & 0 
    \end{pmatrix}  = \begin{pmatrix}
    5 & 2 & 0 
    \end{pmatrix}.
\end{align*}
The BAMDP associated with this example builds on the MDP from Section~\ref{ss:mini_mdp}. It is defined by adding
\begin{align*}
    \tilde{S}&=S\times \llbracket 0~;~ H \rrbracket^3, \\
    \tilde{A} &= A, \\
    \tilde{c} & = c, \\
    \tilde{P}(s'=(m',\zeta'),\theta'|s,a,\theta)&= 
    \left\{\begin{aligned}
P(s'|s,a)\1_{\theta'=\theta},& \text{ if } (s,a)\neq (0,0,0),\\
\frac{\theta_{m'}^0}{\theta_{0}^0+\theta_{1}^0+\theta_{2}^0}\1_{\theta'=\theta+e_{m'}},& \text{ if } (s,a)= (0,0,0).
    \end{aligned} \right.
\end{align*}
Resolution of the BAMDP is provided in the accompanying notebook with algorithms presented in Section~\ref{sub:MBRL}. Results are illustrated through trajectory simulations.
%
\subsubsection{Bayes-Adaptive POMDP}
\label{sub:bapomdp}
%
In \cite{ross_bayesian_2011}, the author proposed an extension of BAMDP to the partially-observed case. In this {\em Bayes-Adaptive POMDP} model (BAPOMDP), the state of the process is hidden and the only available information is given through observations $\omega\in \Omega$.
In addition, the observation model is parameterized, similarly to the transition model, and the values of the parameters are updated through interactions with the system.
%
\begin{defin}[BAPOMDP]
    A BAPOMDP is an extension of a BAMDP given by $\langle \bar{S}, A, H, \bar{P}, \Omega, O, c, \bar{b}_0\rangle$, where
\begin{itemize}
    \item $\bar{S} = \tilde{S} \times \Psi = S \times \Theta \times \Psi$, where
    $$
    \Psi = \left\lbrace \left(\psi^{a}_{s',\omega}\right)_{a\in A, s'\in S, \omega\in\Omega}\right\rbrace \subset \mathbb{N}^{|S|\times|A|\times|\Omega|}
    $$
    is an observation count function. The coordinate $\psi^{a}_{s',\omega}$ counts the number of times $\omega$ has been obtained when action $a$ has led to $s'$. Note that it is not precisely observed since $s'$ is not observed.
    \item $\bar P$ is defined using Bayes rule and independence properties as
    $\bar{P}(s',\theta',\psi',\omega|s,\theta,\psi,a)$ is defined as (by Bayes rule and independence properties):
    \begin{align*}
        \lefteqn{\bar{P}(s',\theta',\psi',\omega|s,\theta,\psi,a)}\\
        &= \mathbb{P}(S_{t+1}=s'|S_t=s,A_t=a,\theta_t=\theta) \times \mathbb{P}(\theta_{t+1}=\theta'| S_t=s, A_t=a, S_{t+1}=s',\theta_t=\theta)\\
        &\quad \times \mathbb{P}(\psi_{t+1}=\psi'|\psi_t=\psi, A_t=a, S_{t+1}=s', \omega_{t+1}=\omega)\times \mathbb{P}(\omega_{t+1}=\omega|A_t=a, S_{t+1}=s', \psi_{t}=\psi), \\
        &= \tilde{P}(s',\theta^\prime|s,a,\theta) \times \mathbb{P}(\psi_{t+1}=\psi'|\psi_t=\psi, A_t=a, S_{t+1}=s', \omega_{t+1}=\omega)\times \mathbb{P}(\omega_{t+1}=\omega|A_t=a, S_{t+1}=s', \psi_{t}=\psi).
    \end{align*}
    As for the Dirichlet state transition probability, the observation probability is defined as
    \begin{equation*}
        \mathbb{P}(\omega_{t+1}=\omega|A_t=a, S_{t+1}=s', \psi_{t}=\psi) = \frac{\psi^a_{s',\omega}}{\sum_{\omega'\in\Omega}\psi^a_{s',\omega'}},
    \end{equation*}
    and the transition to $\psi'$ is also deterministic
    \begin{equation*}
        \mathbb{P}(\psi_{t+1}=\psi'|\psi_t=\psi, A_t=a, S_{t+1}=s', \omega_{t+1}=\omega) = 1 \mbox{ iff }\psi'^a_{s',\omega} = \psi^a_{s',\omega}+1 \mbox{ and } \psi'^{\hat{a}}_{\hat{s},\hat{\omega}} = \psi^{\hat{a}}_{\hat{s},\hat{\omega}}, \forall (\hat{a}, \hat{s}, \hat{\omega})\neq (a, s', \omega).
    \end{equation*}
    \item Finally, the initial belief state is defined over $\bar{S}$ as : $\bar{b}_0(s, \theta_0, \psi_0) = b_0(s)$ and $\bar{b}_0(s, \theta, \psi) =0$ if $(\theta, \psi)\neq(\theta_0, \psi_0)$. The initial counts $(\theta_0, \psi_0)$ are assumed to be known.
\end{itemize}
\end{defin}
%
As for BAMDPs, the BAPOMDP model  $\langle \bar{S}, A, H, \bar{P}, \Omega, O, c, \bar{b}_0\rangle$ is a new POMDP over an extended finite state space (provided that $H$ is finite). In \cite{ross_bayesian_2011} the authors provide exact (for very small problems) and approximate solution algorithms, for solving BAPOMDP, see Section~\ref{sec:algo}. 
%
\subsubsection{Running medical example as a BAPOMDP}
\label{subsubsec:mini_bapomdp}
%
Going back to the BAMDP example presented in section~\ref{ss:mini_bamdp}, we now consider a noisy measurement $y$ of the biomarker as introduced in section~\ref{subsubsec:mini_pomdp}: $y=\zeta+\epsilon$ with a multinomial noise $\epsilon \in \{-2, -1, 0, 1, 2\}$ with unknown parameters $p_i=p_{s'=(m',\zeta'),\omega=y}\1_{\zeta'-y=i} = \mathbb{P}(\epsilon = i)$ independent from the action. We assume that the $p=(p_{-2},p_{-1}, p_0, p_1, p_2)$ follows a Dirichlet distribution. 
At the initial stage, the parameters of the observation model are given by 
$$ p \sim Dirichlet(\psi_{_2},\psi_{-1}, \psi_0, \psi_1, \psi_2),$$ 
with $(\psi_i)_{-2:2} \in \Psi\subset \mathbb{N}^{5}$.

The update of the parameters for $\psi$ is similar to that of the parameters for $\theta$ presented in section~\ref{ss:mini_bamdp}. With each new observation $(s',a,\omega)$, we update the posterior distribution of the parameters of the transition matrix, so that $\psi' = \psi + \Delta$ is a counting matrix, where $\Delta=e_{i}$ is the vector in $\mathbb{N}^5$ that has a value of $1$ at the position corresponding to $i=\zeta'-y$ and $0$ elsewhere, for $s'=(m',\zeta')$ and $\omega=y$. 

This BAPOMDP is therefore equivalent to a POMDP on a larger state-space. As in the classic POMDP formalism (see Section~\ref{subsec:pomdp}), 
this uncertainty can be represented using regular POMDP formalism presented in Section~\ref{subsec:pomdp}, by maintaining (and updating) a   belief over these count vectors.
%
\section{Algorithms for computing exact or approximate optimal  policies in MDPs} 
\label{sec:algo}
%
The exact computation of optimal policies is often challenging due to combinatorial explosion, the curse of dimensionality and/or the continuous nature of many real-world problems. To address these challenges, various methods have been developed to approximate optimal policies. This section explores these methods, beginning with exact solution techniques and transitioning to advanced approximation methods, such as simulation based methods or Reinforcement Learning (RL). 

Exact solution techniques involve targeting the optimal policy with convergence guaranties, often through smart exploration of the whole set of admissible policies. 
 While these methods are theoretically appealing, they quickly become computationally infeasible for large-scale problems due to the curse of dimensionality.
To overcome these limitations, approximation techniques provide a practical alternative. Many of these approaches rely on simulations to strike a balance between precision and computational efficiency.
Monte Carlo Planning (MCP) is a notable example, leveraging random sampling to evaluate and refine policies. Techniques like Monte Carlo Tree Search (MCTS) excel in sequential decision-making problems by simulating state-action trajectories and updating estimates based on averaged outcomes.
Beyond Monte Carlo methods, approaches such as approximate dynamic programming and policy iteration with sampling have been proposed. These methods offer additional practical solutions for complex control problems but fall outside the scope of this paper. The interested reader is referred to \cite{chang2013simulation} for further details on this topic.

Reinforcement Learning provides a powerful framework for approximating value functions and deriving near-optimal policies, especially in cases where exact methods are impractical. RL methods focus on learning optimal policies through iterative interactions with the environment, making them particularly effective in scenarios with partially unknown or highly complex models. 

This section offers a high-level overview of these approaches, covering their theoretical underpinnings and practical applications without delving into algorithmic details. Specifically, Section~\ref{sub:DP} introduces dynamic programming-based algorithms, which provide exact solutions; Section~\ref{sub:MCP} discusses Monte Carlo Planning algorithms, which are designed to handle problems of higher complexity, and finally Section~\ref{RL} presents RL approaches, tailored for frameworks with partially unknown models.
%
\subsection{Dynamic programming algorithms}
\label{sub:DP}
%
MDPs with finite state and action spaces and explicit models may be solved exactly via dynamic programming. 
This iterative method is based on Bellman equations, stated in Theorem \ref{th:mdp-bellman}. It calculates the optimal value function and policy by defining and solving simpler embedded subproblems. 
This section extensively uses notations and definitions from Section~\ref{subsec:genericMDP} and describes dynamic programming for finite-horizon and then infinite horizon problems. We start with backward iteration in Section \ref{sub:BI}, then introduce value iteration and policy iteration in Sections \ref{subsub:value_iter} and \ref{sub:policy_iter} and conclude with a discussion on limitations and extensions in Section \ref{sec:limit_DP} .
%
\subsubsection{Backward iteration}
\label{sub:BI}
%
The backwards induction algorithm given in Algorithm~\ref{algo:Backward_Iter} consists of systematically solving decision-making problems over a fixed number of time steps by working backwards from the final stage. 
Unlike the infinite-horizon case, where the goal is to optimize decisions over an indefinite period, in the finite-horizon case, decisions are optimized over a fixed number of stages. 
The process begins with the definition of the terminal value function, where no further decisions can be made. It thus corresponds to the terminal cost function. 
For each preceding time step, the value function is updated using Bellman recursion from Theorem \ref{th:mdp-bellman}, which calculates the expected cumulative cost based on the current state, available actions, immediate costs, and the expected value of future states. 
This backward induction process continues until we reach the initial time step \(t=0\), at which point the value function for each state has been determined for every stage.
Once the value function has been calculated for all stages, a Markovian optimal policy can be derived by selecting, at each time step, the action that minimizes the expected cost for each state.  
This approach ensures that the resulting policy is optimal for the finite-horizon problem, as it takes into account both immediate and future costs at each decision point.
In finite-horizon MDPs, the optimal decision rule generally varies with the time step $t$, making the optimal policy non-stationary. 
%
\begin{algorithm}[tp]
\caption{Backward Iteration Algorithm}
\label{algo:Backward_Iter}
\begin{algorithmic}[1]
    \FOR{$s \in S$}
        \STATE $V_H(s) = C(s)$
    \ENDFOR
    \FOR{$t$ from $H-1$ to $0$}
        \FOR{$s \in S$}
                \STATE $V_t(s) \leftarrow \min_{a \in A} \sum_{s' \in S}P(s' | s,a)(c_t(s,a, s') + V_{t+1}(s'))$
                \STATE $\pi^{\star}_t(s) \leftarrow \argmin_{a \in A} \sum_{s' \in S}P(s' | s,a)(c_t(s,a, s') + V_{t+1}(s'))$
        \ENDFOR
    \ENDFOR
\end{algorithmic}
\end{algorithm}
%
The accompanying jupyter notebook provides an implementation of the backward iteration algorithm for the MDP version of our running medical example given in Section \ref{ss:mini_mdp}.
%
\subsubsection{Value iteration}
\label{subsub:value_iter}
%
The value iteration algorithm given in Algorithm~\ref{algo:Value_Iter} is a dynamic programming method used to solve MDPs in infinite-horizon problems. 
The algorithm updates the values $V(s)$ at each iteration, for every states $s \in S$, converging to the optimal value function $V^{\star}$. 
It begins with arbitrary initial values $V_0(s), \forall s\in S$ and improves them by repeatedly applying the Bellman optimality equation given in Theorem~\ref{th:mdp-bellman}. 
The stopping criterion is based on the maximum change in the value function. 
Once the algorithm stops it returns a value function $V_n$, from which an optimal policy $\pi^{\star}$ can be derived (even though $V_n$ may not equal $V^*$). 
Note that the optimal policy obtained this way is stationary. 
The convergence of the algorithm to the optimal policy is guaranteed for discounted factors $0<\gamma<1$ and sufficiently small $\epsilon$, see \cite{bellman_dynamic_1958}.
%
\begin{algorithm}[tp]
\caption{Value Iteration Algorithm}
\label{algo:Value_Iter}
\begin{algorithmic}[1]
    \REQUIRE Discount factor $\gamma \in (0, 1)$, convergence tolerance $\epsilon$
    \FOR{$s \in S$}
        \STATE  $V_0(s)\leftarrow 0$ 
    \ENDFOR
    \STATE $n\leftarrow 0$
    \STATE $\delta \leftarrow +\infty$
    \WHILE{$\delta > \epsilon(1-\gamma)\/\gamma$}
        \FOR{$s \in S$}
                \STATE $V_{n+1}(s) \leftarrow \min_{a \in A} \sum_{s' \in S}P(s' | s,a)(c(s,a, s') + \gamma V_n(s'))$
        \ENDFOR
        \STATE $\delta \leftarrow \max_{s\in S} |V_{n+1}(s)-V_n(s)|$
        \STATE $n\leftarrow n+1$
    \ENDWHILE
    \FOR{$s \in S$}
        \STATE $\pi^{\star}(s) \leftarrow \argmin_{a \in A}\sum_{s' \in S}P(s' | s,a) (c(s,a, s') + \gamma V_{n+1}(s'))$
    \ENDFOR
    
\end{algorithmic}
\end{algorithm}
%
\subsubsection{Policy iteration}
\label{sub:policy_iter}
%
The policy iteration algorithm given in Algorithm~\ref{algo:Policy_Iter} is closely related to the value iteration algorithm described in Section~\ref{subsub:value_iter}. While value iteration focuses on iterating over value functions, policy iteration directly updates the policies themselves.
After initialization with an arbitrary initial policy, the algorithm operates by iteratively improving the policy as follows.
\begin{enumerate}
    \item \emph{Policy Evaluation}: this step computes the value of each state under the current policy by solving the associated system of linear equations or iteratively estimating the value function.
    \item \emph{Policy Improvement}: this step updates the policy by deriving a \emph{greedy} policy with respect to the current value function. This involves selecting actions that maximize the expected return for each state.
\end{enumerate}
These steps are repeated, with the newly updated policy being evaluated and improved in turn. The process continues until the policy converges to the optimal policy, which remains unchanged in subsequent iterations.
Policy iteration is guaranteed to compute the optimal policy in a finite number of steps, see \cite{howard_dynamic_1962}. The number of iterations required is polynomial in the number of states and actions of the MDP, making it an efficient algorithm for solving MDPs. 
%
\begin{algorithm}[tp]
\caption{Policy Iteration Algorithm}
\label{algo:Policy_Iter}
\begin{algorithmic}[1]
    \REQUIRE Discount factor $\gamma \in (0, 1)$, initial policy $\pi_0$
    \FOR{$s \in S$}
        \STATE $\pi(s)\leftarrow \pi_0(s)$
    \ENDFOR  
    \REPEAT
        \STATE  $V\gets$ solution of the following linear program in variables $(V(s), s\in S)$ \hfill \COMMENT{Policy Evaluation}
        \[
        V(s) = \sum_{s' \in S} P(s'|s, \pi(s)) \left(c(s, \pi(s), s') + \gamma V(s')\right), \quad \forall s \in S
        \]
        \STATE Update greedy policy $\pi$ \hfill\COMMENT{Policy Improvement} 
        \[
        \pi(s) \leftarrow \argmin_{a \in A} \sum_{s' \in S} P(s'|s, a) \left( c(s, a, s') + \gamma V(s') \right), \quad \forall s \in S
        \]
    \UNTIL{$\pi$ does not change}
\end{algorithmic}
\end{algorithm}
%
\subsubsection{Limitations and extensions}
\label{sec:limit_DP} 
%
The three algorithms discussed above rely on dynamic programming equations, which represent one of the most efficient methods for exploring the policy space. While dynamic programming is guaranteed to yield an optimal policy in theory, it often becomes computationally infeasible for real-world applications. This limitation arises from the curse of dimensionality, a phenomenon identified by \cite{bellman_dynamic_1958}, where the computational complexity grows exponentially with the dimensionality of the state and action spaces, especially if those are continuous.

One approach to mitigate this complexity is to employ piecewise discrete  policies, enabling the application of dynamic programming techniques (see, for example, \cite{bertsekas_convergence_1975}). Discretization simplifies the problem by drastically reducing the dimensionality of the state space, making computations more manageable. However, this reduction comes at the cost of precision, often resulting in suboptimal policies (see \cite{feng_dynamic_2004}).
Additionally, the process of discretizing a state space is itself challenging. It can be time-consuming, difficult to implement effectively, and prone to introducing inaccuracies, especially as the dimensionality of the problem increases. 

Given these limitations, alternative approaches such as approximation methods and RL algorithms -- explored in the following sections -- are often preferred. These methods provide practical solutions for handling high-dimensional discrete or continuous spaces more effectively, frequently outperforming discretized dynamic programming in terms of both policy quality and computational efficiency.

For simplicity, the remainder of this discussion will focus on finite-horizon problems to streamline the analysis and facilitate the exploration of these alternative methods.
%
\subsection{Monte-Carlo simulations-based algorithms} 
\label{MCP}
\label{sub:MCP}
%
For finite-horizon MDPs with a known starting state, solving the MDP involves exploring a {\em min-expectation} tree. The root of this tree corresponds to the initial state, $s_0$. Each action $a_0\in A$ generates a successor action node. Each action node $a_0$ further branches into successor state nodes for all  $s_1$ such that $P(s_1|s_0,a_0)>0$. This process continues across time steps, with the tree's leaves representing the final states $s_H$.

Solving the MDP using exact dynamic programming involves fully constructing this tree and applying backward induction. Backward induction labels each state node $s_t$ with its optimal value  $V_t^*(s_t)$, which represents the return obtained by following the optimal policy $\pi^*$ from $s_t$ until the horizon $H$. 

However, backward induction is computationally intensive because it requires exploring the entire tree. Even storing the complete tree is infeasible in many cases, though depth-first exploration with pruning of dominated actions can help mitigate this storage issue.

To address these computational challenges, Monte Carlo Planning methods, and in particular the family of Monte Carlo Tree Search (MCTS) algorithms, offer an efficient alternative. These heuristic-free approaches, such as the Upper Confidence bound applied to Trees (UCT) algorithm \cite{kocsis_bandit_2006}, focus on exploring the tree through simulations. MCTS algorithms prioritize exploration of the most promising parts of the tree—areas that show high potential or require further evaluation to refine their estimated values. By doing so, they avoid exhaustive exploration, making them well-suited for large or complex MDPs (see, \textit{e.g.}, \cite{coulom_efficient_2006}). 

There exist many variants of MCTS \cite{browne_survey_2012, swiechowski_monte_2023}, however most of them iteratively update a search tree (which is not exactly the same as the dynamic programming tree) until a computation budget is exhausted, at which point an action, computed for the root node, is returned. 
A tuple $\langle s, a, \nu, \rho \rangle$ is attached to each node $v$.
Component $v.s=s\in S$ is a state, $v.a=a\in A$ is the action that led to state $v.s$, $v.\nu=\nu\in\mathbb{N}$ is the number of visits to node $v$ following the current edge (\textit{i.e.} the number of times $v.a$ was applied in $v'.s$ and led to $v.s$, if $v'=Par(v)$ is the parent node of $v$), $v.\rho=\rho\in\mathbb{R}$ is an estimate of the value function $V^*(v.s)$ at state $v.s$. 
Initially, the tree consists of a single root node $v_0$ with attached tuple $\langle v_0.s=s_0, v_0.a=\emptyset, v_0.\nu=1, v_0.\rho=0 \rangle$.
At any step, the algorithm loops over the following four successive substeps to update the  current tree, see Figure \ref{mcts} for an illustration.
\begin{enumerate}
\item \textbf{Selection.} 
Starting from the root node, child nodes are selected recursively.
\begin{itemize}
    \item If $v$ is the current node, we check that it has a child node $v^a$ for every $a\in A$, \textit{i.e.} if all available actions have been tried in state $v.s$. Else, Selection phase is over (condition I). 
    \item Then, an action $a$ is selected, using a {\em tree policy} and a successor state $s'\sim P(\cdot|v.s,a)$ is simulated. 
If there exist $v'$ such that $v'.s = s'$ and $v'.a=a$, then $v'$ is selected and the Selection phase goes on from $v'$. Else, Selection phase is over (condition II).
\end{itemize}
\item \textbf{Expansion.}
When Selection phase is over, the tree is expanded.
A new child node of $v$ (selected node at the end of the Selection phase) is added. It is obtained  
\begin{itemize}
    \item either (if condition I was met) by playing a yet unplayed action $a\in A$ and simulating $s'\sim P(\cdot|v.s,a)$,
    \item or (if condition II was met) corresponding to the newly simulated state $s'$.
\end{itemize}
Let $v'$ be this node and $\langle v'.s=s', v'.a=a, v'.\nu=1,v'.\rho=0\rangle$ the attached tuple.
\item \textbf{Simulation.} 
When a node $v'$ has been added in the expansion phase, a single trajectory is simulated, starting from $v'.s$, using a {\em default policy} (for example, successive actions may be selected uniformly at random). Let $\mathcal{C}$ be the total (discounted in case of a discounted MDP) cost collected along the simulated trajectory. Then, $v'.\rho \leftarrow\mathcal{C}$. 
\item \textbf{Backpropagation.} 
The $Q$-values (see definitions \ref{def:Qdiscount} and \ref{def:Qtotal}) of the ancestor nodes of $v'$ are updated recursively:  
For every ancestor nodes $v$ of $v'$, selected in backward order,
$$
v.\rho \leftarrow \min_{a\in A} \frac{\sum_{\tilde v\in Succ(v), \tilde v.a=a} \tilde v.\nu\times (c(v.s, a, \tilde v.s)+\gamma \tilde v.\rho)}{\sum_{\tilde v\in Succ(v), \tilde v.a=a} \tilde v.\nu}.
$$
\end{enumerate}
%
\begin{figure}[tp]
    \scalebox{0.68}{
    \begin{tikzpicture}
level distance=10mm
\tikzstyle{algnode} = [rectangle, rounded corners, minimum width=3cm, minimum height=0.8cm,text centered, draw=black, fill=red!20]

\begin{scope}
\node [algnode] (Se) at (0,1) {\Large Selection};
\node [algnode] (E) at (6,1)  {\Large Expansion};
\node [algnode] (Si) at (12,1) {\Large Simulation};
\node [algnode] (B) at (18,) {\Large Backpropagation};
\draw[-{Stealth[length=3mm, width=2mm]}, line width=2pt] (Se) -- (E);
\draw[-{Stealth[length=3mm, width=2mm]}, line width=2pt] (E) -- (Si);
\draw[-{Stealth[length=3mm, width=2mm]}, line width=2pt] (Si) -- (B);

\draw[rounded corners, line width=2pt] (B) -- (18,2.5) -- (0,2.5) -- (Se);

\draw [-{Stealth[length=3mm, width=2mm]}] (0,2) -- (Se);

\begin{scope}[]
\node[circle,line width=2pt,draw]{0}
    child{node[circle,line width=2pt,draw]{1}  edge from parent[line width=2pt,-stealth]
        child{node[circle,line width=1pt,draw]{4} edge from parent[line width=1pt,-]}
        child{node[circle,draw]{5} 
            child{node[circle,draw, line width=1pt]{8} edge from parent[line width=1pt,-]}
            child[missing]
        }
    }
    child{node[circle,draw]{2}
        child[missing]
        child{node[circle,draw]{6}}
    }
    child{node[circle,draw]{3}
        child[missing]
        child{node[circle,draw]{7}}
    };
\end{scope}
\hspace{6cm}
\begin{scope}[]
\node[circle,draw]{0}
    child{node[circle,draw]{1}
        child{node[circle,draw]{4}}
        child{node[circle,draw]{5} 
            child{node[circle,draw]{8}}
            child{node[circle,line width=2pt,draw]{9} edge from parent[line width=2pt,-]}
        }
    }
    child{node[circle,draw]{2}
        child[missing]
        child{node[circle,draw]{6}}
    }
    child{node[circle,draw]{3}
        child[missing]
        child{node[circle,draw]{7}}
    };
\end{scope}
\hspace{6cm}
\begin{scope}[]
\node[circle,draw]{0}
    child{node[circle,draw]{1}
        child{node[circle,draw]{4}}
        child{node[circle,draw]{5} 
            child{node[circle,draw]{8}}
            child{node[circle,line width=2pt,draw]{9}
                child{node{{\bf default policy}} edge from parent[line width=2pt,dashed,-stealth]}
            }
        }
    }
    child{node[circle,draw]{2}
        child[missing]
        child{node[circle,draw]{6}}
    }
    child{node[circle,draw]{3}
        child[missing]
        child{node[circle,draw]{7}}
    };
\end{scope}
\hspace{6cm}
\begin{scope}[]
\node[circle,line width=2pt,draw]{0}
    child{node[circle,line width=2pt,draw]{1}  edge from parent[line width=2pt,stealth-]
        child{node[circle,line width=1pt,draw]{4} edge from parent[line width=1pt,-]}
        child{node[circle,draw]{5} 
            child{node[circle,line width=1pt,draw]{8} edge from parent[line width=1pt,-]}
            child{node[circle,draw]{9}}
        }
    }
    child{node[circle,draw]{2}
        child[missing]
        child{node[circle,draw]{6}}
    }
    child{node[circle,draw]{3}
        child[missing]
        child{node[circle,draw]{7}}
    };
\end{scope}
\end{scope}
\end{tikzpicture}
    }
    \caption{{\bf The four steps of Monte-Carlo Tree Search.} A child node is \emph{selected} (here node 5), then the tree is \emph{expanded} with a new child node (here node 9) and a trajectory is \emph{simulated} from this new node with default policy. Finally, the tree is \emph{backpropagated} to update the $Q$-values.}
    \label{mcts}
\end{figure}
%
Once the Backpropagation step is over, the algorithm goes back to the Selection step, until the computation budget is over.
At this point, an action $a^*$ is returned for the initial state $s_0$:
$$
a^* \leftarrow \argmin_{a\in A} \sum_{v'\in Succ(v_0), v'.a=a} \frac{\sum_{v'\in Succ(v), v'.a=a} v'.\nu\times (c(v.s, a, v'.s)+\gamma v'.\rho)}{\sum_{v'\in Succ(v), v'.a=a} v'.\nu}.
$$

An MCTS algorithm instance is determined by the choice of a {\em tree policy} and a {\em default policy}.
Typically, the tree policy is derived from a UCT criterion, balancing {\em exploitation} of the best action so far and {\em exploration} of under-used actions.
The default policy is often uniformly random.
However many choices of tree policies and default policies are possible, leading to different algorithms. The reader may refer to \cite{swiechowski_monte_2023} for a recent overview.

Note that, contrarily to exact dynamic programming, Monte-Carlo planning does not require complete knowledge of the MDP model, $(P, c)$, but only the faculty to {\em simulate} $P(\cdot|s,a)$ and to record experiences $(s,a,s',c)$. Even though the backpropagation step uses costs $c(v.s,a,v.s')$, these costs have been encountered during the expansion step and may be added to the tree data structure.
This feature is common with RL approaches, described in the next section.
%
\subsection{Reinforcement-learning algorithms}
\label{RL}
%
Reinforcement Learning (RL) is a computational framework that enables decision-makers to learn optimal mappings from situations to actions in order to minimize a numerical cost. In RL, the environment represents everything beyond the decision-maker's direct control.
In the context of MDPs, the environment corresponds to the mathematical structure that defines state transitions and costs, as described in Section~\ref{subsec:genericMDP}. Unlike traditional learning methods, RL does not require complete knowledge of this structure. Instead, it allows decision-makers to learn directly through interactions with the environment.

The RL process involves a cycle of interaction: the decision-maker observes the current state, selects an action, and receives feedback in the form of a cost, which reflects the quality of its decision (see Figure~\ref{fig:rl_schema}). This action also influences the next state of the environment. Through repeated iterations, the decision-maker refines its behavior, ultimately developing a policy that optimizes long-term costs, as outlined in Section~\ref{subsec:genericMDP}.
%
\begin{figure}[tp]
    \centering
\begin{tikzpicture}[node distance=5cm, auto]
    \node[circle, minimum size=2cm, path picture={\node at (path picture bounding box.center) {\includegraphics[width=2cm]{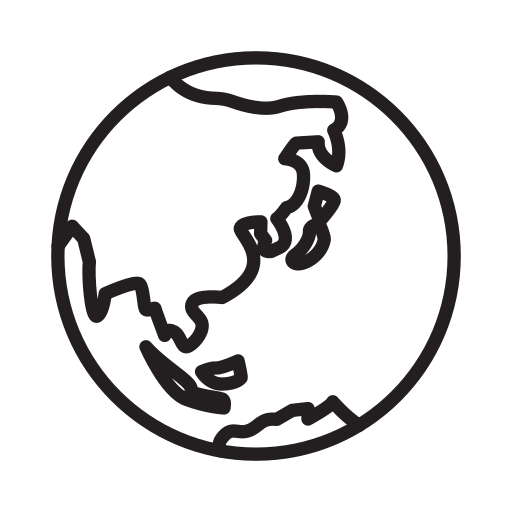}};}] (environmentPic) {};
    \node[below] at (environmentPic.south) (environment) {Environment};

    \node[circle, minimum size=2cm, path picture={\node at (path picture bounding box.center) {\includegraphics[width=2cm]{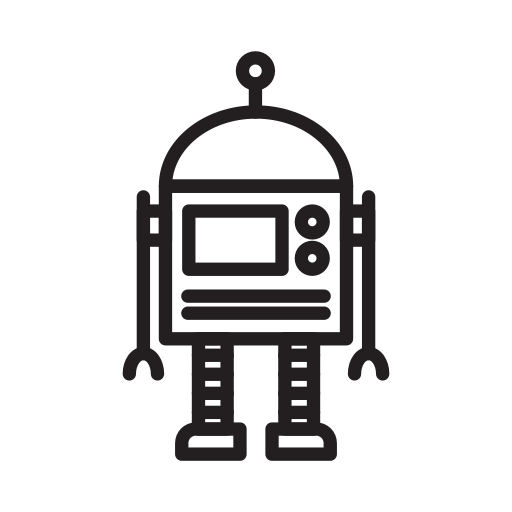}};}, right of = environmentPic] (agentPic) {};
    \node[below] at (agentPic.south) (agent) {Agent};
    
    \draw[->] (environment.south) -- ++(0,-0.5) -- node[midway, below] {observation} ++(5,0) -- (agent.south);

    \draw[->] (agentPic.west) -- (environmentPic.east) node[midway, above] {action};

    \draw[->] (environmentPic.north) -- ++(0,0.5) -- node[midway, above] {cost} ++(5,0) -- (agentPic.north);    
  
\end{tikzpicture}
    \caption{{\bf Reinforcement Learning schematics.} The learning process involves a decision-maker observing a state, taking an action that affects the state, and receiving feedback through costs.}
    \label{fig:rl_schema}
\end{figure}

One of the fundamental challenges in RL is balancing the trade-off between exploration and exploitation. To minimize costs, an RL decision-maker must exploit actions that have proven effective in the past while also exploring new actions to uncover potentially better strategies. Striking this balance is essential to prevent the decision-maker from becoming stuck in suboptimal behaviors and to enable continuous policy refinement through exploration. 

The process of learning optimal actions in RL is approached from multiple perspectives, each characterized by distinct methodologies and challenges. This section begins with \emph{model-free methods}, which do not rely on a model of the environment's dynamics. Within this category,
\begin{itemize}
    \item \emph{Value-based} approaches focus on estimating the value function to derive optimal policies (Section \ref{sec:RL-free-value});
    \item \emph{Policy-based} methods directly search for the optimal policy within the policy space (Section \ref{sec:RL-free-policy}).
\end{itemize}
    
Following model-free approaches, we delve into \emph{model-based methods} in Section \ref{sub:MBRL}, which involve constructing an explicit model of the environment to predict future states and associated costs. 

By structuring the discussion in this manner, we aim to provide a comprehensive overview of the key strategies used in RL, emphasizing the strengths and limitations of each approach. This section draws heavily on the seminal work by \cite{sutton_reinforcement_2018}, which offers an extensive survey of RL methodologies and their applications. 

Recent advancements in deep learning (\textit{e.g.} \cite{lecun_deep_2015, goodfellow_deep_2016}) have significantly enhanced RL capabilities, particularly in addressing challenges posed by high-dimensional state and action spaces. These developments have given rise to deep Reinforcement Learning (deep RL) algorithms, which are discussed throughout this section. These algorithms show considerable promise for advancing the field.

For practitioners, practical implementation can begin by encoding their environment using the (python) \emph{Gymnasium} library introduced by \cite{brockman_openai_2016}. This library integrates seamlessly with the library \emph{RLlib} \cite{liang_rllib_2018}, which offers a suite of pre-implemented deep RL algorithms ready for use.
%
\subsubsection{Model-free approaches: value-based algorithms} 
\label{sec:RL-free-value}
%
Model-free value-based RL is a key area within artificial intelligence that focuses on learning optimal policies by estimating value functions, without the need for an explicit model of the environment. This approach is especially valuable in complex environments where creating an accurate model is either difficult or impractical. \\

\textbf{Tabular algorithms.}
We begin with the simplest cases, where the state and action spaces are small enough for the value functions to be approximated using arrays. These methods are known as tabular algorithms, see \cite{sutton_reinforcement_2018}. 

One of the foundational algorithms in this category, $Q$-learning was introduced by \cite{watkins_learning_1989}. It is an \emph{off-policy} method that allows a decision-maker to learn a policy different from the one it follows during training. This distinction enables the decision-maker to learn the optimal policy while adhering to a more exploratory behavior policy. In $Q$-learning, the decision-maker interacts with the environment and updates its $Q$-values based on received costs and the minimum expected sum of future costs, as defined by the Bellman equation. The iterative process is continued until a stopping criterion is reached. \cite{watkins_Q-learning_1992} show that provided the action state is finite, the $Q$-values converge to the optimal action values almost surely. Through this process, the decision-maker identifies the best possible action for any given state, as outlined in Algorithm~\ref{alg:$Q$-learning}.

$Q$-learning is widely recognized for its simplicity and effectiveness in the case of finite state and action spaces. However, \cite{watkins_Q-learning_1992} also show that it can be computationally expensive and slow to converge, particularly in large state spaces. 
%
\begin{algorithm}[tp]
\caption{$Q$-learning}
\label{alg:$Q$-learning}
\begin{algorithmic}[1]
    \REQUIRE discount factor $\gamma>0$, tolerance $\epsilon>0$, learning rate $\alpha\in (0,1)$, arbitrary initial values for $Q(s,a), \forall s\in S, a\in A$
    \FOR{each episode}
        \STATE Initialize $s$ randomly
        \FOR{each step of episode}
            \STATE Choose $a\leftarrow \argmax_{a\in A} Q(s,a)$ with probability $\epsilon$ or uniformly in $A$ with probability $(1-\epsilon)$ ($\epsilon$-greedy choice)
            \STATE Apply action $a$, observe $c$, $s'$
            \STATE $Q(s,a) \leftarrow Q(s,a) + \alpha \left[c + \gamma \min_{a'\in A} Q(s',a') - Q(s,a)\right]$
            \STATE $s \leftarrow s'$
        \ENDFOR
    \ENDFOR
\end{algorithmic}
\end{algorithm}

In \cite{rummery_-line_1994}, the authors introduce the State-Action-Reward-State-Action (SARSA) algorithm, which is also model-free and value-based. The name SARSA reflects the sequence of events the algorithm uses to update its knowledge. Unlike $Q$-learning, SARSA updates its $Q$-values based on the action actually taken by the decision-maker, rather than the optimal action, making it an \emph{on-policy} algorithm in contrast to $Q$-learning's off-policy approach. \cite{singh_convergence_2000} show that the SARSA algorithm is guaranteed to converge if every state-action pair is visited infinitely often, but its convergence can be slow, especially in environments with large state spaces.
Another alternative to $Q$-learning is Temporal-Difference (TD) learning, introduced by \cite{sutton_learning_1988}, which combines Monte Carlo methods with dynamic programming.

Over the last few decades, extensions to these algorithms have been developed to address their limitations. For instance, an on-policy $Q$-learning extension was created to achieve better computational complexity bounds \cite{strehl_pac_2006}, perform in multi-agent environments \cite{shoham_if_2007}, and address the issue of overestimated action values \cite{hasselt_double_2010}. \\

\textbf{Function approximation techniques.} 
While tabular algorithms and their extensions work well in finite and relatively small state spaces, many real-world problems involve larger or even continuous state spaces. In cases where a tabular representation of the $Q$-function is impractical, \cite{ tsitsiklis_convergence_2003} proposed to approximate the $Q$-function using function approximation techniques.
Here, the value function is not represented as a table but as a parameterized function with a weight vector $\mathbf{w} \in \mathbb{R}^d$. 
The approximate value of a state $s$ given weight vector $\mathbf{w}$ is denoted by $\hat{v}(s, \mathbf{w})$. The approximation 
$\hat{v}$ can take various forms such as a linear function, a decision tree, or others. \cite{sutton_reinforcement_2018} extended the theoretical results for methods using function approximation to cases involving partial observability. 

Deep RL extends function approximation techniques by using a multi-layer artificial neural network build the approximate the value function  $\hat{v}$. This approach provides a powerful method for function approximation, albeit without theoretical guarantees \cite{sutton_reinforcement_2018}.
Many Deep RL algorithms are build upon foundational tabular methods. A notable example is the Deep $Q$-Network (DQN), introduced by \cite{mnih_playing_2013} and illustrated in Figure~\ref{fig:dqn_schema}. Subsequent advancements, such as \emph{Double DQN} \cite{hasselt_deep_2016}, \emph{Dueling DQN} \cite{wang_dueling_2016}, and \emph{Rainbow} \cite{hessel_rainbow_2018}, have further enhanced the efficiency and performance of DQN-based approaches.

As depicted in Figure~\ref{fig:dqn_schema}, when dealing with partially observable environments, the neural network receives observations rather than full state information. In such cases, the optimal policy depends on the entire history of observations, as discussed in Section~\ref{subsec:belief_pomdp}. Addressing this challenge, \cite{matthew_hausknecht_deep_nodate} introduced \emph{Recurrent Replay Distributed DQN} (R2D2), which uses recurrent neural networks (RNNs) instead of convolutional networks. RNNs are capable of retaining information from past observations, making them particularly well-suited for partially observable Markov Decision Processes (POMDPs).
%
\begin{figure}[tp]
    \centering
\begin{tikzpicture}[node distance=2cm, auto]
    \node[draw,rectangle,minimum width=2cm, minimum height=1cm,  rounded corners] (dqn) {Neural Network DQN};

    \node[draw,rectangle,minimum width=2cm, minimum height=1cm,rounded corners, left = 1cm of dqn] (env) {Environment};
  
    \node[minimum width=2cm, minimum height=1cm, right = 1cm of dqn] (q_valeur) {$
    \left\{\begin{array}{ll}
            Q(s_t,a_1) \\
            \cdots \\
            Q(s_t,a_6)
        \end{array}\right\}$};
  
    \draw[->] ([yshift=0.1cm]env.east) -- node[above] {$s_t$} ([yshift=0.1cm]dqn.west);

    \draw[->] (dqn.east) -- (q_valeur.west); 
    \draw[->] (q_valeur.south) -- ++(0,-0.5) -- node[below] {$a^{\star} = \argmax_{a_i} Q(s_t,a_i)$} ++(-8,0) -- (env.south); 

    \node[draw, rectangle, minimum width=2cm, minimum height=1cm,  rounded corners, above of = q_valeur, align=center] (loss) {Loss calculation \\ $L = [(c(s_t,a^{\star},s_{t+1})+\gamma\min_{a'}Q(s_{t+1},a'))-Q(s_t,a^{\star})]^2$};

    \draw[->] ([xshift=-3.7cm]loss.south) -- node[right] {Gradient loss} ([xshift=0.40cm]dqn.north); 
    \draw[->] ([xshift=-0.9cm]env.north) -- ++(0,1.5) -- node[above, anchor=south east] {$(s_{t+1}, c(s_t, a^{\star}, s_{t+1}))$}  (loss.west); 


    \begin{scope}[on background layer]
        \node[fit=(loss)(dqn)(q_valeur), inner xsep=2mm, inner ysep=2mm, fill = bleu_ql, opacity = 0.1] (framework) {};
    \end{scope}

    \node[minimum width=2cm, minimum height=1cm,  rounded corners, above of = framework, bleu_ql, align=center] (agent) {Agent};

\end{tikzpicture}
    \caption{{\bf Deep $Q$-Network algorithms.} Deep $Q$-Network (DQN) algorithms send the current state of the environment to a neural network, which calculates the $Q$ values for each possible action in the MDP. The algorithm then selects the action that minimizes these $Q$-values. Once the action has been chosen, the environment moves to the next state and the associated cost is calculated. This cost is used to update the loss function, which measures the difference between the predicted $Q$-values and the actual values. Finally, the network weights are adjusted by backpropagation to improve future predictions.}
    \label{fig:dqn_schema}
\end{figure}

These algorithms collectively demonstrate the flexibility and potential of model-free value-based RL methods in tackling complex decision-making challenges across various domains.
%
\subsubsection{Model-free approaches: policy-based algorithms }
\label{sec:RL-free-policy}
%
While value-based RL algorithms generally learn a $Q$-function, from which a (nearly-) optimal policy can be obtained, policy-based algorithms directly learn a policy from experiments.
The emblematic dynamic programming policy-based algorithm is {\em policy iteration}, which alternates phases of {\em evaluation} of the current (deterministic) policy and {\em improvement} of the current policy (see Section \ref{sub:DP}). \\

\textbf{Policy gradient approaches.}
When the model of the MDP is unknown (RL context) and/or when state and action spaces are too large or even continuous, policy iteration is no longer applicable.
In this context, {\em Policy Gradient} approaches (see \textit{e.g.} \cite{deisenroth_survey_2013, silver_deterministic_2014}) suggest to use continuously parameterized policies and replace policy improvement steps with {\em gradient descent} steps, to improve the current policy (defined by its continuous parameters). 
With parameterized policies, the value of the policy in an arbitrary step can be considered itself as a function of the parameters of the policy. As such, the gradient of the value function can be estimated using stochastic policy gradient  algorithms (the value function in state $s$ is an expectation of the sum of costs obtained along possible trajectories).

More precisely, model-free {\em policy gradient} algorithms exploit the {\em gradient theorem} from \cite{sutton_policy_1999} which states that
\begin{equation*}
    \nabla_\theta V(\pi_\theta, s_0) = \mathbb{E} \left[ \nabla_\theta\log\pi_\theta(A|S)\tilde{Q}(S,A)\right],
\end{equation*}
where the expectation is taken with respect to $S\sim\mu^{\pi_\theta}_{s_0}$ and $A\sim\pi_\theta$, $\mu^{\pi_\theta}_{s_0}$ is the discounted occupation measure of the Markov chain followed by starting in $s_0$ and applying policy $\pi_\theta$, $\pi_\theta(\cdot|\cdot)$ is any differentiable parameterized representation of a stochastic policy (with parameter vector $\theta$) and $\tilde{Q}$ is an estimation of the {\em true} $Q$-function (which can be computed, \textit{e.g.} by simulation).
This policy gradient expression makes it easy to estimate the gradient of the current policy by simulation, since the gradient of $\pi_\theta$ may be known explicitly, and simulations can be used to estimate $\tilde{Q}$ as well as the global expectation. 
Note also that policy-gradient algorithms naturally handle continuous state and action spaces. 

One limitation of these approaches is the need to arbitrarily choose a parameterized policy, something that may not always be ideal. Moreover, estimating the gradient through simulations often requires substantial computational resources, even if the model is known. The challenge becomes more significant when the model is unknown, as the approach then depends on datasets of observed MDP trajectories. \\

\textbf{Parametrized policy approaches.} 
On the other hand, policy iteration approaches are highly versatile and capable of addressing remarkably complex MDP problems. 
This is especially true with recent advancements handling neural network representations of policies.
Proximal Policy Optimization (PPO, \cite{schulman_proximal_2017}) is a popular example of a deep RL algorithm with policy iteration. 
In PPO, instead of approximating the value function as in deep $Q$-networks, the policy itself is parameterized by a neural network. 
This network directly associates states (or observations in the case of partial observability) with actions, optimizing the policy to minimize expected costs.
In PPO, the algorithm alternates between collecting data from the environment using the current policy and updating the policy on the basis of the data collected. 
As with DQN, PPO can be extended to handle partial observability by incorporating recurrent neural networks instead of feedforward networks. 
This allows the policy to retain past observations and improve decision-making in environments with hidden states, such as POMDPs.

Unlike DQN, which generates a deterministic policy by selecting the action with the highest $Q$ value, PPO generates a stochastic policy.
A stochastic policy is one in which the decision-maker selects actions according to a probability distribution over the available actions, rather than always choosing a fixed action for a given state. 
This stochastic nature allows the decision-maker to explore more diverse behaviors, which makes it particularly useful in environments where there is uncertainty or partial observability, as it can prevent the policy from getting stuck in sub-optimal solutions by occasionally trying different actions.
%
\subsubsection{Model-based algorithms}
\label{sub:MBRL}
%
This class of algorithms operates in an iterative loop: the decision-maker begins by collecting data to update the current model of the environment's dynamics. Using the refined model, the decision-maker learns and optimizes its policy within a simulated environment. Once a new policy is derived, the decision-maker applies it in the real world, gathering additional data. This new data is then used to further refine the model. 
The cycle of data collection, model updating, and policy optimization repeats continuously, enabling the decision-maker to improve the performance over time. This iterative process is depicted in Figure~\ref{model_based}.
%
\begin{figure}[tp]
    \centering
    \scalebox{1.2}{
    \begin{tikzpicture}[node distance=1cm, auto]
    \node[draw,font=\scriptsize,rectangle,minimum width=2cm, minimum height=0.7cm,  rounded corners] (env) {Environment};

    \node[draw,font=\scriptsize,rectangle,minimum width=2cm, minimum height=0.7cm,  rounded corners, below = of env] (data) {Experiences $(s,a,s',c)$};

    \node[draw,font=\scriptsize,rectangle,minimum width=1cm, minimum height=0.7cm,  rounded corners, below right=of data] (model) {Model};

    \node[draw,font=\scriptsize,rectangle,minimum width=1cm, minimum height=0.7cm,  rounded corners, below left=of data] (policy) {Policy};
  

    \draw [->]  (data.east) -- node[right,font=\scriptsize] {Learning} (model.north);

    \draw [->]  (env.south) -- node[right,font=\scriptsize] {Generating} (data.north);

    \draw [->]  (policy.north) -- node[left,font=\scriptsize] {Acting} (env.west);

    \draw [->]  (model) -- node[above,font=\scriptsize] {Improving} (policy.east);
  




\end{tikzpicture}
    }
    \caption{{\bf Model-based Reinforcement Learning.} Data is collected to update the environment's model knowledge, which is then used to learn and optimize the policy in a simulated environment. This updated policy is applied, yielding new data.}
    \label{model_based}
\end{figure}

Atkeson \textit{et.al} \cite{atkeson_comparison_1997} show that this provides better results than model-free approaches with less data.
For the sake of simplicity, we only focus on Bayesian Reinforcement Learning (BRL) in this survey. 
However, readers interested in model-based methods are encouraged to explore the comprehensive survey on model-based RL provided in \cite{luo_survey_2022}, as well as the state-of-the-art DreamerV3 algorithm proposed by \cite{hafner_mastering_2024}, a model-based  deep RL algorithm. 

Introduced in \cite{mohammad_ghavamzadeh_bayesian_2015}, BRL is a model-based RL approach that leverages methods from Bayesian inference to model uncertainty in the environment's dynamics. 
The Bayesian framework, described in section~\ref{sub:bamdp}, allows for principled incorporation of prior knowledge and can help in balancing the exploration-exploitation tradeoff effectively. 

A Bayesian approach to dynamic programming was introduced by \cite{strens_bayesian_2000}, where a model is sampled from the posterior distribution and dynamic programming is applied to the sampled model. 
To improve accuracy, the model is periodically resampled to account for new information. 
While effective, this method can be computationally slow, especially in complex environments.
In \cite{duff_optimal_2002}, a method is proposed that utilizes Finite State Controllers (FSCs) to represent Bayesian policies in a more compact form. 
FSCs are graph-based structures where the nodes represent memory states and the edges correspond to transitions between states, defined by the tuple \((s, a, s')\). 
The algorithm then searches the space of possible FSCs to identify the optimal controller, effectively reducing the complexity of finding the best policy.
The Bayesian Exploration Exploitation Tradeoff in LEarning (BEETLE) algorithm \cite{poupart_analytic_2006} builds upon the \textsc{Perseus} algorithm from \cite{spaan_perseus_2005}, originally developed for solving continuous POMDPs.
The key innovation of BEETLE is to reformulate the BAMDP as a POMDP that operates over the space of states and transition functions. 
This approach leverages the strengths of POMDP techniques to address the complexities inherent in BAMDPs. 

Due to the high complexity associated with model-based BRL, most approaches have been limited to very small state spaces, typically comprising only $10$ to $20$ states. 
As the number of states increases, a large amount of data is required to accurately learn a robust model. 
In addition, planning methods become difficult to implement in larger state spaces, as they require consideration of the full range of possible posterior distributions, which significantly complicates the decision-making process.

Online tree search methods have become a promising solution for solving BAMDPs by evaluating potential future action sequences from the current state, building a decision tree to guide actions. Unlike offline methods, online approaches focus on finding effective policies for the current belief state, making them more feasible for complex problems. A notable example is Bayesian Sparse Sampling, introduced in \cite{wang_bayesian_2005}, which non-uniformly explores the tree of future states and actions, reducing computational complexity by concentrating on the most promising paths. Another well-known approach is MCTS algorithms \cite{browne_survey_2012} (see Section \ref{sub:MCP}), which balance exploration and exploitation through random sampling.

Building on MCTS, \cite{guez_efficient_2012} developed Bayesian Adaptive Monte Carlo Planning (BAMCP) to extend tree search for BAMDPs, incorporating Bayesian updates to inform action exploration. Despite improvements over simpler methods, BAMCP remains computationally demanding, especially in large state and action spaces, and lacks function approximation, limiting its use in very large or continuous state spaces.

In BRL, significant progress has been made with algorithms designed to handle continuous state spaces and complex environments. BayesianExplore (BE) \cite{killingberg_bayesian_2023} is a standout algorithm, leveraging posterior sampling of value functions for efficient exploration. Another key advancement is Deep Bayesian Reinforcement Learning (Deep BRL),  which has been particularly effective in dealing with continuous state spaces. Methods such as BRL-Deep Sparse Sampling (DSS) from \cite{grover_sample_2020} combine deep learning with sparse sampling to approximate value functions in high-dimensional continuous spaces where other methods often falter.
%
\section{Controlled PDMP as member of the MDP family} 
\label{sec:cpdmp}
%
In this section, we demonstrate how impulse control problems for PDMPs can be reformulated within the MDP framework. We also outline how this shift in perspective opens up new avenues for addressing problems that have so far remained unresolved within the PDMP framework.

For example, as discussed in Section~\ref{ss:impulse}, formulating a rigorous approach to impulse control problems for PDMPs involving hidden jumps and noisy observations presents significant challenges, particularly in defining admissible strategies (see, for instance, \cite{almudevar_dynamic_2001} or \cite[Section 1.1]{costa_continuous_2013}). The MDP framework offers a robust structure for overcoming these challenges and for properly formalizing such problems.

Here, we show how a fully observed impulse control problem for PDMP can be formulated as an MDP (Section~\ref{subsec:PDMP-MDP}), and then turn to the partially observed version (Section~\ref{subsec:POPDMP-POMDP}). 
%
\subsection{Controlled PDMP as MDP}
\label{subsec:PDMP-MDP}
%
We start by encompassing the notion of impulse for PDMP in the notion of action for MDPs in Section~\ref{subsec:strategies}, then formally state a sub class of impulse control problems for PDMPs as MDPs in Section \ref{sssec:PDMDP}, and finish in Section \ref{ss:ex_PDMP_MDP} with a new version of our running medical example fitting this extended framework.
%
\subsubsection{Impulse control strategies and MDP policies} \label{subsec:strategies}
%
As detailed in Sections~\ref{ss:impulse} and \ref{subsec:genericMDP}, the notions of \emph{impulses} for PDMP and \emph{actions} for MDP slightly differ.
On the one hand, in impulse control, the decision-maker typically has two tasks
\begin{enumerate}
\item choose the next impulse date,
\item at an impulse date, select the point from which to restart the process.
\end{enumerate}
On the other hand, in the MDP framework, the decision-maker takes decisions at each stage, and decisions act on the dynamics of the process by changing the probability of observing new states through the controlled transition kernel $P$.

To unify both frameworks, we choose to restrict the choice for the impulse strategies of the PDMP decision-maker. Other less restrictive approaches involving more general and complex state spaces are possible, see \textit{e.g.} \cite{dufour_optimal_2016} and will be briefly discussed at the end of this section. Our main aim here is on the one hand to keep the exposition as simple as possible, and on the other hand to use finite action spaces MDPs in view of their numerical resolution. We also only consider a finite horizon in the sequel, for the same reasons.

The first restriction concerns the control set. In the impulse control framework for PDMPs, the decision-maker chooses the new starting point of the process after an impulse, which means they may choose to change either the mode or the Euclidean variable of the process or both. In the sequel, we only allow the decision-maker to change the mode of the process. 
In this view, we introduce a \emph{mode-augmented} PDMP framework in which we consider two types of modes, one corresponding to the decision-maker-chosen modes, the other to the modes the decision-maker cannot act on. Formally, we decompose the discrete state space $\mathsf{M}$ of a PDMP into a finite product set $L\times M$, where modes in $L$  can be decision-maker chosen while modes in $M$ cannot. 
%
\begin{defin}[Mode-augmented controlled PDMP]
    \label{def:ta-pdmp}
    A mode-augmented controlled Piecewise Deterministic Markov Process is a PDMP defined by the tuple $\langle E, \Phi, \lambda, Q\rangle$, where
    \begin{itemize}
        \item The state space $E$ has the specific form 
\begin{align*}
E = \{(\ell,m,\mathsf{x}), \ell\in L, m\in M, \mathsf{x} \in E_m^\ell\},
\end{align*}
for some finite product mode set $\mathsf{M}=L\times M$, and where $E_m^\ell=E_{\mathsf{m}=(\ell,m)}$ is some Borel subset of $\mathbb{R}^{d_\mathsf{m}}$.
        \item The flow $\Phi$ just satisfies the conditions from definition \ref{def:pdmp}, and we write
\begin{align*}
\Phi(x,t)=(\ell,m,\Phi_m^\ell(\mathsf{x})),
\end{align*}
    for all $x=(\ell,m,\mathsf{x})\in E$.
        \item The jump intensity just satisfies the conditions from definition \ref{def:pdmp} and we write
\begin{align*}
\lambda_\mathsf{m}(x)=\lambda_m^\ell(\mathsf{x}),
\end{align*}
    for all $x=(\mathsf{m}=(\ell,m),\mathsf{x})\in E$.
        \item The jump kernel $Q$ satisfies the conditions from definition \ref{def:pdmp} with the additional constraint that $Q$ cannot change the value of $\ell$, as $\ell$ is intended to be chosen by the decision-maker. We also set
\begin{align*}
Q(\cdot|x)=Q_{m}^{\ell}(\cdot|\mathsf{x}),
\end{align*}
for all $x=(\ell,m,\mathsf{x})\in E$.
    \end{itemize}
\end{defin}
%
From this rigorously defined mode-augmented PDMP, we can further drop the regime $\ell\in L$ from the state ${x}\in E$ and include it in the action instead to better fit the standard MDP notation.
Denote 
\begin{align*}
E_{{m}}=\bigcup_{\ell\in L}E_{{m}}^\ell,\quad 
E_M=\{({{m}},\mathsf{x}), {{m}}\in M, {\mathsf{x}} \in E_{{m}}\}. 
\end{align*}
We also adapt notations for the kernels $P_{t_{n+1}-t_n}$ associated with the skeleton chain defined in Section~\ref{sssec:chains} as follows
\begin{align}\label{def:Pskeleton_mdp}
P^\ell_{t_{n+1}-t_n}(B | m,\mathsf{x})
= P_{t_{n+1}-t_n}(\{\ell\}\times B | \ell,m,\mathsf{x})
= \mathbb{P}(X_{t_{n+1}}\in \{\ell\}\times B | X_{t_n}=(\ell,m,\mathsf{x})),
\end{align}
for any Borel set $B\subset E_M$ and $x=(\ell,m,\mathsf{x})\in E$. 

The second restriction concerns the impulse dates.
In the MDP framework, decision dates are often deterministic and predefined. In contrast, within the impulse control framework for PDMPs, determining the timing of the next impulse is typically part of the decision-making process and is allowed to depend on the current state of the process. This introduces a key distinction: for a fixed real-time duration $\mathsf{H}$ of the impulse control task, different strategies may result in varying numbers of \textit{impulse dates}, meaning different numbers of decisions.

To reconcile this difference, and to keep finite action spaces, we impose a maximum number $H$ of impulses, for example by restricting impulse dates to a finite time grid. For simplicity, we assume that decision dates are limited to a subset $\mathbb{T}$ of a uniform time grid ${0, \delta, \dots, H\delta}$ for some $\delta > 0$. This assumption can be relaxed if needed. 
Keeping track of both the control horizon $\mathsf{H}$ and the maximum number of decisions $H$ induces additional technicalities. We thus introduce an artificial decision, denoted $\cd$, which is triggered when the real-time horizon $\mathsf{H}$ is reached. This decision will be applied repeatedly as many times as needed to align with the decision horizon $H$, transitioning the process to an absorbing cemetery state $\Delta$. 
This approach necessitates the introduction of constraints on the set of actions, as decision-makers are not permitted to control the process beyond the time horizon. To formalize this, we incorporate constraints into the decision framework using constraint sets $K$ as mentioned in Section \ref{ss:extension_mdp}. 

In summary, an action $a$ for the MDP corresponding to the controlled PDMP will consist in both a decision-maker-chosen mode $\ell\in L$ and a time until the next decision $r\in \mathbb{T}$, to encompass both aspects of impulses for PDMPs. 
As mentioned above, other formalisms closer to the original impulse problem for PDMPs are possible. They involve more complex MDPs, with possibly continuous action spaces. In our approach, the strongest restriction is that the delay between consecutive impulses belongs to a fixed grid $\mathbb{T}$, and thus is not allowed to depend on the current state of the process, whereas in the original PDMP framework, decision dates are stopping-times, which allows much more flexibility. This flexibility may be retrieved by defining larger action and constraints spaces. The MDP framework can also increase the flexibility of the impulse control problem for PDMPs as MDP actions are allowed to have a random effect, whereas a PDMP impulse fixes the new starting point of the process.
%
\subsubsection{Formalism of the equivalent MDP}
\label{sssec:PDMDP}
%
We can now formally consider impulse control for a PDMP in an MDP framework. In order to avoid confusion between the continuous time index $t\in \mathbb{R}_+$ of the PDMP and the discrete time index $t\in\mathbb{N}$ of the MDP stages, we rename the latter with indices $n\in\mathbb{N}$.
%
\begin{defin}[Controlled PDMP as an MDP]
    \label{def:pdmp-to-mdp}
Let $\langle E, \Phi, \lambda, Q\rangle$ be a mode-augmented controlled PDMP. The corresponding MDP is defined by
 the MDP $\langle {S}, A, H, K, P, c,C \rangle$ with the following characteristics.
\begin{itemize}
    \item The state space is 
\begin{align*}    
    {S} = E_M\times\{0,\delta,\ldots ,(H-1)\delta\} \cup \{\Delta\},
\end{align*}
    where $\Delta$ is a cemetery state, and a time variable $t_n\in \{0,\delta,\ldots ,H\delta\}$ is added to keep track of the continuous-time horizon.
    \item The horizon $H$ is finite and equals $\mathsf{H}/\delta$ (which is supposed to be an integer).
    \item The action space is 
\begin{align*}    
    A = (L\times\mathbb T)\cup \{\cd\},
    \end{align*}
where $\cd$ is an artificial action to deal with the continuous-time horizon of the PDMP.
    \item The constraints set $ K\subset {S} \times  A$, is such that its sections $ K(s)=\{a\in  A; (s,a)\in K\}$ satisfy 
\begin{align*}    
    K(\Delta)&=\{\cd\},\\
    K(s=(x,t_n))&=L\times(\mathbb{T} \cap \{\delta,\dots,H\delta-t_n\}).
\end{align*}
     This ensures that, in the cemetery state, the only action allowed is $\cd$, and that no decision can send the process beyond the time horizon $\mathsf{H}=H\delta$. We assume that $\delta\in\mathbb{T}$ so that the constraints sets are never empty.
    \item The transition kernel $P$ is the transition kernel $P_{t_{n+1}-t_n}$ associated with the skeleton chain  of the mode-augmented PDMP, where the grid $(t_0,\dots,t_H)$ corresponds to the decision dates, combined with the deterministic increase of time. More precisely, if $s=(m_{t_n},\mathsf{x}_{t_n},t_n)$, and $a=(\ell, r)$, then $t_{n+1}=t_n+r$ and for any Borel subset $B$ of $E_M$, one has
\begin{align*}
    P(s'=(x'\in B,t_{n+1}) |s=(x,t_n),a=(\ell,r)) = P_{r}^\ell(B|x)\1_{t_{n+1}=t_n+r},
\end{align*}
where $P^\ell$ is defined in Eq. (\ref{def:Pskeleton_mdp}).
Additionally, the artificial action $\cd$ sends the process to the cemetery state
\begin{align*}
    P(\{\Delta\}|s,\cd) = 1,
\end{align*}
for all $s\in E$.
    \item The cost function $c$ should in principle be defined by
\begin{align*}
c(s,a,s') = c(s=(x,t_n),a=(\ell,r)) =\mathbb{E} \left[\int_0^r c_R(X_t)dt \big|\  X_0=(\ell,x) \right] + \tilde{c}_I(s,a),
\end{align*}
where $\tilde{c}_I(s,a)=c_I(X_{t_n}=(\ell_{t_n},x),X_{t_n^+}=(\ell,x))$ if $a=(\ell,r)$, $s=(x,t_n)$, and the decision is taken to move $X_{t_n}=(\ell_{t_n},x)$ to $X_{t_n^+}=(\ell,x)$. In practice, such a cost function is generally intractable, and alternative cost functions close to $c$, depending only on the values of $s,a,s'$ are considered instead. We provide an example in Section \ref{ss:ex_PDMP_MDP}. The terminal cost $C$ is equal to $c_T$.
\end{itemize}
\end{defin}
%
One of the main advantages of the MDP framework, is that it is now very easy to rigorously define decision rules $\pi_n$ as measurable functions from $\mathcal{S}$ onto $A$ such that $\pi_n(s)\in K_n(s)$ for all $s$ and $n$, and to define admissible policies, as exposed in Section \ref{subsubsec:policydefinition}. 
The controlled trajectory of the MDP following policy $\pi =\left(\pi_n\right)_{0\leq n < H}\in \Pi$ is defined recursively in Algorithm \ref{algo:controledPDMPtoMDP}. 
%
\begin{algorithm}[tp]
    \caption{Simulation of a trajectory of a controlled PDMP seen as an MDP controlled by policy $\pi=\left(\pi_t\right)_{0\leq t < H}$ starting from state $s_0$ up to the horizon $\mathsf{H}=\delta H$}
\label{algo:controledPDMPtoMDP}
\begin{algorithmic}[1]
    \STATE $S_0 \leftarrow s_0=(x_0,0)$
    \FOR{$n$ from $0$ to $H-1$}
        \STATE $A_n=(\ell_n,r_n) = \pi_n(S_n)$         
        \STATE $S_{n+1} \sim P(\cdot|S_n=(X_n,T_n),A_n
        )=(P^{\ell_n}_{r_n}(\cdot|X_n), T_n+r_n)$ \label{algo:PDMPtoMDPstep}
    \ENDFOR
\end{algorithmic}
\end{algorithm}
%
Line \ref{algo:PDMPtoMDPstep} corresponds to Algorithm \ref{algo:PDMP-SSA} for the mode-augmented PDMP starting from state $x=(\ell_n,X_n)$ at time $0$ and up to time horizon $r_n$. 
Note that the cemetery state $\Delta$ ensures that all trajectories have the same length $H$, even if they do not have the same number of actual decisions ($a\neq \cd$).

Then the total expected cost of policy $\pi\in\Pi$ starting at $s_0\in S$ is
\begin{equation*}
 V_H({\pi},s_0)=\mathbb{E}^{\pi}_{s_0}\left[\sum_{n=0}^{H-1} c(S_n,A_n,S_{n+1})+C(S_H)\right],
\end{equation*}
defined in Section~\ref{subsec:genericMDP} is equivalent to $\mathcal{V}_H(\mathcal{S},x)$ defined in Section~\ref{ss:impulse},
 and the impulse control problem \emph{corresponds} to the optimization problem
\begin{equation*}
{V}(s_0)=\inf_{\pi\in \Pi}V_H(\pi,s_0).
\end{equation*}
To obtain a formal equality $\mathcal{V}(x_0) = V(s_0=(x_0,0))$ is possible but requires a perfect match between the cost functions and the sets of admissible policies. In particular, admissible policies for the controlled PDMP must satisfy the same restrictions as those introduced in Section \ref{subsec:strategies}. This will not be further discussed here as we chose the MDP framework specifically to avoid delving into the formal definition of admissible policies or cost functions for PDMPs.
%
\subsubsection{Running medical example as a controlled PDMP formalized as an MDP}
\label{ss:ex_PDMP_MDP}
%
We now consider a more interesting version of our running medical example as a controlled mode-augmented PDMP and then specify the corresponding MDP.\\

\textbf{Controlled PDMP} 
We consider again a model for the dynamics of the marker of a patient as a PDMP as described in Section~\ref{subsubsec:medicPDsMP}, with very different decision from those proposed in Section~\ref{ex:impulse_control_pdmp}. 
We now consider that impulses correspond to the choice of the next visit date to the medical center and of the treatment to be applied until this next visit. 
To do so, we consider a new mode and time augmented PDMP. Its state space is defined from the modes sets $L=\{0,1\}$ and $M=\{0,1,2,3\}$ and the state spaces
\begin{equation*}
E_0^\ell= \{\zeta_0\}\times [0,\mathsf{H}]^2,\ 
E_1^\ell= E_2^\ell=[\zeta_0, D)\times [0,\mathsf{H}],\ 
E_3^\ell=\{D\}\times [0,\mathsf{H}],
\end{equation*}
for all $\ell\in L$. 
In this setting,
\begin{itemize}
    \item $\ell \in L$ corresponds to the treatment applied, where $\ell=0$ indicates no treatment, and $\ell=1$ represents the standard treatment.
    \item  $m\in M$ still corresponds to the overall condition of the patient ($m=0$: sound, $m=1$: main disease, $m=2$: therapeutic escape, $m=3$: death),
    \item $\zeta\in [\zeta_0,D]$ still corresponds to the marker value, $\zeta_0$ being the nominal value for a sound patient,  and we further assume that if the marker reaches level $D$, the patient dies, 
    \item $u\in  [0,\mathsf{H}]$ (in mode $m=0$) corresponds to the time since the last change of overall condition $m$.
\end{itemize}
Note that we deal with both a mode-augmented and time-augmented (in modes $\mathsf{m}=(0,0)$ and $\mathsf{m}=(1,0)$) PDMP.

Decision dates correspond to visits to the medical center, during which the marker is measured, and a new treatment is selected and applied until the next visit. 
At each visit, the practitioner must decide on two aspects: the interval until the next visit, $r \in \mathbb{T}=\{15, 30, 60\}$ days ($\delta=15$), and the treatment to be administered $\ell \in L$. Thus, at each decision point, the practitioner selects a decision $a = (\ell,r)$. 
The patient's follow-up period is denoted by $\mathsf{H}$, marking the real-time horizon of the decision process. 
The local characteristics of the PDMP are defined as follow, depending on the treatment $\ell$.

$\bullet$ If no treatment ($\ell=0$) is applied, in the disease conditions ($m=1$ or $m=2$), the marker level grows exponentially and can reach the death level $D$ (boundary jump to $m=3$) in finite time. The death level is absorbing.
In the therapeutic escape condition $m=2$,  no other change of state is possible.
In the main disease condition $m=1$, a random jump to therapeutic escape is also possible.
In the remission condition $m=0$, the marker level remains at its nominal value $\zeta_0$ and the patient may randomly jump to any of the two disease conditions $m\in\{1,2\}$. Formally, the characteristics of the PDMP for $\ell=0$ are thus
\begin{equation*} 
\left\{
    \begin{aligned}
        \Phi^0_{0}(\zeta_0,t) & =\zeta=\zeta_0, \\
        \Phi^0_{1}(\zeta,t) & =\zeta e^{v_1^0t}, \\
        \Phi^0_{2}(\zeta,t) & =\zeta e^{v_2^0t},  \\  
        \Phi^0_{3}(\zeta,t) & =\zeta=D, 
    \end{aligned}
\right. \hspace{.2cm}
\left\{
    \begin{aligned}
        \lambda^0_{0}(\zeta,u) & =f_{\to1}^0(u)+f_{\to2}^0(u), \\
        \lambda^0_{1}(\zeta)& =f_{\to2}^0(\zeta), \\
        \lambda^0_{2}(\zeta) & =0, \\ 
        \lambda^0_{3}(\zeta) & =0,
    \end{aligned}
\right.\hspace{.2cm}
\left\{
    \begin{aligned}
    Q^0_{0}(B|\zeta_0,u)&=\frac{f_{\to m^\prime}^0(u)}{f_{\to1}^0(u)+f_{\to2}^0(u)}\1_{B}(m^\prime,\zeta_0), \\
    Q^0_{1}(B|\zeta)&=\1_{\{\zeta<D\}}\1_{B}(2,\zeta)+\1_{\zeta=D}\1_{B}(3,D),  \\
    Q^0_{2}(B|D)&=\1_{B}(3,D),
    \end{aligned}
\right. 
\end{equation*}
for any Borel subset $B$ of $E_M$, where $f_{\to1}^0$ and $f_{\to2}^0$ are intensity functions corresponding respectively to a jump from mode $0$ to mode $1$ and $2$. For instance, they can be chosen as Weibull distributions as in Section \ref{subsubsec:medicPDsMP}.

$\bullet$ Under treatment ($\ell=1$), in remission ($m=0$) the marker level remains at its nominal value and the patient may only randomly jump to the therapeutic escape condition $m=2$.
In the disease condition $m=1$, the marker level decreases exponentially and can reach the nominal level $\zeta_0$ ($m=0$) in finite time or randomly jump to the therapeutic escape condition $m=2$.
In the therapeutic escape condition $m=2$, the marker level grows exponentially and can reach the death threshold $D$ ($m=3$) in finite time, no other change of state is possible. Again, the death state is absorbing. Formally, the characteristics of the PDMP for $\ell=1$ are thus
\begin{equation*} 
\left\{
    \begin{aligned}
        \Phi^1_{0}(\zeta_0,t)&=\zeta=\zeta_0,\\
        \Phi^1_{1}(\zeta,t)&=\zeta e^{-v_1^1t}, \\
        \Phi^1_{2}(\zeta,t)&=\zeta e^{v_2^1t}, \\   
        \Phi^1_{3}(\zeta,t)&=\zeta =D, \\  
    \end{aligned}
\right. \hspace{.2cm}
\left\{
    \begin{aligned}
        \lambda^1_{0}&=f_{\to2}^1(u), \\
        \lambda^1_{1}&=f_{\to2}^1(\zeta), \\
        \lambda^1_{2}&=0,\\    
        \lambda^1_{3}&=0,\\
    \end{aligned}
\right.
\hspace{.2cm}
\left\{
    \begin{aligned}
        Q^1_{0}(B| \zeta_0,u)&=\1_{B}(2,\zeta_0), \\
        Q^1_{1}(B|\zeta=\zeta_0)&=\1_{B}(0,\zeta_0,0), \\
        Q^1_{1}(B|\zeta\neq\zeta_0)&=\1_{B}(2,\zeta), \\
        Q^1_{2}(B|D)&=\1_{B}(3,D),   
    \end{aligned}
\right. 
\end{equation*}
for any Borel subset $B$ of $E_M$, where again $f_{\to2}^1$ is the intensity function corresponding to a jump from mode $0$ to mode $2$. Figure~\ref{fig:egimpulsecontrolledPDMP_mdp_full} illustrates a trajectory from this PDMP with a given strategy including treatment and next decisions dates.\\

\begin{figure}[tp]
\centering
\includegraphics[width=.6\textwidth]{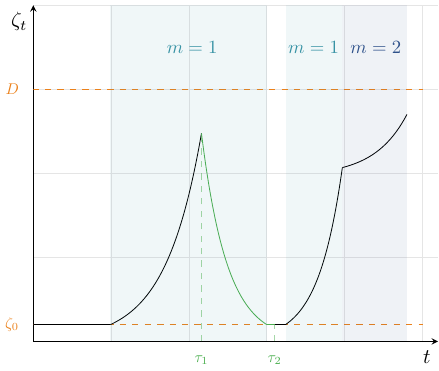}
\caption{{\bf Medical example as a mode-augmented PDMP.} Example of a trajectory from a mode-augmented PDMP, with a strategy including only visits every 15 days. Background colors indicate patient condition mode, while colors of the trajectory indicate treatment modes (black: $\ell=0$, green: $\ell=1$). At time $t=0$ the process starts with action $a=(0,15)$; at time $\tau_1$ the decision-maker starts treatment, \textit{i.e.} $a=(1,15)$, at time $\tau_2$ the decision-maker stops the treatment, \textit{i.e.} $a=(0,15)$.}
\label{fig:egimpulsecontrolledPDMP_mdp_full}
\end{figure}

\textbf{Formulation as an MDP}
We assume that visits take place at discrete dates $t_n$, where the time lapse between two visits can be $15$, $30$ or $60$ days. 
At most $H = \frac{\mathsf{H}}{15}$ visits can occur. 
Moreover, decision-related constraints appear. 
The last visit must take place at the end $H$ of the follow-up (We assume $\mathsf{H}$ is a multiple of $15$). 
The associated MDP is $\langle S, A, H, K, P, c,C \rangle$. Its characteristics are defined as follows.
\begin{itemize}
    \item The state space $S$ corresponds to the state of a patient $x \in E_M$, a counter of time since the beginning $t$ and the cemetery state $\Delta$: $S = E_M\times\{0,\delta,\ldots ,(H-1)\delta\} \cup \{\Delta\}$,
    \item The action space $A$ is expressed either as a decision pair $a = (\ell,r)$, where $r \in \mathbb{T}=\{15,30,60\}$ is the delay until the next visit and $\ell \in L=\{0,1\}$ is the treatment choice or as the artificial action $a = \cd$ : $A=L\times\mathbb{T}\cup\{\cd\}$.
    \item The horizon $H$ is equal to $\mathsf{H}/15$  decision dates.
    \item The constraints $K(s) \subset A$ is the space of admissible decisions in state $s$, 
\begin{equation*}
\begin{split}
\mathbb{K}(s) = \left\{
\begin{aligned}
    & \cd && \text{if } s=(m,\mathsf{x},t) \text{ and } m = 3 \text{ or } s=\Delta,\\
    & L\times(\mathbb{T} \cap \{\delta,\dots,H\delta-t\}) && \text{ otherwise. }
\end{aligned}
\right.
\end{split}
\end{equation*}
    \item The transition function $P$ can be written as a combination of the PDMP flow, jump intensity and Markov kernel. Detailed analytic formulas are omitted here, but notice that the MDP transition function is set according to the PDMP parameters. 
    \item The cost function $c$ as introduced in Section~\ref{sssec:PDMDP} is approximated by 
    $$c(s,a,s')= (\zeta-\zeta')\times r + C_V +C_\ell,$$ 
    where $a=(r,\ell)$, $\zeta$ is the second coordinate of $s$ and $\zeta'$ that of $s'$ (if both $s$ and $s'$ differ from $\Delta$), \textit{i.e.} the running cost is approximated by the histogram estimator of the area under the curve, and the impulse cost corresponds to a visit cost $C_V$ and a cost of treatment $C_\ell$. The terminal cost is $C_D>0$ if the patient has died, $0$ otherwise.
\end{itemize}
Figure~\ref{fig:egimpulsecontrolledPDMP_mdp} illustrates a trajectory of the PDMP controlled as an MDP, with perfect observations on visit dates.
%
\begin{figure}[tp]
\centering
\includegraphics[width=.6\textwidth]{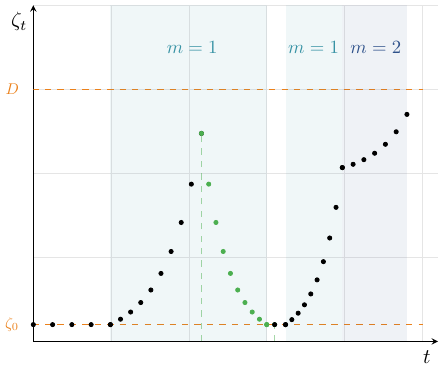}
\caption{{\bf Trajectory of a PDMP controlled as an MDP.} Dots indicate the perfect observations obtained only at visit dates. Here again, background colors indicate the patient condition mode, while colors of the dots indicate the treatment mode. In this example, visits are made every 15 days (\textit{i.e.} $r=15$) and the treatment changes twice.}
\label{fig:egimpulsecontrolledPDMP_mdp}
\end{figure}
%
\subsection{Partially observed controlled PDMP as POMDP}\label{subsec:POPDMP-POMDP}
%
While properly defining an impulse control problem for a  hidden PDMP is very challenging, the framework of POMDPs allows to deal with partially hidden MDPs. Transforming an impulse control problem for partially observed PDMPs into a POMDP is done in a similar manner as in the perfect observation case, and again greatly simplifies the problem statement.

Here we assume that observations are only available at the decision dates $t_n$, and are given by $Y_n=F(X_{t_n}) + \epsilon_n$ with $(\epsilon_n)$
real-valued independent and identically distributed random variables with density $f$ independent from the PDMP and the actions, and $F$ a real-valued link function (to simplify the exposition. Both the noise and link function could be multi-dimensional). Only partial information about the process $X$ is obtained through $Y$, and we hence consider $X_{t_n}$ as hidden. We also assume that the time since the beginning of the follow-up is observed through a variable $w_n=t_n$. The objective is now to control the PDMP through policies that may only rely on the observations available up to each decision date. 

We first state this problem as a POMDP in Section \ref{sec:POPDMPdef}, then extend our running medical example to this more realistic context in Section \ref{ssec:medic-pdmp-pomdp}, and finally  discuss the computation of \textit{belief states} originating from PDMPs in Section \ref{sec:fullMDP}.
%
\subsubsection{Formulation as a POMDP}
\label{sec:POPDMPdef}
%
We keep the restrictions on the impulses from Section \ref{subsec:strategies}, and now give a POMDP formulation of the partially observed impulse control problem for a PDMP.
%
\begin{defin}[Partially observed controlled PDMP as a POMDP]
    \label{def:pdmp-to-pomdp}
Let $\langle E, \Phi, \lambda, Q\rangle$ be a mode-augmented controlled PDMP under partial observations. The corresponding POMDP is defined by
 $\langle S, A, H, P, \Omega,  O, K, c,C, b_0 \rangle$ with the following characteristics.
\begin{itemize}
\item The state space $S$, the action space $A$, the transition kernel $P$, the cost functions $c$ and $C$ and the horizon $H$ are the same as those of Definition \ref{def:pdmp-to-mdp}.
\item The observation space is $\Omega=(\mathbb{R} \times[0,\mathsf{H}] )\cup\{\Delta\}$. We will denote $\omega=(y,t) \in \Omega$.
\item The observation function $O$ corresponds to the observation function of the PDMP, which may or may not depend on the decision $a$. Here, it has density 
\begin{align*}
    o(s'=(x',t'),a,\omega=(y,t)) = f(y-F(x')) \1_{t=t'},
\end{align*}
with respect to the Lebesgue measure on $\mathbb{R}$.
\item The constraints set $ K\subset \Omega \times  A$, satisfy the same conditions as in Definition \ref {def:pdmp-to-mdp} and may only depend on the observations.
\item The initial belief $b_0$ is the filter at time $t_0$ associated to the partially observed PDMP, and will be discussed more thoroughly in Section \ref{sec:fullMDP}.
\end{itemize}
\end{defin}
%
Here again, the impulse control for the partially observed PDMP is equivalent to its POMDP counterpart, where both the state space $S$ and the observation space $\Omega$ are continuous. It is then straightforward to consider the associated belief MDP, which can take a particular meaning in the PDMP context, as explained in Section \ref{sec:fullMDP}.
%
\subsubsection{Running medical example as a partially observed controlled PDMP formalized as a POMDP}
\label{ssec:medic-pdmp-pomdp}
%
We consider the same example as in Section~\ref{ss:ex_PDMP_MDP}, but now assume that the patient condition $m$ and time since the last change of condition $u$ (in mode $m=0$) are hidden, while the marker is only observed through noise at each visit date, \textit{i.e} at each decision date $t_n$ we observe $Y_n=F(X_{t_n}) + \epsilon_n = F(\zeta_{t_n}) + \epsilon_n$ where the $(\epsilon_n)_n$ are independent Gaussian random variables independent from the PDMP and decisions.
However, we assume that the  death (mode $m=3$) is perfectly observed. 

In this context, the partially observed  impulse control problem for our PDMP can be formulated as a POMDP with the following characteristics
\begin{itemize}
    \item the observation space is $\Omega=\mathbb{R}\times[0,H]\cup \Delta$,
    \item the observation function does not depend on decision $a$, and has density $o(s'=(x',t'),a,\omega=(y,t))=f(y-F(x'))\1_{t=t'}$ with repect to the Lebesgue measure on $\mathbb{R}$,
    \item the constraints set is the same as in Example~\ref{ss:ex_PDMP_MDP} as constraints did not depend on the hidden part of the process,
    \item we further assume that the follow-up of the patient starts when they reach their first remission phase, \textit{i.e.} $b_0=\delta_{(0,\zeta_0,0)}$.
\end{itemize}
Figure~\ref{fig:egimpulsecontrolledPDMP_pomdp} illustrates an observed trajectory (through noise) of the original PDMP controlled via optimization of the equivalent POMDP value function.
%
\begin{figure}[tp]
\centering
\includegraphics[width=.6\textwidth]{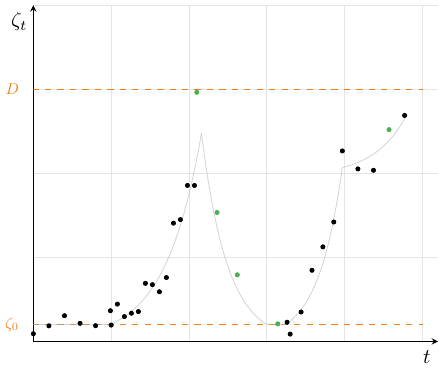}
\caption{{\bf Example of observed trajectory in a POMDP framework.} Patient condition modes are now fully hidden, dots indicate (noisy) marker measurements and colors of the dots indicate treatment choices. In this example, the POMPD is controlled by optimization of the value function, and visit dates are allowed to change at each new observation. }
\label{fig:egimpulsecontrolledPDMP_pomdp}
\end{figure}
%
\subsubsection{Belief states for PDMP}
\label{sec:fullMDP}
%
While in the practice of POMDP the belief is often updated through simulations as in the context of \textit{particle filtering} \cite{del_moral_nonlinear_1997,silver_monte-carlo_2010}, the PDMP context offers a particularly convenient setting for the update of the filters thanks to the Markov property of skeleton chains. Indeed, in the framework of PDMPs, it is common to introduce the filter of the process, and use it for the belief state in the context of POMDPs. General considerations can  be found \textit{e.g.} in \cite{de_saporta_predictive_2013,bauerle_markov_2011}. In  \cite{cleynen_change-point_2018}, the authors derive the computations in the specific case of the medical framework example (with slight changes in the model). Here, we briefly expose the steps in the  partially observed controlled PDMP framework. 

For $0\leq n\leq H-1$, set 
$\mathcal{F}_n^O=\sigma(Y_k,0\leq k\leq n)$  
the $\sigma$-field generated by the observations up to~$n$ corresponding to time $t_n$. Let 
\begin{align*}
\Theta_n(B)&=\mathbb{P}(X_{t_n}\in B | \mathcal{F}_n^O),
\end{align*}
for all Borel set $B\subset E_M$, denote the filter or belief process for the hidden part of the process. The standard prediction-correction approach yields a recursive construction for the filter. 
 For any $0\leq n\leq H-1$,
conditionally on $Y_{n+1}=y'$, $a=(\ell,r)\in L\times\mathbb T$ and $\Theta_{n}=\theta$, one has $\Theta_{n+1}=\Psi(\theta,y',d)$ with
\begin{align*}
  \Psi(\theta,y',d)(B) 
  &= \frac{\int_{E_M} \int_{E_M}f(y'-F(x'))\1_B(x')P_r^\ell(dx'|x)\theta(dx)}{\int_{E_M} \int_{E_M}  f(y'-F(x'))P_r^\ell(dx'|x)\theta(dx)}, 
\end{align*}
for any Borel subset $B$ of $E_{M}$. 
Conditional on being able to compute the integrals, this provides a theoretical framework for belief updates that does not rely on the quality of simulators. In practice, computing this filter is not easy and may involve approximations through discretizations.
%
\section{Concluding remarks and open questions}
\label{ssec:open}
%
We conclude this survey by stating some open questions regarding the control of PDMPs that we believe can be tackled through the MDP formulation as well as interesting new questions arising in MDP from this PDMP perspective. As shown in the previous section, the formalism of MDPs makes it possible to address very challenging problems for controlled PDMPs. We briefly present some of these challenges in Section\ref{ssec:openPDMP}. Conversely, deriving MDPs from PDMPs opens new questions for continuous-state space MDPs and their practical resolution. We briefly present some of these challenges in Section\ref{ssec:openMDP}. 
%
\subsection{Open questions in controlled PDMPs that might be tackled using MDPs}
\label{ssec:openPDMP}
%
The very well established field of MDPs, together with its wide range of available resolution algorithms, represents a unique opportunity to tackle open questions that prevail in the framework of impulse control for PDMPs. A few examples have already been described above, and we provide here a short and non-exhaustive list of problems that we believe might be formalized and solved in an MDP context. 

As mentioned in Sections~\ref{ssec:impulse} and~\ref{subsec:strategies}, the very task of formalizing the definition of an impulse controlled PDMP in the partially observed framework is rather technical (even in the fully observed case). Embedding it in the MDP framework provides a very natural environment for this formalization, since by construction strategies are admissible policies, and value functions are natural elements of MDP resolutions. 

The MDP framework provides a wide range of algorithms for the efficient resolution of controlled PDMPs. As mentioned in Section~\ref{subsec:PDMP-MDP} and ~\ref{subsec:POPDMP-POMDP}, a (partially observed) controlled PDMP can be cast as an MDP with a continuous state-space. From there, several possibilities can be considered: discretizing the initial process to reduce it to a finite state space, for which exact resolution through dynamic programming is possible, or using approximate resolution algorithms designed for MDPs with finite but very large state spaces, or continuous state spaces. Those two strategies have been developed for variants of the medical example, see for instance \cite{cleynen_change-point_2018} for an exact resolution of the discretized process in the framework of optimal stopping, \cite{cleynen_numerical_2021} for an exact resolution of the discretized process in the general impulse control framework, and \cite{de_saporta_monte-carlo_2024} for an approximate resolution based on the adaptation of POMCP for PDMPs. 

Noting that (partially observed) PDMPs are both easy to simulate and can be cast as POMDPs, one can take advantage of the huge amount of resources (both in the literature and software) available in the field of (deep)-Reinforcement Learning (see, \textit{e.g.} \cite{mnih_playing_2013, sutton_reinforcement_2018}). The field of RL is mature and there exist several RL application programming interfaces (\textit{e.g.} Gymnasium \cite{brockman_openai_2016}), connecting POMDP simulation models, to libraries of (deep) RL algorithms (\textit{e.g.} RLlib \cite{liang_rllib_2018}). 
The embedding of PDMPs in the MDP paradigm hence allows to address the scalability issues of standard resolution strategies for PDMPs, by providing efficient resolution algorithms, including exact and approximate approaches, that provide excellent performances.

One may also want to consider impulse control for PDMPs where the impulsion does not affect the dynamics of the process (for instance deciding how to allocate an observation budget to learn about a process without affecting trajectories). Here again, while the PDMP framework allows to model continuous-time trajectories with few parameters that would all be interpretable, the MDP framework should provide the tools necessary to the formalism and resolution of the control problem of continuous-time trajectories that are only observed at discrete times.

Finally, the MDP framework also provides the opportunity to solve impulse control problems where the PDMP model is only partially known, which is different from state partial observability. PDMPs are typically described by three parameters: the flow $\Phi$, the jump intensity $\lambda$, and the jump kernel $Q$. 
Solving MDP problems with ill-known parameters is known as {\em controlling while learning}, since we have to balance optimizing the MDP with a currently \emph{wrong} model with \emph{improving our knowledge about the model}.
We have shown in Section~\ref{sub:bamdp} how Bayes-Adaptive MDP allow to solve MDP for which the transition kernel is unknown or partially known, by modeling the MDP with ill-known parameters as a new MDP with additional state variables (the hyperparameters). BAMDP are a \emph{control while learning} approach particularly adapted to solve impulse control problems where the jump kernel $Q$ is only (partially) known. One can expect that an adaptation of BAMDPs could allow the consideration of other impulse control problems with partially known models, for example, those in which certain parameters of the flow or jump intensity are unknown. 
Once we have built the BAMDP model, any MDP-solving method (including RL methods can be applied to solve the augmented problem). 
%
\subsection{Interesting questions in MDPs arising from converted PDMPs}
\label{ssec:openMDP}
%
In the MDP/RL community, there is extensive literature on solving problems with finite or continuous time and state spaces (see Section~\ref{sec:algo}). Recent advancements in deep RL have also led to practical solutions for these problems. However, solving MDPs with continuous state spaces or in continuous time remains a significant challenge, and practical algorithms often fail to perform well in these scenarios.

This difficulty is particularly obvious when the transition function, $P(\cdot|s,a)$,  is a continuous probability distribution. Monte Carlo algorithms, for instance, generally struggle in such settings because they cannot effectively explore the resulting tree, which has infinitely many successor nodes at each level. Similarly, Reinforcement Learning approaches require approximations of the $Q$-values. While deep RL techniques model $Q$-values using neural networks and are applicable in these cases, they demand immense computational resources and, when relying on real-world data rather than simulations, require vast quantities of data. 

Using MDPs derived from piecewise-deterministic Markov processes (PDMPs) offers a way to address complex, realistic problems with models that are both easily parameterized and straightforward to simulate. Additionally, the min-expectation tree resulting from applying Monte Carlo Tree Search (MCTS) to PDMPs maintains a finite, often small branching factor (see \textit{e.g.} \cite{de_saporta_monte-carlo_2024}). 

In other words, controlled PDMPs serve as an intriguing benchmark for evaluating MDP solution algorithms. Their complexity lies between that of finite MDPs, which are relatively easy to solve, and MDPs with continuous time, state, or decision spaces, where existing solution methods typically lack quality guarantees. 
Thus, we believe this paper makes a meaningful contribution to the MDP community by addressing these intermediate cases and providing insights into solving such challenging problems.
\bibliographystyle{plain}
\bibliography{references.bib}{}
\end{document}